\documentclass[a4paper,numberwithinsect, cleveref, autoref, thm-restate]{lipics-v2021}
\nolinenumbers
\hideLIPIcs
\usepackage{subcaption}
\usepackage[colorinlistoftodos]{todonotes}
\usepackage{tikz}
\usetikzlibrary{calc,positioning,fit,backgrounds,decorations.pathreplacing,arrows.meta,shapes.geometric,shapes.misc,intersections,shadows}
\usepackage{complexity}
\usepackage{mathtools}
\usepackage{pdflscape}

\newif\ifshortversion
\providecommand{\shortversionbuild}{false}
\expandafter\ifx\csname shortversion\shortversionbuild\endcsname\relax
  \shortversiontrue
\else
  \csname shortversion\shortversionbuild\endcsname
\fi
\long\def\shortonly#1{%
  \ifshortversion
    #1%
  \fi
}
\long\def\longonly#1{%
  \ifshortversion
  \else
    #1%
  \fi
}
\newcommand{\versionparagraph}[1]{%
  \ifshortversion
    \subparagraph*{#1}%
  \else
    \paragraph*{#1}%
  \fi
}

\newcommand{\yes}{\emph{yes}}

\newcommand{\dotcup}{\mathbin{\dot{\cup}}}

\DeclareMathOperator{\sgn}{sgn}
\DeclareMathOperator{\Pf}{Pf}

\DeclareMathOperator{\vc}{vc}

\DeclareMathOperator{\nd}{nd}
\DeclareMathOperator{\tc}{tc}
\DeclareMathOperator{\ml}{ml}
\DeclareMathOperator{\vi}{vi}

\DeclareMathOperator{\cvd}{cvd}

\newcommand{\prob}[1]{\textnormal{\textsc{#1}}}

\newcommand{\F}{\mathbb{F}}

\newcommand{\problemfield}[2]{%
    \par\noindent\hangindent=7em\hangafter=1
    \makebox[7em][l]{\normalsize\textbf{#1}}#2%
}
\newcommand{\boxproblem}[4]{
    \par\begin{center}
        \fbox{\parbox{.97\textwidth}{
            \normalsize\textsc{#1}\par\smallskip
            \problemfield{Input:}{\normalsize#2}
            \problemfield{Question:}{\normalsize#3}
            \problemfield{Parameter:}{\normalsize#4}
        }}
    \end{center}\par
}
\newcommand{\boxproblemnoparameter}[3]{
    \par\begin{center}
        \fbox{\parbox{.97\textwidth}{
            \normalsize\textsc{#1}\par\smallskip
            \problemfield{Input:}{\normalsize#2}
            \problemfield{Question:}{\normalsize#3}
        }}
    \end{center}\par
}

\author{Tomohiro Koana}{Graduate School of Information Science and Technology, The University of Tokyo, Japan}{tomohiro.koana@gmail.com}{https://orcid.org/0000-0002-8684-0611}{Supported in part by JST CREST Grant Number JPMJCR24Q2 and JST ERATO Grant Number JPMJER2301.}
\author{Soh Kumabe}{CyberAgent, Japan}{kumabe_soh@cyberagent.co.jp}{https://orcid.org/0000-0002-1021-8922}{}
\author{Yota Otachi}{Nagoya University, Japan}{otachi@nagoya-u.jp}{https://orcid.org/0000-0002-0087-853X}{Partially supported by JSPS KAKENHI Grant Numbers JP22H00513, JP24H00697, JP25K03076, JP25K03077.}
\authorrunning{T. Koana, S. Kumabe, and Y. Otachi}

\title{Complexity of induced subgraph isomorphism and maximum common induced subgraph parameterized by cluster vertex deletion number}
\titlerunning{Complexity of ISI and MCIS parameterized by CVD}

\Copyright{Tomohiro Koana, Soh Kumabe, and Yota Otachi}

\ccsdesc[500]{Mathematics of computing~Graph algorithms}
\ccsdesc[500]{Theory of computation~Parameterized complexity and exact algorithms}

\keywords{Parameterized Complexity, Structural Graph Parameters, Induced Subgraph Isomorphism, Maximum Common Induced Subgraph}

\EventEditors{John Q. Open and Joan R. Access}
\EventNoEds{2}
\EventLongTitle{42nd Conference on Very Important Topics (CVIT 2016)}
\EventShortTitle{CVIT 2016}
\EventAcronym{CVIT}
\EventYear{2016}
\EventDate{December 24--27, 2016}
\EventLocation{Little Whinging, United Kingdom}
\EventLogo{}
\SeriesVolume{42}
\ArticleNo{23}
\begin{document}
\maketitle

\begin{abstract}
We study the parameterized complexity of \textsc{Induced Subgraph Isomorphism (ISI)} and \textsc{Maximum Common Induced Subgraph (MCIS)} with respect to the cluster vertex deletion number $k$. For ISI, we give a randomized $O^*(k^{O(k)})$-time algorithm, showing that ISI is fixed-parameter tractable under this parameter and resolving an open question of Hanaka et al.~[WALCOM 2026]. Our algorithm is optimal under the Exponential Time Hypothesis (ETH), and is based on a reduction to \textsc{Exact Multicolored Matching} solvable via algebraic techniques. For MCIS, we present a randomized $O^*(2^{O(k^2)})$-time algorithm via a reduction to a weighted variant of \textsc{Exact Multicolored Matching}, and we prove a matching ETH-based lower bound by showing that a $k$-by-$k$ binary matrix feasibility problem with list-constrained rows and columns admits no $O^*(2^{o(k^2)})$-time algorithm, which may be of independent interest. These results reveal that, in this setting, MCIS is strictly harder than ISI. Finally, for the three-graph variant 3-MCIS, we show that it becomes NP-hard already when each input graph has cluster vertex deletion number~2.
\end{abstract}

\section{Introduction}

Graph pattern matching captures a basic algorithmic task: detect a prescribed pattern in a host graph. A closely related task is to compare two graphs by the amount of structure they can share. In this paper we focus on the induced setting, where a match must respect both adjacency and non-adjacency.
% This yields two canonical formulations: an induced containment question, and a maximum common induced substructure question whose optimum value is often used (possibly after normalization) as a graph similarity score~\cite{BunkeS98,SutersAZSSL05,YuWLLC25}.
These induced pattern-matching formulations are widely regarded as fundamental problems in graph matching~\cite{Bunke00,ConteFSV04} and have been investigated since the 1970s~\cite{BarrowB76,Levi73,Ullmann76}. They are motivated, for instance, by chemoinformatics, where matching induced substructures models chemical similarity and reactions~\cite{McGregor82,RaymondWillett02}. Beyond this, they arise in applications such as graph databases~\cite{DBLP:conf/sigmod/YanYH05,DBLP:conf/europar/RoweG21}, malware detection~\cite{DBLP:journals/compsec/ParkRS13}, and bioinformatics including RNA structural homology searching~\cite{DBLP:journals/bmcbi/HuangLJ06,DBLP:journals/tcbb/SongLHMXC06}.

Formally, we study the following problems in this paper.
\boxproblemnoparameter{\textsc{Induced Subgraph Isomorphism (ISI)}}{Two graphs $G_1$ (host) and $G_2$ (pattern).}{Does $G_1$ contain an induced subgraph isomorphic to $G_2$?}

\boxproblemnoparameter{\textsc{Maximum Common Induced Subgraph (MCIS)}}{Two graphs $G_1$ and $G_2$ and an integer $t \in \mathbb{N}$.}{Do there exist induced subgraphs $H_1 \subseteq G_1$ and $H_2 \subseteq G_2$ on at least $t$ vertices such that $H_1$ and $H_2$ are isomorphic?}

\textsc{ISI} is the special case of \textsc{MCIS} with threshold $t = |V(G_2)|$.
Both \textsc{ISI} and \textsc{MCIS} are computationally difficult: they subsume many \NP-hard graph problems such as \textsc{Independent Set}, \textsc{Clique}, and \textsc{Induced Matching}.
Since \textsc{Independent Set} and \textsc{Clique} are \W[1]-hard when parameterized by solution size, \textsc{ISI} is \W[1]-hard when parameterized by the pattern size $|V(G_2)|$, and \textsc{MCIS} is \W[1]-hard when parameterized by $\min \{ |V(G_1)|, |V(G_2)| \}$.
Given this hardness for ``natural'' parameters, researchers have turned to fixed-parameter algorithms parameterized by structural parameters of the input graphs.
We follow the same direction in the present paper.
Specifically, we parameterize by the sum of the two input graphs' structural parameters (for \textsc{ISI}, this is equivalent to parameterizing by the host graph's structural parameter $p$ when $p$ is induced-subgraph-monotone, i.e., $p(G') \le p(G)$ for all induced subgraphs $G'$ of $G$).
This parameterization by both input graphs is necessary.
For \textsc{ISI}, the pattern graph alone does not provide a meaningful parameter: the problem generalizes \textsc{Independent Set} and \textsc{Clique}, so it is para-\NP-hard for most standard structural parameters of the pattern graph.
Similarly, for \textsc{MCIS}, parameterizing only one of the two input graphs is not sufficient.

\versionparagraph{Prior work on structural parameterization.}
\textsc{ISI} and \textsc{MCIS} are already \NP-hard on disjoint unions of paths, via a simple reduction from \textsc{3-Partition}~\cite{Damaschke90,DBLP:books/fm/GareyJ79}.
Consequently, they are para-\NP-hard for parameters such as bandwidth, distance to path forest, and feedback edge set number, as observed by Hanaka et al.~\cite{HanakaOOV25}.
Moreover, \textsc{ISI} is \NP-hard on cographs~\cite{Damaschke90} and hence para-\NP-hard for modular-width.
Even on trees of treedepth $3$, Bodlaender et al.~\cite{DBLP:journals/algorithmica/BodlaenderHKKOO20} proved that \textsc{ISI} (and hence \textsc{MCIS}) is \NP-hard.\footnote{Although this hardness proof is for (non-induced) \textsc{Subgraph Isomorphism}, almost the same construction implies \NP-hardness for \textsc{ISI}; see~\cite{GimaHKKO22,HanakaOOV25}.}

On the positive side, Abu-Khzam~\cite{DBLP:journals/ipl/Abu-Khzam14} showed that \textsc{MCIS} is FPT for
$k := \vc(G_1)+\vc(G_2)$, with running time $O^*(k^{O(k)})$, where $\vc(\cdot)$ denotes the vertex cover number.\footnote{We use the $O^*$ notation to suppress the polynomial factors.}
Incidentally, this is tight under Exponential Time Hypothesis (ETH): Abu-Khzam et al.~\cite{DBLP:journals/tcs/Abu-KhzamBS17} proved that \textsc{ISI} admits no
$O^*(k^{o(k)})$-time algorithm unless ETH fails.
Gima et al.~\cite{GimaHKKO22} proved that \textsc{MCIS} (and hence \textsc{ISI}) is FPT for
$k := \vi(G_1)+\vi(G_2)$, where $\vi(\cdot)$ denotes the vertex integrity.
Hanaka et al.~\cite{HanakaOOV25} established that \textsc{MCIS} is FPT for
$k := \ml(G_1)+\ml(G_2)$, $k := \tc(G_1)+\tc(G_2)$, and $k := \nd(G_1)+\nd(G_2)$, where $\ml(\cdot)$, $\tc(\cdot)$, and $\nd(\cdot)$
denote the max leaf number, twin cover number, and neighborhood diversity, respectively.
Each of these parameters is unbounded even on disjoint unions of paths.

In this paper, we focus on the cluster vertex deletion number.
The parameterized complexity of \textsc{ISI} and \textsc{MCIS} with respect to this parameter was still open (see \Cref{fig:hierarchy}). To the best of our knowledge, among the standard parameters, this is the only remaining unresolved case.
Hanaka et al.~\cite{HanakaOOV25} gave an \XP{} algorithm and an FPT-approximation algorithm for \textsc{MCIS} parameterized by $\cvd$, and left open whether \textsc{MCIS} admits an exact FPT algorithm for this parameter.

\begin{figure}[t]
    \centering
    \begin{tikzpicture}[
        xscale=1.4,
        line cap=round,
        line join=round,
        edge/.style={draw=black!80, line width=0.9pt},
        rednode/.style={
            draw=red!70!black,
            fill=red!7,
            rounded corners=3pt,
            thick,
            inner xsep=10pt,
            inner ysep=3pt,
            text=red!70!black,
            minimum height=18pt,
            minimum width=50pt,
            drop shadow={shadow xshift=0.7pt, shadow yshift=-0.7pt, opacity=0.20}
        },
        greennode/.style={
            draw=green!50!black,
            fill=green!10,
            rounded corners=3pt,
            thick,
            inner xsep=10pt,
            inner ysep=3pt,
            text=green!50!black,
            minimum height=18pt,
            minimum width=50pt,
            drop shadow={shadow xshift=0.7pt, shadow yshift=-0.7pt, opacity=0.20}
        },
        greennode*/.style={
        greennode,
        ultra thick,
        draw=green!60!black,
        fill=green!20
        },
    ]
    % ---- nodes (approximate coordinates; tweak to taste) ----
    \node[rednode] (cw)  at ( -2.0, 2.4) {cw};
    \node[rednode] (tw)  at ( 0.0, 1.9) {tw};
    \node[rednode] (pw)  at ( 0.0, 0.9) {pw};

    \node[rednode] (mw)  at (-4.0, 1.4) {mw \cite{Damaschke90}};
    \node[rednode] (sd)  at (-2.0, 1.4) {sd};
    \node[greennode*]   (cvd) at (-2.0, 0.4) {cvd$^{\star}$};

    \node[rednode] (td)  at ( 0.0, -0.1) {td \cite{GimaHKKO22}};
    
    \node[greennode] (ml)  at ( 2.0, -0.1) {ml \cite{HanakaOOV25}};

    \node[greennode] (nd) at (-4.0,-0.6) {nd \cite{HanakaOOV25}};
    \node[greennode] (tc) at (-2.0,-0.6) {tc \cite{HanakaOOV25}};
    \node[greennode]  (vi) at ( 0.0,-1.1) {vi \cite{GimaHKKO22}};
    \node[greennode]  (vc) at ( -2.0,-1.6) {vc \cite{DBLP:journals/ipl/Abu-Khzam14}};

    % ---- edges (drawn in background so they sit behind nodes) ----
    \begin{pgfonlayer}{background}
    \draw[edge] (cw) -- (tw);
    \draw[edge] (cw) -- (mw);
    \draw[edge] (cw) -- (sd);

    \draw[edge] (mw) -- (nd);
    \draw[edge] (mw) -- (tc);

    \draw[edge] (sd) -- (nd);
    \draw[edge] (sd) -- (cvd);
    \draw[edge] (cvd) -- (tc);
    \draw[edge] (sd) -- (td);

    \draw[edge] (tw) -- (pw);
    \draw[edge] (pw) -- (td);
    
    \draw[edge] (pw) -- (ml);

    \draw[edge] (td) -- (vi);
    \draw[edge] (vi) -- (vc);

    \draw[edge] (nd) -- (vc);
    \draw[edge] (tc) -- (vc);

    \end{pgfonlayer}

    \end{tikzpicture}
    \caption{
    Hierarchy of structural parameters for \textsc{ISI}/\textsc{MCIS}. Green nodes indicate \FPT{} results, red nodes indicate para-\NP-hardness. A parameter placed higher is smaller (more restrictive). The starred node $\cvd$ denotes our new result. Abbreviations: cw = clique-width, ml = max leaf number, pw = path-width, tw = tree-width, mw = modular-width, sd = shrub-depth, cvd = cluster vertex deletion number, td = treedepth, nd = neighborhood diversity, tc = twin cover number, vi = vertex integrity, vc = vertex cover number.
    } \label{fig:hierarchy}
\end{figure}

\subsection{Our contributions}

We resolve the open problem of Hanaka et al.~\cite{HanakaOOV25} by giving randomized FPT algorithms for \textsc{ISI} and \textsc{MCIS} parameterized by
$k := \cvd(G_1)+\cvd(G_2)$.\footnote{All randomized algorithms in this paper are Monte Carlo with one-sided error: they always output no on no-instances, but may output no on some yes-instances (false negatives).}
This completes the picture for the standard structural parameterizations of \textsc{ISI} and \textsc{MCIS} summarized in \Cref{fig:hierarchy}.

For \textsc{ISI}, we obtain the following.
\begin{restatable}{theorem}{isifpt}\label{thm:isi-fpt}
    \prob{ISI} can be solved in randomized $O^*(k^{O(k)})$ time, where $k = \cvd(G_1) + \cvd(G_2)$.
\end{restatable}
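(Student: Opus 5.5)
Let $S_1\subseteq V(G_1)$ and $S_2\subseteq V(G_2)$ be cluster vertex deletion sets with $|S_1|+|S_2|\le k$. These can be computed in $O^*(1.81^k)$ time, or replaced by a constant-factor approximation at a cost of $O(k)$ in the parameter. The plan is to guess how the pattern's special structure sits inside the host, using $k^{O(k)}$ guesses. After that, the remaining task is to embed the cliques of $G_2-S_2$ into the cliques of $G_1-S_1$ compatibly with the attachments to the deletion sets. The abstract indicates that this residual problem should be cast as \textsc{Exact Multicolored Matching} and solved with algebraic (Pfaffian/determinant-based, Schwartz--Zippel) techniques in randomized polynomial or $2^{O(k)}$ time.

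\textbf{Step 1 (guessing).} Fix a hypothetical induced embedding $\phi\colon V(G_2)\to V(G_1)$. For each $s\in S_2$, guess whether $\phi(s)\in S_1$; if so, guess which vertex of $S_1$ it is, which gives $|S_1|^{|S_2|}$ choices. Otherwise $\phi(s)$ lies in some clique of $G_1-S_1$, and we guess a partition of these vertices into groups sharing the same host clique. Symmetrically, we guess which vertices of $S_1$ are hit by images of vertices of $G_2-S_2$ (at most $2^{k}$ choices). All of this amounts to $k^{O(k)}$ guesses. In each guess the non-$S$ part of $G_2$ is a disjoint union of cliques $C$, and each $C$ has a \emph{type}, namely its neighbourhood pattern in $S_2$. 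Every host clique $D$ of $G_1-S_1$ has vertices of various types relative to $S_1$. Note that a pattern clique may map into a host clique and also use vertices of $S_1$. Since only $\le k$ pattern vertices map into $S_1$, only $O(k)$ pattern cliques behave exceptionally, and those cliques can be guessed explicitly at $k^{O(k)}$ cost. Once these guesses are fixed, each remaining pattern clique maps entirely inside one host clique, and each host clique receives at most one pattern clique, because distinct pattern cliques are non-adjacent while a host clique is complete. The exception is host cliques that also receive the images of $S_2$-vertices; there are $O(k)$ of these and we guess them as well.

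\textbf{Step 2 (reduction to matching).} We are left with a bipartite assignment problem. Pattern cliques $C$ must be matched injectively to host cliques $D$ such that $D$ contains, for each vertex class relative to the guessed images of $S_2$ (there are at most $2^{k}$ neighbourhood classes), enough vertices: $C$ needs $n_{C,\tau}$ vertices of class $\tau$ and $D$ has $m_{D,\tau}$. Whether $C$ fits into $D$ is a simple per-class counting check. Because we need an induced subgraph rather than a spanning one, unmatched host cliques are simply discarded. The difficulty is that $\phi(s)$ for $s\in S_2$ mapped outside $S_1$ is only fixed up to its group, and adjacencies between $S_1$-images and clique vertices must match exactly. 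I would handle this by guessing the $S_2$-adjacency type of each host vertex class. Concretely, edges between $S_2$ and $C$ translate into constraints that $\phi(C)$ avoid certain host vertices, and this is still a per-pair feasibility check.

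\textbf{Step 3 (the obstacle).} The naive bipartite matching fails because the type constraints couple globally. Cliques of the same pattern type are interchangeable, but the embedding must use, for the cliques whose neighbourhood involves the $S_2$-vertices placed in host cliques, a prescribed number of matches of each ``color''. This is where \textsc{Exact Multicolored Matching} enters. We color each feasible edge $(C,D)$ by the $O(k)$-bounded information that must be counted, and ask for a perfect matching of the pattern side using exactly $c_i$ edges of color $i$. Evaluating the Tutte/Edmonds matrix with a variable $y_i$ marking color $i$ and extracting the coefficient $\prod y_i^{c_i}$ by polynomial interpolation with random field substitutions (Schwartz--Zippel) decides this in randomized time polynomial in $n$ and $\prod (c_i+1)$. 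I expect the main obstacle to be bounding the number of colors and their exact targets by $O(k)$ so that the total cost stays $k^{O(k)}$. This requires arguing that all but $O(k)$ of the counting constraints can be absorbed into ordinary feasibility, since only cliques interacting with the guessed $S$-images carry nontrivial color. Combining the $k^{O(k)}$ guesses with the per-guess randomized algorithm yields the claimed bound.
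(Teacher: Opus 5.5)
There is a genuine gap, and it is exactly the point you flag as the ``main obstacle''. The cost accounting in Step~1 is not FPT. The pattern cliques whose images use vertices of $S_1$, and the host cliques that receive images of $S_2$-vertices, are selections among up to $n$ cliques, so guessing them explicitly costs $n^{O(k)}$, not $k^{O(k)}$. You also cannot guess them up to type. Pattern cliques with the same $S_2$-neighborhood pattern still differ in size and in per-class counts, so the number of types is unbounded. Moreover, a host vertex adjacent to a used $S_1$-vertex $w$ may only be occupied by the pattern clique whose image contains $w$. This is not a per-pair check, because it couples the choice of host clique with the choice of modulator clique. Step~3 leaves open precisely the claim that would rescue all this, namely that the colors and their targets can be bounded by $O(k)$. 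As written, the proposal gives at best an XP algorithm.

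The paper closes this gap with two ideas you are missing.

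First, you never need to handle pattern-modulator vertices landing inside host cliques. Let $\phi$ be an induced embedding and $X_1$ a cluster vertex deletion set of $G_1$. Then $\phi^{-1}(\phi(V(G_2))\cap X_1)$ is a cluster vertex deletion set of $G_2$, so some \emph{minimal} cluster vertex deletion set $X_2$ of $G_2$ satisfies $\phi(X_2)\subseteq X_1$. A graph has at most $3^k$ minimal cluster vertex deletion sets of size at most $k$, and they can be enumerated by $P_3$-branching. So you enumerate $X_2$, guess $\alpha=\phi|_{X_2}$ in $k^{O(k)}$ ways, and guess $A=\phi(V(G_2)\setminus X_2)\cap X_1$ in $2^k$ ways.

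Second, the pattern cliques meeting $A$ are not guessed; the matching chooses them.
\begin{itemize}
\item Delete every host vertex whose $A$-neighborhood is neither empty nor exactly one clique of $G_1[A]$. Such a vertex would create an induced $P_3$ in the image of $G_2-X_2$.
\item Make each clique $S$ of $G_1[A]$ a color.
\item An edge $(C_1,C_2)$ of color $S$ means that $C_2$ embeds into $S$ together with the vertices of $C_1$ whose $A$-neighborhood is exactly $S$.
\item An edge $(S,C_2)$ of color $S$ means that $C_2$ embeds into $S$ alone.
\item A color-$0$ edge $(C_1,C_2)$ means that $C_2$ embeds into the part of $C_1$ with no neighbor in $A$.
\end{itemize}
Each of these feasibility tests is a signature-counting check. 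Requiring every nonzero color exactly once gives an \textsc{Exact Multicolored Matching} instance with at most $k$ colors, all with target $1$. This is solvable in randomized $O^*(2^k)$ time per guess. Your Tutte-matrix interpolation would also suffice here, since $\prod_i(c_i+1)=2^k$.
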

Since $\vc(G)\ge \cvd(G)$ for every graph $G$, \Cref{thm:isi-fpt} strengthens Abu-Khzam's earlier FPT algorithm parameterized by vertex cover number~\cite{DBLP:journals/ipl/Abu-Khzam14} (albeit with randomization) while maintaining the same asymptotic dependence $k^{O(k)}$ on the parameter.
Moreover, this dependence is essentially optimal already for the vertex cover parameterization: Abu-Khzam et al.~\cite{DBLP:journals/tcs/Abu-KhzamBS17} showed that \textsc{ISI} admits no $O^*(k^{o(k)})$-time algorithm for $k=\vc(G_1)+\vc(G_2)$ unless ETH fails.
Therefore, we cannot expect to improve the running time bound of \Cref{thm:isi-fpt}.

We also obtain a randomized FPT algorithm for \textsc{MCIS}.
\begin{restatable}{theorem}{mcisfpt}\label{thm:mcis-fpt}
    \prob{MCIS} can be solved in randomized $O^*(2^{O(k^2)})$ time, where $k = \cvd(G_1) + \cvd(G_2)$.
\end{restatable}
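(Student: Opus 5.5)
We plan to follow the same template as for \Cref{thm:isi-fpt}, replacing the reduction to \textsc{Exact Multicolored Matching} with a reduction to a weighted variant. First, we compute cluster deletion sets $S_1$ and $S_2$ of $G_1$ and $G_2$ with $|S_1|+|S_2|\le k$ in $O^*(1.9^k)$ time. Any common induced subgraph is given by an isomorphism $\phi\colon V(H_1)\to V(H_2)$.

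We guess the behaviour of $\phi$ on the deletion sets. We guess which vertices of $S_1$ are in $H_1$ and where $\phi$ sends them, and which vertices of $S_2$ lie in $H_2$ and which of these are images of vertices outside $S_1$. The map from $S_1\cap V(H_1)$ into $V(G_2)$ can have images in the clusters of $G_2-S_2$. So we only guess its ``type'': whether the image lies in $S_2$ (and which vertex), or lies in some cluster with a prescribed neighbourhood pattern toward the guessed part of $S_2$. Doing the same symmetrically gives $k^{O(k)}$ guesses.

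The crucial extra ingredient for MCIS is the following. After removing $S_1\cup S_2$, each cluster $C$ of $G_1-S_1$ is partitioned by the neighbourhood into $S_1$ into at most $2^{|S_1|}$ twin classes. A part of $H_1$ inside $C$ is matched to a clique in $G_2$. Outside the at most $k$ special vertices, a clique of $H_1$ that avoids the deletion-set images must map into a single cluster of $G_2-S_2$. Such a pair of clusters $(C,D)$ contributes a common clique. Its size is determined by how many vertices from each twin class are compatible, and compatibility is determined by the guessed correspondence between $S_1$- and $S_2$-neighbourhoods. Hence each pair $(C,D)$ gets a weight $w(C,D)$, computable by a small bipartite matching or flow, and we seek a matching between clusters of maximum total weight. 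The constraint is that the clusters containing the images or preimages of deletion-set vertices are consistent. The clusters that receive some $S$-vertex form $O(k)$ ``colours'' that must be matched exactly. Clusters are classified into at most $2^{O(k)}$ types by which twin classes they contain. Because distinct cluster pairs can be matched in ways that interact with the $S$-adjacency constraints up to $2^{k}\times2^k$ types, this is where the $2^{O(k^2)}$ arises. We guess, for each of the $O(k)$ special clusters, its type, and must also respect the correspondence between twin-class types. So this becomes a weighted \textsc{Exact Multicolored Matching} on a bipartite graph whose colours encode type pairs. The number of colour classes that must be exactly hit is bounded by $O(k^2)$.

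We then solve weighted \textsc{Exact Multicolored Matching} by the algebraic approach. We build a Tutte/Edmonds-type matrix whose entries are $x_e\, y^{w(e)}\prod z_{c}$, with each colour marked by a variable, and extract the coefficient of a multilinear monomial in the colour variables with degree $t'$ in $y$. The extraction uses inclusion–exclusion over the colour set (giving $2^{O(k^2)}$ if there are $O(k^2)$ colours) and polynomial interpolation in $y$, with Schwartz–Zippel supplying one-sided error. The main obstacle is bounding the number of distinct colours or constraints by $O(k^2)$. We must show that only $O(k^2)$ ``interaction events'' between the two deletion sets need to be tracked exactly, for example adjacency of pairs $(s_1,s_2)$ through a matched cluster pair. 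Everything else should then be absorbed into the weights. We would first try defining colours as pairs in $S_1\times S_2$ and verifying that the weight of a cluster pair, given which pairs it realizes, is computable in polynomial time.
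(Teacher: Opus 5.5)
\textbf{There is a genuine gap.} Your template is the right one: guess the interface on the modulators, then solve a weighted exact multicolored matching via Pfaffian, interpolation, inclusion--exclusion and Schwartz--Zippel. But the proposal stops exactly at the step that carries the proof, and it looks for the $k^2$ in the wrong place.

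Write the solution as $H$ with embeddings $\phi_i\colon V(H)\to V(G_i)$, and set $I:=\phi_1^{-1}(S_1)\cap\phi_2^{-1}(S_2)$ and $J_1:=\phi_1^{-1}(S_1)\setminus I$, $J_2:=\phi_2^{-1}(S_2)\setminus I$. The interaction specific to MCIS is between $J_1$ and $J_2$. For $u\in J_1$ and $v\in J_2$, whether $uv\in E(H)$ is read off in $G_1$ from the vertex of a $G_1$-cluster chosen for $v$. In $G_2$ it is read off from the vertex of a $G_2$-cluster chosen for $u$. These are in general two different matching edges, and the constraint is that two bits produced by different edges agree. That is neither a per-edge weight nor an ``each colour exactly once'' condition. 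So your suggested colours on $S_1\times S_2$, or on type pairs (of which there are $2^{O(k)}$), do not capture it. You leave the bound on the colours as ``the main obstacle''. As written, no concrete matching instance is defined and neither direction of correctness is argued.

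\textbf{The missing idea is to remove this coupling by guessing the bipartite graph $H[J_1,J_2]$ outright}, in $2^{|J_1||J_2|}\le 2^{k^2}$ ways; this is where the $2^{O(k^2)}$ comes from. Your first-step guess of a ``neighbourhood pattern toward the guessed part of $S_2$'' for each modulator vertex landing in a cluster already amounts to this. It costs $2^{O(k^2)}$, not $k^{O(k)}$.

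Once $H[I\cup J_1\cup J_2]$ is fixed, every remaining check is local to a single cluster pair and reduces to a signature count, so $O(k)$ colours suffice. What you must still supply is the structure that defines the edges:
\begin{itemize}
\item With $K:=V(H)\setminus(I\cup J_1\cup J_2)$, both $H[K\cup J_1]$ and $H[K\cup J_2]$ are cluster graphs. So each clique of $H[K]$ is joined to at most one clique of $H[J_1]$ and at most one of $H[J_2]$. This gives the classes $\mathcal{D}_0,\dots,\mathcal{D}_3$ and $\mathcal{T}_{i,0},\mathcal{T}_{i,i},\mathcal{T}_{i,3}$.
\item Guess, in $k^{O(k)}$ ways, which clique of $G_i[\phi_i(J_i)]$ plays which role and how they are indexed.
\item Use one colour per clique in $\mathcal{D}_1\cup\mathcal{D}_2\cup\mathcal{D}_3\cup\mathcal{T}_{1,0}\cup\mathcal{T}_{2,0}$, and a reusable colour $0$ for $\mathcal{D}_0$.
\item Add extra vertices for the cliques of $\mathcal{T}_{i,0}$ that land entirely in a cluster on the other side.
\item Set each edge weight to the largest $K$-clique placeable in that cluster pair.
\end{itemize}
This instance is solved in $O^*(2^{O(k)}W)$ time. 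In particular, your sketch never treats the $\mathcal{D}_3$ case, where one $K$-clique shares its $G_1$-cluster with a $J_2$-clique and its $G_2$-cluster with a $J_1$-clique. It also never gives the reverse direction: reconstructing $H$ and both induced embeddings from a matching.
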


Our algorithms for \Cref{thm:isi-fpt,thm:mcis-fpt} follow the same strategy.
We first guess some structural information about a solution, such as a partial correspondence between vertices in the cluster vertex deletion sets.
For both \textsc{ISI} and \textsc{MCIS}, this guessing step is the bottleneck.
It invokes $O^*(k^{O(k)})$ branches for \textsc{ISI}, and  $O^*(2^{O(k^2)})$ branches for \textsc{MCIS}.

Once the guess is fixed, the remaining choices can be encoded as a weighted matching problem in an edge-colored multigraph, where all edge weights are bounded by a polynomial in the input size.
This leads to the following auxiliary problem, which we believe is of independent interest.

\boxproblem{\textsc{Weighted Exact Multicolored Matching}}%
{An undirected multigraph $G=(V,E)$ with $n := |V|$ vertices and $m := |E|$ edges, an edge-coloring $c \colon E \to \{0\}\cup \{1,\dots,k\}$, nonnegative integer weights $w \colon E \to \mathbb{N}$, and a threshold $t \in \mathbb{N}$.}%
{Is there a matching $M \subseteq E$ such that $|\{e \in M : c(e)=i\}|=1$ for each $i \in \{1,\dots,k\}$ and $\sum_{e \in M} w(e) \ge t$?}%
{$k$}

For this problem, we give an algebraic randomized pseudo-polynomial algorithm:

\begin{restatable}{proposition}{wemm} \label{prop:mpm}
    \prob{Weighted Exact Multicolored Matching} can be solved in randomized $O^*(2^k W)$ time and $(n + m)^{O(1)}$ space when each weight is at most $W$.
\end{restatable}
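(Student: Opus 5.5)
We will use the standard algebraic framework for matchings: a Pfaffian of a skew-symmetric Tutte-type matrix whose entries are polynomials. The exact-colors requirement is handled by inclusion--exclusion over subsets of colors, and the weight is tracked by an extra variable of degree at most $kW$.

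\textbf{Setup.} Since the colors $1,\dots,k$ must each appear exactly once and color-$0$ edges are unrestricted, the matching need not be perfect. We first add, for every vertex $v$, a fresh pendant partner $v'$. The edges $vv'$ get color $0$ and weight $0$, and we also add edges $v'u'$ among all the new vertices (color $0$, weight $0$). Then any matching $M$ of $G$ extends to a perfect matching of the new graph $G'$: the vertices unmatched by $M$ use their pendants, and the leftover pendant vertices are paired up, after adding one dummy vertex if needed for parity. Conversely, every perfect matching of $G'$ restricts to a matching of $G$ with the same colored edges and the same weight. So it suffices to decide whether $G'$ has a perfect matching with exactly one edge of each color $1,\dots,k$ and total weight at least $t$.

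\textbf{Polynomial.} Fix a field $\F$ of characteristic $2$ and size $\mathrm{poly}(n+m)$, say $\F=GF(2^\ell)$. For each edge $e$ of $G'$ introduce an indeterminate $x_e$. Parallel edges are kept distinct by summing their contributions into the same matrix entry. Introduce a variable $y$ for the weight and, for each color $i\in\{1,\dots,k\}$, a variable $z$. Define the Tutte matrix $A$ with entry $\sum_{e=uv} x_e\, y^{w(e)}\, z^{[c(e)\ne 0]}$, antisymmetrized (signs are irrelevant in characteristic $2$). Then $\Pf(A)=\sum_{M} \prod_{e\in M} x_e\, y^{w(M)} z^{|M_{\ne0}|}$ over perfect matchings $M$, where $M_{\ne 0}$ denotes the colored edges of $M$. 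Distinct matchings give distinct monomials in the $x$'s, so there is no cancellation. To force one edge of each color, we use inclusion--exclusion: for $S\subseteq\{1,\dots,k\}$, let $A_S$ be $A$ with the colored edges whose color lies outside $S$ deleted. Then
\[ P:=\sum_{S}\, [z^k]\,\Pf(A_S) \]
(in characteristic $2$ the signs $(-1)^{k-|S|}$ vanish) equals the sum over matchings with exactly $k$ colored edges covering all $k$ colors, i.e.\ exactly one edge per color. The coefficient of $y^{w}$ in $P$ is a nonzero polynomial in the $x$'s iff a valid matching of weight $w$ exists.

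\textbf{Evaluation.} Substitute random field elements for the $x_e$. For each $S$, compute $\Pf(A_S)$ as a polynomial in $y$ and $z$, of degree at most $k$ in $z$ and at most $kW$ in $y$ once higher $z$-powers are discarded. We do this by evaluating at $O(k\cdot kW)$ points and interpolating; each evaluation is a determinant or Pfaffian computation in polynomial time and space. This costs $O^*(W)$ per subset. Summing the coefficients over all $2^k$ subsets takes $O^*(2^kW)$ time, and since we process one subset at a time with accumulated coefficients, the space stays polynomial. By Schwartz--Zippel, with degree at most $n$ in the $x$'s and $|\F|\ge 2n\cdot(n+m)$, any nonzero coefficient of $y^w$ survives with probability at least $1/2$. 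We answer yes iff some $w\ge t$ has a nonzero coefficient, which gives one-sided error.

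\textbf{Main obstacle.} The delicate part is the exact-once constraint: making sure that inclusion--exclusion over $S$ combined with extracting $[z^k]$ isolates precisely the matchings using each color once. Matchings with $k$ colored edges but a repeated color miss some color, so they are counted in an even number of subsets $S$ (all $S$ containing their color set) and cancel mod $2$. This holds only when their color set is a proper subset; when the color set is the full set, the matching is counted exactly once. A second point to watch is that the padding in $G'$ must not introduce colored edges, and it does not, since all new edges have color $0$.
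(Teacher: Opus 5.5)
Your approach is the same as the paper's: a Pfaffian with edge variables over a characteristic-$2$ field, padding so that solutions become perfect matchings, a marker variable that fixes the number of nonzero-colored edges to exactly $k$, sign-free inclusion--exclusion over color subsets, interpolation in a weight variable, and Schwartz--Zippel. Your inclusion--exclusion parity argument and the pendant padding are correct. The leftover pendants always number $2|M|$, so the parity dummy is never needed, and adding it would in fact destroy every perfect matching.

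The step that fails as written is the degree bookkeeping for interpolation.

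First, the $y$-degree of $[z^k]\Pf(A_S)$ is not bounded by $kW$. Color-$0$ edges also carry weights up to $W$, and a valid matching may use up to about $n/2$ of them, so the degree can reach about $nW/2$. This is exactly what happens in the \textsc{MCIS} application, where color-$0$ edges carry the sizes of the $\mathcal{D}_0$ cliques; the paper correctly bounds this degree by $(n-k)W$.

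Second, you cannot obtain $[z^k]\Pf(A_S)$ by interpolating through $O(k)$ values of $z$ ``once higher $z$-powers are discarded''. $A_S$ still admits perfect matchings with many colored edges, since repeated colors are allowed there, so $\Pf(A_S)$ has $z$-degree up to about $n/2$. Interpolation needs more points than the full degree. Alternatively, you could compute the Pfaffian division-free over the truncated ring $\F[y,z]/(z^{k+1})$.

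With too few points the recovered coefficients are mixtures of higher-degree ones, and this causes errors in both directions. You get false negatives: for instance, if $t>kW$ your procedure can never answer yes, although valid matchings of weight up to about $nW/2$ may exist. You also get false positives: a perfect matching with more than $k$ colored edges that still covers all $k$ colors is counted once by your inclusion--exclusion, and its $z^{k+1},z^{k+2},\dots$ terms can leak into the extracted $z^k$ coefficient, so the one-sided error is lost.

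The repair is to use $\lfloor n/2\rfloor+1$ points in $z$ and $\lfloor n/2\rfloor W+1$ points in $y$, with $|\F|$ larger than the number of points. This still requires only $2^k\cdot\mathrm{poly}(n)\cdot W$ Pfaffian evaluations, so the $O^*(2^kW)$ bound survives.
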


At a high level, we design a polynomial that algebraically enumerates matchings meeting the color and weight constraints, using Pfaffians together with interpolation and inclusion-exclusion.
We then test the existence of such a matching by polynomial identity testing, which is randomized.
This raises the natural question of whether \prob{Weighted Exact Multicolored Matching} admits a deterministic FPT algorithm.
However, even the unweighted special case is closely related to \textsc{Exact Matching}, for which no direct derandomization route is known.
Although \textsc{Exact Matching} admits a randomized polynomial-time algorithm via the isolation lemma \cite{DBLP:journals/combinatorica/MulmuleyVV87}, derandomizing it is a major open problem; in particular, even the existence of a deterministic FPT algorithm parameterized by $k$ remains unknown, e.g., \cite{DBLP:conf/stacs/Maalouly23,MurakamiY25}.

\begin{restatable}{proposition}{wemmtoem} \label{prop:wemmtoem}
  If (unweighted) \prob{Exact Multicolored Matching} admits a deterministic FPT algorithm parameterized by the number of nonzero colors, then \prob{Exact Matching} admits a deterministic FPT algorithm parameterized by the target number of red edges.
\end{restatable}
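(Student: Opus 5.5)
We prove \Cref{prop:wemmtoem} by a parameter-preserving deterministic reduction from \prob{Exact Matching} to \prob{Exact Multicolored Matching}. Recall the source problem: given a graph $H$ whose edges are colored red or blue, and an integer $r$, decide whether $H$ has a perfect matching with exactly $r$ red edges; the parameter is $r$. Here \prob{Exact Multicolored Matching} is the unweighted version of the problem above: a matching containing exactly one edge of each color $1,\dots,r$, with any number of color-$0$ edges allowed. As stated, this formulation does not force perfectness and does not forbid additional red edges. The reduction must therefore encode both constraints, while keeping the number of nonzero colors a function of $r$ only.

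\emph{Construction.} Build an instance $G$ as follows.
\begin{itemize}
\item Make $r$ copies of each red edge $e=uv$ of $H$, colored $1,\dots,r$, so that $G$ is a multigraph.
\item Keep each blue edge as a color-$0$ edge.
\item To enforce that no further red edges are used, give red edges no color-$0$ copy.
\end{itemize}
A matching with exactly one edge of each color $1,\dots,r$ then uses exactly $r$ red edges of $H$. These are distinct edges, since the chosen edges form a matching and copies of the same edge share endpoints. All remaining edges of the matching are blue.

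\emph{Enforcing perfectness.} This is the main obstacle, because the target problem only asks for the existence of some matching. A first idea is to add a single extra color $r+1$ that certifies perfectness, but perfectness is a global property, so a single edge cannot certify it. The route I would take instead is to reduce perfectness to a property that can be decided deterministically in polynomial time.

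Fix a choice of the $r$ red edges. The remaining question is whether $H$ minus their endpoints has a perfect matching using blue edges only, which is plain perfect-matching existence and hence polynomial. The difficulty is that we cannot enumerate the choices of red edges, since there are $n^{O(r)}$ of them.

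The plan is to use the deterministic FPT algorithm as a black box through self-reducibility. A deterministic decision procedure for \prob{Exact Multicolored Matching} can be turned into one that finds a witness: delete edges one at a time and keep a deletion whenever the answer stays \yes. Iterating over witnesses is not directly possible, however.

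A cleaner approach is a gadget. We want the $r$ colored edges, together with color-$0$ edges, to be forced to form a perfect matching of the whole graph. For this we use a Tutte/Edmonds–Gallai-style decomposition. Compute a maximum blue matching structure, and relate every perfect matching of $H$ to an alternating structure relative to a fixed blue-maximum matching $M_0$. Then $M\triangle M_0$ consists of alternating paths and cycles. Since $M$ is perfect with exactly $r$ red edges, the symmetric difference can be decomposed into at most $r$ components carrying red edges plus blue-only parts.

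I would therefore construct an auxiliary graph in which each color-$i$ edge represents the $i$-th red edge. Blue alternating connections are contracted using Edmonds' blossom/Gallai–Edmonds structure, so that a multicolored matching in the auxiliary graph corresponds exactly to a perfect matching of $H$ with $r$ red edges. Verifying this correspondence, and in particular handling odd blossoms, is where I expect the real work to lie.

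The reduction is deterministic and polynomial-time, and it sets the number of nonzero colors to $O(r)$. Combined with the hypothesized deterministic FPT algorithm, it yields a deterministic FPT algorithm for \prob{Exact Matching} parameterized by $r$, which proves \Cref{prop:wemmtoem}.
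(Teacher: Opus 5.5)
There is a genuine gap, and it comes from misreading the target problem. In the paper, \prob{Exact Multicolored Matching} is the special case of \prob{Weighted Exact Multicolored Matching} with $w\equiv 1$ and threshold $t=n/2$. A solution must therefore have at least $n/2$ edges: it is a \emph{perfect} matching that uses exactly one edge of each nonzero color.

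Because you assumed perfectness is not part of the problem, you try to enforce it with a Gallai--Edmonds/blossom-contraction gadget. That part is only announced. The auxiliary graph is never defined, and you explicitly leave unverified the claimed correspondence between multicolored matchings and perfect matchings of $H$ with exactly $r$ red edges. As written, the argument does not prove the statement. You also should not expect this route to be completable. Under your reading, the target problem is just Rainbow Matching on the nonzero-colored edges (color-$0$ edges can be dropped). That is a special case of $3$-Set Packing, using the sets $\{u,v,c(e)\}$, and is known to be deterministically FPT. A deterministic parameter-preserving reduction of the kind you sketch would then give a deterministic FPT algorithm for \prob{Exact Matching} unconditionally, settling the open problem the proposition is only meant to relate to.

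With the correct definition, your first construction step already completes the proof. Take $r$ parallel copies of each red edge, colored $1,\dots,r$, give red edges no color-$0$ copy, and color blue edges $0$. This is a single instance with $r$ nonzero colors. A perfect matching using each nonzero color exactly once consists of copies of $r$ distinct red edges, by your vertex-disjointness argument, plus blue edges. Conversely, given a perfect matching of $H$ with exactly $r$ red edges, label those red edges bijectively by $1,\dots,r$ and take the corresponding copies.

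This is a different and simpler route than the paper's. The paper colors the red edges by each function $h$ of an $(|R|,k)$-perfect hash family of size $e^{k+O(\log^2 k)}\log|R|$ and runs the assumed algorithm on every resulting instance. Your parallel-edge trick avoids hashing entirely, using only the fact that the problem is defined on multigraphs, which it is.
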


In our \textsc{ISI} and \textsc{MCIS} algorithms, all guessing steps are deterministic, and randomness enters only through this matching subroutine.
Thus, \Cref{prop:wemmtoem} suggests that derandomizing these algorithms via our approach is difficult.

\medskip

Aside from derandomization, it is natural to ask whether the $O^*(2^{O(k^2)})$ running time in \Cref{thm:mcis-fpt} can be improved to $O^*(k^{O(k)})$.
Somewhat surprisingly, we show that the quadratic dependence in the exponent is unavoidable: in fact, \Cref{thm:mcislb} rules out any $O^*(2^{o(k^2)})$-time algorithm under ETH.

\begin{restatable}{theorem}{mcislb}\label{thm:mcislb}
    Assuming ETH, there is no $O^*(2^{o(k^2)})$-time algorithm for \prob{MCIS} parameterized by $k = \cvd(G_1) + \cvd(G_2)$.
\end{restatable}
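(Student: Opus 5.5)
The plan is a two-step reduction. The abstract points to an intermediate problem: a $k\times k$ binary matrix feasibility problem in which each row and each column must come from a given list of allowed $k$-bit vectors. I will first show, assuming ETH, that this matrix problem has no $O^*(2^{o(k^2)})$-time algorithm. I will then give a polynomial-time reduction from it to \prob{MCIS} in which $\cvd(G_1)+\cvd(G_2)=O(k)$.

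\textbf{Hardness of the matrix problem.} The natural source is 3-SAT with $n$ variables and $O(n)$ clauses, which after sparsification has no $2^{o(n)}$ algorithm. I set $k=\Theta(\sqrt n)$ and arrange the variables in a $k\times k$ grid, so that each entry of the matrix is one variable. The clauses must be encoded through the row and column lists.

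\begin{itemize}
\item Encoding clauses directly by constraints on single rows or single columns fails, because a clause can mix variables from different rows and columns.
\item Instead I plan to start from a problem already known to be $2^{o(k^2)}$-hard with grid structure, such as \textsc{$k\times k$ Permutation Clique} or the 2-CSP-based $k\times k$ grid problems of Lokshtanov, Marx and Saurabh. Alternatively, I can first transform the 3-SAT formula so that each clause involves variables only within a single row or a single column of the grid.
\item The transformation works by copying each variable into $O(1)$ cells and enforcing equality of the copies along rows and columns.
\item Each row list then consists of all assignments to its $k$ cells that satisfy the clauses assigned to that row, and similarly for columns.
\item The lists may have size up to $2^k$. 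This stays within the bound, since the instance size is $2^{O(k)}\cdot$poly and the lower bound is about the exponent in $k^2$.
\end{itemize}

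\textbf{Reduction to \prob{MCIS}.} I take CVD sets $R=\{r_1,\dots,r_k\}$ and $C=\{c_1,\dots,c_k\}$ in $G_1$, with matching copies $R'$ and $C'$ in $G_2$.

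\begin{itemize}
\item The matrix is encoded in $G_2$: the adjacencies between $r'_i$ and $c'_j$ are free, and I will realize this freedom by making $G_2$ contain one gadget per matrix choice. Rather than that, it is simpler to make the matrix entries the adjacency between $r_i$ and $c_j$ in a common subgraph.
\item Concretely, I force the CVD vertices to be matched identically by attaching private, rigid, large distinct cliques (pendant structures) to each of them.
\item Each cluster of $G_1$ represents one allowed row vector for row $i$. It is a clique adjacent to $r_i$, and to the $c_j$ exactly according to the vector's bits.
\item Symmetrically, the clusters of $G_2$ encode allowed column vectors.
\item Weights are simulated by clique sizes. Row-$i$ clusters get size $N_i$, distinct for each $i$, so that matching a large cluster of $G_1$ into $G_2$ requires a cluster of $G_2$ with the same neighborhood pattern on the CVD set.
\item The threshold $t$ counts the rigid parts plus one cluster per row and per column. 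It can be met only if, for every $i$, some row-$i$ vector and, for every $j$, some column-$j$ vector coexist consistently.
\end{itemize}

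This consistency condition is the main obstacle. The question is how a row choice in $G_1$ and a column choice in $G_2$ interact in a way that fixes a single matrix.

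My first attempt puts the entry bits on edges inside the CVD sets. Let $G_1$ have all possible $r_i c_j$ edges available through duplicated vertices, which costs only $O(k)$ extra deletion vertices per side if I use two copies $c_j^0$ and $c_j^1$. A cluster choosing a row vector then commits to adjacency with $c_j^{b}$, the column clusters in $G_2$ commit to their bits in the same way, and the isomorphism forces agreement on bit $(i,j)$ through its image.

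I would verify that the CVD number stays $O(k)$, and that an optimal common subgraph cannot cheat by splitting clusters or dropping CVD vertices. For the latter, I will make the rigid gadgets heavier than all clusters combined.

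The final accounting goes as follows. A $2^{o(k^2)}$ algorithm for \prob{MCIS} would give one for the matrix problem, and hence a $2^{o(n)}$ algorithm for 3-SAT, contradicting ETH.
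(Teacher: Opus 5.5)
Your reduction to \textsc{MCIS} has a genuine gap, and as designed it cannot work. Once the deletion vertices are forced to correspond identically ($r_i\mapsto r'_i$, $c_j\mapsto c'_j$), the input graphs dictate every adjacency among them in $H$. Nothing about $r'_ic'_j$ is ``free'' in a given $G_2$. Likewise, every vertex of $H$ outside the modulator must sit on a vertex of $G_1$ and a vertex of $G_2$ with the same neighbourhood on the identified modulators. A row-$i$ cluster of $G_1$ is adjacent to $r_i$ and to $\{c_j : v[j]=1\}$, while a column-$j$ cluster of $G_2$ is adjacent to $c'_j$ and to $\{r'_i : w[i]=1\}$. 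These patterns agree only for unit vectors, and adding duplicates $c_j^0,c_j^1$ to the modulator does not change this.

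More fundamentally, no construction of this kind can carry a $2^{\Omega(k^2)}$ bound. Once the used modulator vertices and their correspondence are guessed ($k^{O(k)}$ choices), each clique of $H-X$ goes into one cluster of $G_1$ and one of $G_2$, with distinct cliques in distinct clusters. The best clique for a pair of clusters is read off from signature counts, so what remains is an ordinary maximum-weight bipartite matching. The $2^{k^2}$ freedom lives only in edges of $H$ between vertices that lie in the modulator of one graph but in the cluster part of the other.

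That is the missing idea: put rows and columns on opposite sides.
\begin{itemize}
\item In $G_1$, row $i$ is a clique $P_1^i$ in the cluster part, with one vertex $p_1^{i,v}$ per $v\in\mathcal{R}_i$. Column $j$ is a single modulator vertex $q_1^j$, and $p_1^{i,v}q_1^j\in E(G_1)$ if and only if $v[j]=1$.
\item In $G_2$ the roles swap. Row $i$ is a modulator vertex $p_2^i$, and column $j$ is an independent set $Q_2^j$ with one vertex per $w\in\mathcal{C}_j$. Here $p_2^iq_2^{j,w}\in E(G_2)$ if and only if $w[i]=1$.
\end{itemize}
If $H$ identifies $p_1^{i,r_i}$ with $p_2^i$ and $q_1^j$ with $q_2^{j,c_j}$, the induced isomorphism yields exactly $r_i[j]=c_j[i]$.

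Forcing these identifications takes more than heavy pendant cliques. The paper uses a rigid modulator gadget $G_0$: two paths joined by a matching, pendant $5$- and $7$-cycles, and two marker leaves, attached to rows via $a^i$ and to columns via $b^j$. It also guesses the diagonal and normalizes it to all ones, so that $G_1-V(G_0)$ is a union of the $k$ cliques $P_1^i\cup\{q_1^i\}$. It then combines triangle-freeness of $G_2$ with $\mathsf{mcis}(G_1,G_2)\le\mathsf{mcis}(G_1-S,G_2)+|S|$. This shows that the threshold $|V(G_0)|+2k$ forces all of $G_0$ and all of $P_2$ to be used, each $p_2^i$ to be matched into $P_1^i$, and each $q_1^j$ to a single vertex of $Q_2^j$.

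The first step is also incomplete. The $k\times k$ \textsc{(Permutation) Clique} problems of Lokshtanov, Marx and Saurabh are solvable in $k^{O(k)}$ time by choosing one entry per row. They therefore only give $2^{o(k\log k)}$ bounds and cannot be the source.

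Your alternative is the right idea and is what the paper does: copy each variable into $O(1)$ cells so that every clause is checked inside one line and the copies are equalized along lines. But the real content, which you do not address, is getting an $O(\sqrt n)\times O(\sqrt n)$ grid with at most one variable per cell. The paper does it as follows.
\begin{itemize}
\item After sparsification, each variable occurs in at most $D$ clauses.
\item Greedily colour the graph on variables, adjacent when they share a clause (degree at most $2D$). Chop the colour classes into $O(\sqrt n)$ blocks $X_i$ of size $O(\sqrt n)$, so no clause meets a block twice.
\item Greedily colour the graph on clauses, adjacent when they contain distinct variables of a common block (degree $O(\sqrt n)$). This gives $O(\sqrt n)$ groups $S_j$ with $|X_i\cap Y_j|\le 1$, where $Y_j$ is the set of variables of $S_j$.
\end{itemize}
Cell $(i,j)$ then carries the variable of $X_i\cap Y_j$. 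The list $\mathcal{R}_i$ requires equal bits on the columns where each $x\in X_i$ occurs, and $\mathcal{C}_j$ requires that all clauses of $S_j$ be satisfied.
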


To the best of our knowledge, this is the first standard structural parameter for which \textsc{ISI} and \textsc{MCIS} provably have different optimal FPT dependences on the parameter.

The proof of \Cref{thm:mcislb} proceeds via a reduction that uses the following intermediate problem, which may be of independent interest.

\boxproblem{\textsc{List-Constrained Boolean Matrix (LCBM)}}{An integer $k\in \mathbb{N}$, two sets of length-$k$ Boolean vectors $\mathcal{R}_i \subseteq \{0,1\}^k$ and $\mathcal{C}_i \subseteq \{0,1\}^k$ for each $i \in [k]$.}{Is there a Boolean matrix $A \in \{0,1\}^{k\times k}$ such that, for every $i \in [k]$, the $i$-th row $A[i,\cdot]$ lies in $\mathcal{R}_i$ and the $i$-th column $A[\cdot,i]$ lies in $\mathcal{C}_i$?}{$k$}

We allow the lists $\mathcal{R}_i$ and $\mathcal{C}_i$ to be exponentially large (up to $2^k$ each).
Note that \textsc{LCBM} admits a straightforward $O^*(2^{O(k^2)})$-time algorithm: enumerate all $2^{k^2}$ Boolean matrices $A \in \{0,1\}^{k\times k}$ and test whether every row and column of $A$ is compatible with the corresponding lists.
We show that this trivial algorithm is essentially optimal under ETH:

\begin{restatable}{theorem}{lcbmlb}
\label{thm:lcbmlb}
    Assuming ETH, there is no $O^*(2^{o(k^2)})$-time algorithm for \prob{LCBM}.
\end{restatable}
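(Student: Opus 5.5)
We reduce from \textsc{3-Coloring} on $n$-vertex graphs of maximum degree $4$; by ETH and the sparsification lemma this admits no $2^{o(n)}$-time algorithm. The goal is an LCBM instance with $k = O(\sqrt{n})$, computable in $2^{O(\sqrt{n})}$ time, so that instance size and construction time are negligible against $2^{o(k^2)}=2^{o(n)}$. Since the lists may hold up to $2^k$ vectors, a list of size $2^{O(\sqrt n)}$ is allowed. The plan is the usual ``grid packing'' idea. Each row and each column is one block of $k$ bits, so the $k\times k$ matrix carries $\Theta(n)$ bits of information. Each row list enforces constraints between a group of vertices, and each column list does the same for another group. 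The only coupling between a row and a column is the single shared entry $A[i,j]$.

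\textbf{Construction.} First partition $V$ into $k' = \Theta(\sqrt n)$ groups $V_1,\dots,V_{k'}$ of size $s=\Theta(\sqrt n)$. A coloring of one group is encoded by $2s$ bits. Let row $i$ (for $i\le k'$) hold the encoding of the coloring of $V_i$. Each cell $A[i,j]$ then has to be consistent with column $j$. Column $j$ is meant to hold a ``transposed copy'' of information, namely a few designated bits from each group's coloring. Because one cell carries only one bit, each bit is duplicated: it is stored in the row and must equal the column's reading of the same cell. The edges, however, have to be checked somewhere. Let the column lists encode a \emph{second} partition of the vertices, $U_1,\dots,U_{k'}$. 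Choose the matrix layout so that the $2$ bits of vertex $v$ sit at positions $(i,\cdot)$ with $v\in V_i$, and also at column positions $j$ with $v\in U_j$. Each vertex thus occupies two cells, and these are read by its row and by its column. The row list $\mathcal R_i$ contains exactly the bit-vectors that encode proper $3$-colorings of $G[V_i]$, with padding bits forced to $0$. The column list $\mathcal C_j$ contains the proper colorings of $G[U_j]$.

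\textbf{Handling edges.} The main obstacle is covering every edge: each edge $uv$ must lie inside some $V_i$ or inside some $U_j$. A general graph has no such two partitions. To fix this, first transform $G$ by edge subdivision and variable duplication. Replace each vertex $v$ with a copy $v^e$ for each incident edge $e$, and require the copies to agree. Then choose $V_i$ so that the copies of $v$ all lie in the same row, which allows an equality check in $\mathcal R_i$. Next, pair up the edges so that both endpoint copies of each edge $uv$ lie in a common column $U_j$, which allows the edge check in $\mathcal C_j$. Since degrees are at most $4$, there are $O(n)$ copies in total. It remains to arrange the $O(n)$ cells in a grid of side $O(\sqrt n)$ so that the copies of each vertex share a row and the two endpoint-copies of each edge share a column. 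This amounts to properly edge-coloring a bipartite multigraph of bounded degree whose sides are rows and columns; König's theorem gives such a coloring. Concretely, greedily pack the vertices into rows of $O(\sqrt n)$ copies and the edges into columns of $O(\sqrt n)$ endpoint-pairs. Then assign each copy a distinct cell in its (row, column) pair. This assignment is a bipartite matching problem with maximum degree $O(\sqrt n)$ and $O(\sqrt n)$ colors available (positions in the row), so it is solvable by König's theorem. I expect this layout step, ensuring every copy gets a distinct cell, to be the fiddly part. Using $2$ bits per copy doubles the dimensions, which is harmless.

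\textbf{Correctness.} A proper $3$-coloring yields a matrix in which every row and column is valid. Conversely, the row lists force all copies of a vertex to agree. The column lists force the endpoints of every edge to get different colors and force every encoding to be valid (not $11$). Hence any feasible matrix decodes to a proper coloring. Finally, $k=O(\sqrt n)$, each list has at most $2^{O(\sqrt n)}$ entries, and the construction takes $2^{O(\sqrt n)}$ time. An $O^*(2^{o(k^2)})$ algorithm for LCBM would therefore decide \textsc{3-Coloring} in $2^{o(n)}$ time, contradicting ETH.
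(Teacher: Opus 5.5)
Your overall architecture matches the paper's. The paper reduces from sparsified 3-SAT: rows are variable blocks whose lists force all occurrences of a variable to agree, and columns are clause groups whose lists force their clauses to be satisfied. Using bounded-degree 3-Coloring instead is fine. The gap is in the layout step you call fiddly. Once vertex $v$ has a row $r(v)$ and edge $e$ has a column $c(e)$, the copy $v^e$ has no freedom left. It must occupy the cell $(r(v),c(e))$, because that is the only cell seen both by the row list checking $v$'s copies and by the column list checking $e$. So there is no ``assign each copy a distinct cell in its (row, column) pair'' problem for K\"onig's theorem to solve.

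What must actually be guaranteed is collision-freeness: for every row $i$ and column $j$, at most one vertex of row $i$ is incident to an edge placed in column $j$. In particular, the two endpoints of every edge must lie in different rows. An arbitrary greedy packing of vertices into rows and edges into columns violates this in general, and then one bit of $A$ would have to encode two unrelated vertices. The missing idea is the paper's two-stage coloring:
\begin{enumerate}
\item Properly color $G$ with $O(1)$ colors and cut each color class into blocks of at most $\sqrt n$ vertices. This gives $O(\sqrt n)$ rows with no edge inside a row.
\item Color the conflict graph on $E(G)$ in which $e$ and $f$ are adjacent if they have distinct endpoints in a common row. Each row holds $O(\sqrt n)$ vertices of degree at most $4$, so this graph has maximum degree $O(\sqrt n)$. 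A greedy coloring therefore yields $O(\sqrt n)$ columns with exactly the property $|X_i\cap Y_j|\le 1$ that the paper uses.
\end{enumerate}

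Your treatment of the $2$-bit color encoding also fails as written. The constraint that $u$ and $v$ receive different colors is not bitwise. Each column list acts on its own column independently, so all four bits of $u^e$ and $v^e$ must lie in the single column $c(e)$. That forces the two bits of a vertex into two \emph{different} rows, not into row $i$ as your layout states. ``Doubling the dimensions'' does not turn a cell into a $2$-bit cell, since a row list sees only one matrix row. The fix is to treat each vertex-bit as its own variable with its own row. In the first coloring stage, color the auxiliary graph on vertex-bits in which two bits conflict if they belong to the same vertex or to adjacent vertices. Its degree is at most $9$, so every edge's four bits land in four distinct rows. With both repairs your reduction goes through, and it is essentially the paper's proof with 3-Coloring in place of 3-SAT.
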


To obtain the ETH lower bound for \textsc{MCIS}, we essentially encode the row constraints into one graph and the column constraints into the other, so that a large common induced subgraph exists if and only if there is a matrix satisfying all row and column lists.
This kind of encoding is inherently impossible for \textsc{ISI}.

\medskip

Finally, we study a natural $\ell$-graph generalization of \textsc{MCIS}, which we call \textsc{$\ell$-MCIS}.

\boxproblemnoparameter{\textsc{$\ell$-Maximum Common Induced Subgraph ($\ell$-MCIS)}}{$\ell$ graphs $G_1,\dots, G_{\ell}$ and an integer $t \in \mathbb{N}$.}{Do there exist induced subgraphs $H_1 \subseteq G_1,\dots, H_{\ell} \subseteq G_{\ell}$ on at least $t$ vertices such that $H_1,\dots, H_{\ell}$ are isomorphic?}
The case $\ell=2$ is equivalent to \textsc{MCIS}.
We show that, with respect to the cluster vertex deletion number, the two-graph case lies at the boundary of tractability: while (2-)\textsc{MCIS} admits an FPT algorithm parameterized by $\cvd$, the problem becomes intractable already for three input graphs, even when each graph is extremely close to a cluster graph.
\begin{restatable}{theorem}{tmcislb}
\label{thm:tmcislb}
    \textsc{3-MCIS} is NP-hard when $\cvd(G_i) = 2$ for all $i \in \{ 1, 2, 3 \}$.
\end{restatable}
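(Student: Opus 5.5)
We prove \Cref{thm:tmcislb} by a polynomial-time reduction from \textsc{3-Partition}, which is strongly \NP-hard. The idea is that three input graphs let us enforce two independent ``partition'' constraints at once. With two graphs, a common induced subgraph of cluster-like graphs behaves like a bipartite matching between cliques, and our FPT algorithm exploits exactly this. With three graphs, the composition of two matchings can encode a 3-dimensional-matching-like structure. The two deletion vertices per graph will act as ``anchors'': they split each clique into parts adjacent to anchor $a$, to anchor $b$, to both, or to neither. The clique sizes then encode numbers in unary.

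Concretely, let the instance consist of $a_1,\dots,a_{3m}$ and bound $B$ with $\sum a_j = mB$ and $B/4<a_j<B/2$.
\begin{itemize}
\item $G_1$ is built from two anchors $x,y$ together with $m$ disjoint cliques of size $B$, each complete to $x$ only.
\item $G_2$ uses anchors $x',y'$ together with $3m$ cliques of sizes $a_j$, each complete to $x'$.
\item The third graph $G_3$ uses its two anchors to force a \emph{gadget}. The role of the second anchor $y$ (adjacent to a padding clique or to selected parts of cliques) is to distinguish the anchors and prevent a non-anchor vertex from playing an anchor's role.
\end{itemize}
In the pure two-graph version, this already resembles the known hardness on disjoint unions of cliques plus a universal vertex. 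That version is easy, though: cliques map into cliques, so common induced subgraphs of unions of cliques are just matchings of sizes. To break this we exploit that a clique of $G_1$ may split only if anchor adjacencies differ inside it. Hence I would make the $B$-cliques of $G_1$ internally partitioned by adjacency to $y$ into ``slots''.

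For the target $t=2+mB$ we must show the following. Any common induced subgraph of that size must contain all anchors, mapped to anchors; this follows by degree/size counting and the padding cliques. Consequently, the cluster components correspond across the three graphs while respecting the anchor-neighborhood types. Here $G_2$ forces the pieces to be exactly the $a_j$, and $G_3$ forces them to regroup into bins of size exactly $B$. I would get the regrouping in $G_3$ by giving it $m$ cliques of size $B$ attached to both anchors in a way that the surviving pattern must be a disjoint union of $B$-blocks. For this, the common graph $H$ has component structure ``cliques glued by anchor adjacency''. In $H$ the non-anchor part is a disjoint union of cliques, and each clique of $H$ must embed in a clique of each $G_i$.

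The forward direction is direct: a valid 3-partition gives the embeddings. The main obstacle is designing the three anchor-adjacency patterns so that the common graph $H$ genuinely forces both ``splitting into the $a_j$'' and ``grouping into $B$''. The difficulty is that cliques of $H$ cannot span two cliques of a $G_i$, so the grouping must be witnessed through anchor adjacency rather than clique merging. I would first try letting $H$'s cliques be the $a_j$-pieces. In $G_1$, pieces sharing a bin are distinguished only by their adjacency to $y$ versus $x$ in distinct sub-slots, and $G_3$ prescribes the pattern of types. The delicate counting step is to show that no solution can discard vertices asymmetrically and still reach $t$. I would handle it by scaling all numbers by a large factor so that any loss exceeds the slack.
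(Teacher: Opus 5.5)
Your proposal has a genuine gap: the mechanism it relies on cannot work. Beyond the fact that $G_3$ is never specified and the ``split into the $a_j$, regroup into bins of size $B$'' step is left open, the core idea contradicts how common induced subgraphs behave here.

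Let $Y_i\subseteq V(H)$ be the (at most two) vertices mapped into the deletion set $X_i$ of $G_i$. Then $H-Y_i$ embeds into the cluster graph $G_i-X_i$. So each component of $H-Y_i$ lands inside one clique of $G_i-X_i$. Moreover, two different components of $H-Y_i$ can never land in the same clique: any two vertices of that clique are adjacent in $G_i$, whatever their adjacency to the anchors, so their preimages would be adjacent in $H$.

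Hence anchor adjacencies can only split the vertices of a \emph{single} clique of $H$ into the four types (adjacent to both anchors, to the first only, to the second only, to neither). They can never let several $a_j$-pieces share one $B$-clique of $G_1$ via ``sub-slots''. Once the anchors correspond, the problem is a 3-dimensional \emph{matching} of clique components. A matched triple is worth the sum of the coordinate-wise minima of the four type-counts. A many-to-one packing such as \textsc{3-Partition} is exactly what this cannot express.

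Your concrete graphs already fail. The target $t=2+mB=|V(G_1)|$ forces $H\cong G_1$, which contains a clique of size $B$, while every clique of $G_2$ has at most $a_j+2$ vertices, fewer than $B$ once $B>3$. Even with a smaller target, $H-Y_1$ has at most $m$ components, so $H$ uses at most $m$ cliques of $G_2$ and covers far fewer than $mB$ vertices. Scaling the numbers does not help, because the obstruction is structural rather than a matter of slack.

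The paper accepts this matching structure and reduces from (3,B2)-SAT instead. Each clique is identified with its type-count vector $(a,b,c,d)$. A unique huge anchor clique, with $N^4-10$ vertices adjacent only to $v_i^1$ and ten adjacent only to $v_i^2$, forces the two extra vertices of $H$ onto $\{v_i^1,v_i^2\}$ consistently in all three graphs.

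For each variable, $G_1$ gets five vectors and $G_2,G_3$ get three each. They are shifted by the offset $(iN^2,(n-i)N^2,0,0)$ and placed anti-diagonally in the first two coordinates. Full contribution is then possible only through one of two local matchings (true/false), which leave either $a_{i,1},a_{i,3}$ or $a_{i,2},a_{i,4}$ unused. The clause vectors $b_j=c_j=(0,0,jN,(m-j)N)$ reach their full contribution $mN$ only when paired with an unused $a_{i,\cdot}$ whose last two coordinates encode an occurrence of a literal in $C_j$.

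To salvage your approach, you would need a source problem that is itself a matching or assignment problem, with the numbers entering only through these per-triple min-weights.
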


This sharp transition is specific to $\cvd$.
For vertex cover number, \textsc{$\ell$-MCIS} remains fixed-parameter tractable: letting
$k := \max_{i \in [\ell]} \vc(G_i)$, we obtain the following.
\begin{restatable}{theorem}{tmcisvc}
    \textsc{$\ell$-MCIS} can be solved in $O^*(2^{O(k^2)}\ell)$ time, where $k = \max_{i \in [\ell]} \vc(G_i)$.
\end{restatable}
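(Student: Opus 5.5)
The plan is to reuse the guessing framework of Abu-Khzam's vertex-cover algorithm, but with $\ell$ graphs. For each $i\in[\ell]$ fix a minimum vertex cover $S_i$ of $G_i$ with $|S_i|\le k$ (computable in $O^*(1.28^k)$ time). Let $I_i = V(G_i)\setminus S_i$ be the corresponding independent set. Every vertex $v\in I_i$ is characterised, up to automorphism, by its neighbourhood type $N(v)\subseteq S_i$, so there are at most $2^k$ types per graph.

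Suppose the common induced subgraph $H$ exists. Then $H$ is induced in $G_1$, so $\vc(H)\le k$. Fix an isomorphism $\phi_i\colon H\to H_i$ for each $i$ and let $X_i=\phi_i^{-1}(S_i\cap V(H_i))$, a vertex cover of $H$ of size at most $k$. Put $X=\bigcup_i X_i$. This union may be large, so instead of guessing $X$ directly we guess one \emph{core} $C=X_1$ together with the induced graph $H[C]$ and the neighbourhood type of every vertex of $H\setminus C$ in $C$. The number of such guesses is $2^{O(k^2)}$: the graph on $C$ costs $2^{\binom{k}{2}}$, and the remaining data is a multiplicity vector over the $2^k$ types.

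The multiplicities themselves are unbounded, so they cannot be guessed. Instead, for each graph $G_i$ we decide feasibility and compute the maximum achievable multiplicities, as follows.

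For a fixed graph $G_i$ we guess an injective embedding $\psi_i$ of the core. The core may map partly into $S_i$ and partly into $I_i$. Vertices of $H\setminus C$ form an independent set in $H$, because $C$ is a vertex cover of $H$. Each such vertex maps to a vertex of $G_i$ whose neighbourhood, restricted to $\psi_i(C)$, has the prescribed type, and the images must be pairwise non-adjacent.

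Within $G_i$, after fixing $\psi_i(C)$, the candidates split into two groups:
\begin{itemize}
\item vertices in $S_i\setminus\psi_i(C)$, at most $k$ of them;
\item vertices in $I_i$, which are pairwise non-adjacent automatically.
\end{itemize}
We guess which vertices of $S_i$ outside the core are used; this is $2^k$ choices, they must be independent, and their types are forced. Vertices of $I_i$ that are adjacent to a used vertex of $S_i$ are discarded. The remaining vertices of $I_i$ can be chosen freely, as many as are available of each type.

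So $G_i$ realises a type-multiplicity vector $m$ if and only if, for some guess, $m$ equals the count of forced vertices plus a vector bounded coordinatewise by the availability vector. Feasibility for a given $m$ is therefore a union over $k^{O(k)}\cdot 2^k$ guesses of "box" constraints $a\le m\le a+b$.

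The main obstacle is combining these constraints across the $\ell$ graphs without multiplying the guesses, which would give a factor exponential in $\ell$. Since $C$ and the type data are guessed globally, each graph contributes a set $B_i$ of at most $2^{O(k^2)}$ boxes in $\mathbb{Z}^{2^k}$. We need a vector $m$ lying in some box of every $B_i$ that maximises $|C|+\sum m$.

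To do this we process the graphs sequentially, maintaining the set of boxes that arise as intersections. Boxes are products of intervals whose endpoints come from the forced counts, which lie in $[0,k]$, together with the availability bounds. Each endpoint's forced part is an integer in $[0,k]$ per coordinate, so the number of distinct box shapes stays bounded. Concretely, I would argue that a box intersection is determined by a lower vector in $\{0,\dots,k\}^{2^k}$ (the maximum of the forced counts), but that count is $k^{2^k}$, which is too large.

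The fix I would try first is to observe that the forced part has support of size at most $k$ and total sum at most $k$. Lower vectors of intersections are coordinatewise maxima over $\ell$ graphs, though, and these could have large support. Instead, I would bound differently: only coordinates in the support of some type used by the core's guessed types matter, and the guessed type data already fixes which of the at most $2^k$ types appear. A cleaner route is to guess the lower vector $a$ globally as part of the $2^{O(k^2)}$ guess, restricting to vectors with $\sum a$ at most $k$ summed over the core. Then for each graph we check, in $k^{O(k)}$ time, whether some embedding is compatible with $a$, and take the upper bound to be the coordinatewise minimum over the graphs of the best availability. This makes the total running time $2^{O(k^2)}\cdot\ell\cdot\mathrm{poly}$.
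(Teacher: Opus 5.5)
There is a genuine gap: the step that combines the per-graph constraints across all $\ell$ graphs, which you correctly identify as the crux, is never actually resolved.

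Your per-graph analysis is sound. Once the core $C=X_1$ and $H[C]$ are fixed, the set $F_i$ of type-multiplicity vectors realisable in $G_i$ is a union of boxes, one per guess (core embedding plus used vertices of $S_i$). There are $2^{O(k^2)}$ such guesses rather than $k^{O(k)}\cdot 2^k$, since core vertices may land in any of the $2^k$ twin classes of $I_i$, but that is harmless for a single graph.

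The trouble is the combination. Vertices of $H\setminus C$ can always be dropped, so each $F_i$ is downward closed and the forced lower corners play no role. The real task is to choose one guess $g_i$ per graph maximising $\sum_T \min_i u_{g_i,T}$, where $u_g$ is the upper corner of box $g$. Your proposed fix does not achieve this, for two reasons.
\begin{itemize}
\item Guessing a global lower vector with $\sum a\le k$ is unjustified. As you note yourself, the coordinatewise maximum of forced vectors over $\ell$ graphs can have large support and sum.
\item There is no single ``best availability'' vector per graph. Different core embeddings and different sets of used $S_i$-vertices give coordinatewise incomparable upper corners. So $\min_i\max_g u_{g,T}$ is only an upper bound, and no consistent choice $(g_1,\dots,g_\ell)$ need attain it.
\end{itemize}
Enumerating all combinations of guesses is exponential in $\ell$, and nothing in the proposal replaces that enumeration.

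The missing idea is a lower bound on the optimum that pins the solution down. Let $G_1$ be an input graph with the fewest vertices, say $n_1$. Every $G_i$ has at least $n_1$ vertices and a vertex cover of size at most $k$, so all of them contain an independent set of size $n_1-k$. Hence the instance is trivially a yes-instance if $t\le n_1-k$; otherwise every solution is $G_1$ minus at most $k$ vertices. Since $G_1$ has at most $2^k+k$ twin classes, there are only $2^{O(k^2)}$ such candidate graphs $H$ up to isomorphism.

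The paper enumerates these candidates and tests each one against each $G_i$ separately, using Abu-Khzam's $O^*(k^{O(k)})$ algorithm for ISI parameterized by vertex cover. This yields $O^*(2^{O(k^2)}\ell)$ with no cross-graph combination at all. Your machinery can be rescued the same way. Take the core in this smallest graph; then $m$ differs from the type counts of $I_1$ (relative to the chosen core) by a vector of total sum at most $k$. So you can guess $m$ itself from $2^{O(k^2)}$ options and test $m\in F_i$ for each $i$ independently.
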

Although we do not study it in this paper, we expect that many existing structurally parameterized FPT algorithms extend to \textsc{$\ell$-MCIS}. We leave this to future work.

\subsection{Further Related Work}

Abu-Khzam et al.~\cite{DBLP:journals/tcs/Abu-KhzamBS17} proved that, when parameterized by $\vc(G_1)+\vc(G_2)$, \textsc{MCIS} does not admit a polynomial kernel unless $\NP\subseteq \coNP/\mathsf{poly}$.
They also proved \textsc{ISI} is \W[1]-hard even on $C_4$-free bipartite graphs with degeneracy $2$.
Marx and Schlotter~\cite{DBLP:journals/algorithmica/MarxS13} investigated \textsc{ISI} for the case where both $G_1$ and $G_2$ are interval graphs. They gave an FPT algorithm parameterized by $|V(G_1)|-|V(G_2)|$, as well as proving \W[1]-hardness parameterized by $|V(G_2)|$.
Heggernes et al.~\cite{DBLP:journals/tcs/HeggernesHMV15} proved that, when $G_1$ is an interval graph and $G_2$ is a proper interval graph, \textsc{ISI} admits an FPT algorithm parameterized by the number of connected components of $G_2$. They also proved that \textsc{ISI} remains \NP-complete when both $G_1$ and $G_2$ are proper interval graphs.
% Damaschke~\cite{Damaschke90} proved that \textsc{ISI} is \NP-complete on cographs.
Chen et al.~\cite{DBLP:conf/icalp/ChenTW08} proved that, for any graph class $\mathcal{G}$ containing arbitrarily large graphs, \textsc{ISI} for the case $G_2\in \mathcal{G}$ is \W[1]-hard parameterized by $|V(G_2)|$.

%\todo{non-induced version, other graph classes}

%\begin{itemize}
%    \item $O^*(k^{O(k)})$-time algorithm \cite{DBLP:journals/ipl/Abu-Khzam14}
%    \item $O^*(k^{o(k)})$ time ETH lower bound for $k = \vc(G_1) + \vc(G_2)$ \cite{DBLP:journals/tcs/Abu-KhzamBS17}
%    \item \textsc{ISI} is FPT when parametreized by vertex integrity\cite{DBLP:journals/algorithmica/BodlaenderHKKOO20}
%    \item \textsc{ISI} is in P on connected proper interval graphs and connected bipartite permutation graphs \cite{DBLP:journals/tcs/HeggernesHMV15}
%    \item  \textsc{ISI} where the pattern is a graph class $\mathcal{C}$ is solvable in polynomial time if and only
%    if $\mathcal{C}$ contains no arbitrarily large graphs. \cite{DBLP:conf/icalp/ChenTW08}
%\end{itemize}

\subsection{Preliminaries}
%\todo{revise; combine with current preliminaries}

Given a positive integer $a$, we write $[a] = \{1, \dots, a\}$.
For a matrix $A$, we denote the $(i,j)$-element of $A$ by $A[i,j]$, the $i$-th row by $A[i,\cdot]$, and the $j$-th column by $A[\cdot,j]$.

We use standard terminology from parameterized complexity, see, e.g.,
the book of Cygan et al.~\cite{DBLP:books/sp/CyganFKLMPPS15}.
A graph is a \emph{cluster graph} if it is a disjoint union of cliques; equivalently, it has no induced path on three vertices.
A \emph{cluster vertex deletion set} of a graph $G$ is a vertex subset $X\subseteq V(G)$ such that $G-X$ is a cluster graph.
The \emph{cluster vertex deletion number} of a graph $G$, denoted by $\cvd(G)$, is the minimum size of a cluster vertex deletion set of $G$.
A standard branching algorithm computes a cluster vertex deletion set of size at most $k$ in $O^*(3^{k})$ time.
A faster algorithm achieves $O^*(1.7549^k)$ time~\cite{DBLP:journals/mst/TianXY25}.

For two graphs $G_1$ and $G_2$, an \emph{induced embedding} from $G_2$ into $G_1$ is an injective map $f \colon V(G_2) \to V(G_1)$ such that for all $u,v \in V(G_2)$,
\[
\{u,v\} \in E(G_2) \iff \{f(u),f(v)\} \in E(G_1).
\]
Equivalently, $f$ is an isomorphism between $G_2$ and the induced subgraph $G_1[f(V(G_2))]$.
For $Z_2 \subseteq V(G_2)$, a \emph{partial induced embedding} is an induced embedding from $G_2[Z_2]$ to $G_1$.
Using this terminology, \textsc{ISI} asks for an induced embedding of $G_2$ into $G_1$.

%Let $\mathcal{G}_{\mathrm{cluster}}$ denote the class of cluster graphs (disjoint unions of cliques).
%For $k\ge 0$, let $k\mathrm{C}\subseteq \mathcal{G}_{\mathrm{cluster}}$ be the subclass consisting of graphs with $k$ cliques.
%We write $\mathcal{G}_{\mathrm{cluster}} + k\mathrm{C}$ for the class of graphs $H$  that is obtained from the disjoint union of two cluster graphs $G_1$ and $G_2$ (with $G_1$ having exactly $k$ cliques) by adding an arbitrary bipartite edge set between $V(G_1)$ and $V(G_2)$.

%In the \textsc{Induced Subgraph Isomorphism (ISI)} problem, we are given graphs $G_1$ (host) and $G_2$ (pattern), we are tasked with finding an induced embedding from $G_2$ to~$G_1$.

%We consider \textsc{ISI} parameterized by the \emph{cluster vertex deletion number}: $k = \cvd(G_1) + \cvd(G_2)$.
%For $i = 1,2$, let $X_i$ be the cluster vertex deletion set of $G_i$.
%Write $\mathcal{C}_i = \{ C_i^1, C_i^2, \dots \}$ for the family of cliques (connected components) of $G_i-X_i$, for $i\in\{1,2\}$.
%We prove the following:

%For a set of rows $I$ and columns $J$, we denote by $A[I, J]$ the submatrix containing rows $I$ and columns $J$.
%If $I$ contains all rows ($J$ contains all columns), then we use the shorthand $A[\cdot, J]$ ($A[I, \cdot]$, respectively).

\subsection{Technical Overviews}

In this section, we give technical overviews for our main results.

\versionparagraph{Overview for \Cref{thm:isi-fpt}}

We begin with the $O^*(k^{O(k)})$-time algorithm for \textsc{ISI}.
First consider the simpler case where $G_2$ is a cluster graph ($\cvd(G_2)=0$); denote its cliques by $\mathcal{C}_2$.
We show that this case can be solved in $O^*(2^{O(k)})$ time.
Let $X_1$ be a cluster vertex deletion set of $G_1$.
In time $2^{|X_1|}$ we guess which vertices of $X_1$ are used by the embedding (and discard the rest of $X_1$).
This yields the following situation:
\begin{itemize}
    \item $G_1[X_1]$ is a cluster graph (denote its clique components by $\mathcal{S}_1$); otherwise, $G_2$ cannot be embedded into~$G_1$.
    \item $G_1-X_1$ is a cluster graph (denote its cliques by $\mathcal{C}_1$); $X_1$ is a cluster vertex deletion set. 
\end{itemize}
Each component $C_2\in \mathcal{C}_2$ must be embedded into the union of at most one clique $S_1\in \mathcal{S}_1$ and at most one clique $C_1\in \mathcal{C}_1$; otherwise, the image of $C_2$ would contain vertices from two disjoint cliques, contradicting that $C_2$ is a clique.
Moreover, any clique in $\mathcal{S}_1$ or $\mathcal{C}_1$ can intersect the image of at most one clique in $\mathcal{C}_2$; otherwise, distinct cliques in $\mathcal{C}_2$ would be adjacent.

Thus, the remaining task is to assign each component $C_2\in \mathcal{C}_2$ to $S_1 \in \mathcal{S}_1$, $C_1 \in \mathcal{C}_1$, or a pair of cliques $(S_1, C_1)\in \mathcal{S}_1\times \mathcal{C}_1$, in a way that each clique in $\mathcal{S}_1 \cup \mathcal{C}_1$ is used at most once.
We encode admissible choices in an edge-colored bipartite graph with vertex set $\mathcal{S}_1 \cup \mathcal{C}_1 \cup \mathcal{C}_2$.
For each $C_1 \in \mathcal{C}_1$ and $C_2 \in \mathcal{C}_2$, we add an edge $(C_1,C_2)$ of color $S_1$ whenever $C_2$ can be embedded into $S_1\cup C_1$.
If $C_2$ embeds entirely into $C_1$ without interfering with other potential embedded cliques, we add $(C_1,C_2)$ with the special color $0$, which may be used arbitrarily many times.
If $C_2$ embeds entirely into $S_1$, we add an edge corresponding to $S_1$ for $C_2$ with color $S_1$.
With the structural observations above, an edge set that covers all vertices of $\mathcal{C}_2$ and uses each nonzero color exactly once corresponds to a valid induced embedding.
In particular, it ensures the selected clique pairs are anticomplete.
This is an instance of (unweighted) \textsc{Exact Multicolored Matching} with at most $|\mathcal{S}_1|\leq |X_1|\leq k$ colors, which can be solved in randomized $O^*(2^k)$ time via algebraic techniques (\Cref{prop:mpm}).

For the general case, we must handle the cluster vertex deletion set of $G_2$.
As we show in \Cref{lem:modulator-to-modulator}, if $G_2$ admits an induced embedding $\phi$ into $G_1$, there exists a minimal cluster vertex deletion set $X_2$ of $G_2$ that is mapped entirely into $X_1$, i.e., $\phi(X_2) \subseteq X_1$.
Any graph has at most $3^k$ minimal cluster vertex deletion sets of size at most $k$, so we enumerate all candidates for $X_2$ in $O^*(3^k)$ time.
For each choice, we guess the partial induced embedding $X_2\to X_1$ among $k^{O(k)}$ possibilities.
We then need to check whether this partial embedding can be extended.
For example, to decide whether a clique $C_2$ can be embedded into a clique $C_1$ while respecting the fixed mapping on $X_2$, we compare adjacency patterns to the fixed set: vertices with the same pattern are interchangeable, so feasibility reduces to counting.
This ``signature'' argument is formalized in \Cref{lem:clique-to-clique} and reused in our algorithm for \textsc{MCIS}.

\versionparagraph{Overview for \Cref{thm:mcis-fpt}}

Next, we give an overview of our $O^*(2^{O(k^2)})$-time algorithm for \textsc{MCIS}.
As discussed earlier, the strategy parallels \Cref{thm:isi-fpt}: we guess a certain structural information and then reduce the remaining choice among cliques to \textsc{Weighted Exact Multicolored Matching}.
We highlight the additional guessing and sketch the construction of the matching instance.
Let $H$ be a hypothetical maximum common induced subgraph which is isomorphic to $H_1\subseteq G_1$ and $H_2\subseteq G_2$.
For each $i\in \{1,2\}$, let $X_i$ be a cluster vertex deletion set of $G_i$.
We first guess, from $2^{2k}$ options, which vertices of $X_i$ are used in $H_i$ and delete the rest; hence we may assume $X_i\subseteq V(H_i)$ and let $X'_i\subseteq V(H)$ be the corresponding vertex set in $H$.

For the reduction to \textsc{Weighted Exact Multicolored Matching}, we need the structure of $H[X'_1\cup X'_2]$, the graph induced by the union of preimages of $X_1$ and $X_2$, to be fixed.
We already know $H[X'_i]$ as it is isomorphic to $H_i[X_i]$.
We further guess, from $k^{O(k)}$ options, the vertex set $X'_1\cap X'_2$ together with the correspondence of vertices between them.
Unlike \textsc{ISI}, we must also guess the edges \emph{between} $X'_1\setminus X'_2$ and $X'_2\setminus X'_1$; this yields $2^{|X'_1\setminus X'_2|\cdot |X'_2\setminus X'_1|}\leq 2^{k^2}$ possibilities and forms the bottleneck, which is unavoidable under ETH (\Cref{thm:mcislb}).

These guesses determine the interface of $H$ with the modulators and induce a structured decomposition; see \Cref{fig:mcis}.
Let $I:=X'_1\cap X'_2$ and $J_i:=X'_i\setminus I$.
Then $K:=V(H)\setminus (I\cup J_1\cup J_2)$ induces a cluster graph; let $\mathcal{D}$ be its clique components, and let $\mathcal{T}_i$ be the clique components of $H[J_i]$.
Because $H[K\cup J_i]$ is a cluster graph, each $D\in\mathcal{D}$ can be joined to at most one clique in $\mathcal{T}_i$, which yields the four-way partition $\mathcal{D}=\mathcal{D}_1\dot\cup\mathcal{D}_2\dot\cup\mathcal{D}_3\dot\cup\mathcal{D}_0$ depending on whether $D$ attaches to $J_1$, to $J_2$, to both, or to neither.
Symmetrically, $\mathcal{T}_i$ splits into $\mathcal{T}_{i,0}$, $\mathcal{T}_{i,i}$, and $\mathcal{T}_{i,3}$ according to which $\mathcal{D}$-clique it pairs with.
This decomposition drives the subsequent reduction to \textsc{Weighted Exact Multicolored Matching}.
\shortonly{%
\begin{sidewaysfigure}
    \begin{tikzpicture}[
    scale=0.98,
    font=\small,
    clique/.style={
        draw, rounded corners=5pt, line width=0.9pt,
        minimum width=2.8cm, minimum height=0.8cm, inner sep=7pt
    },
    vtx/.style={
        draw, ellipse, line width=0.9pt,
        minimum width=10mm, minimum height=6mm, inner sep=0pt
    },
    container/.style={
        draw, rounded corners=10pt, line width=0.9pt, inner sep=6pt
    },
    conn/.style={line width=1.8pt},
    dconn/.style={line width=1.8pt, dashed}
    ]

    \node[clique] (T10) at (-8,2) {};
    \node[anchor=south] (T10lab) at ($(T10.north)+(0,0.18)$) {$\mathcal T_{1,0}$};
    \node[vtx] (T10a) at ($(T10.south)+(-0.60,0.4)$) {};
    \node[vtx] (T10b) at ($(T10.south)+( 0.60,0.4)$) {};

    \node[clique] (T11) at (-5,2) {};
    \node[anchor=south] (T11lab) at ($(T11.north)+(0,0.18)$) {$\mathcal T_{1,1}$};
    \node[vtx] (T11a) at ($(T11.south)+(-0.60,0.4)$) {};
    \node[vtx] (T11b) at ($(T11.south)+( 0.60,0.4)$) {};

    \node[clique] (T13) at (-2,2) {};
    \node[anchor=south] (T13lab) at ($(T13.north)+(0,0.18)$) {$\mathcal T_{1,3}$};
    \node[vtx] (T13a) at ($(T13.south)+(-0.60,0.4)$) {};
    \node[vtx] (T13b) at ($(T13.south)+( 0.60,0.4)$) {};

    \node[clique] (T23) at (2,2) {};
    \node[anchor=south] (T23lab) at ($(T23.north)+(0,0.18)$) {$\mathcal T_{2,3}$};
    \node[vtx] (T23a) at ($(T23.south)+(-0.60,0.4)$) {};
    \node[vtx] (T23b) at ($(T23.south)+( 0.60,0.4)$) {};

    \node[clique] (T22) at (5,2) {};
    \node[anchor=south] (T22lab) at ($(T22.north)+(0,0.18)$) {$\mathcal T_{2,2}$};
    \node[vtx] (T22a) at ($(T22.south)+(-0.60,0.4)$) {};
    \node[vtx] (T22b) at ($(T22.south)+( 0.60,0.4)$) {};

    \node[clique] (T20) at (8,2) {};
    \node[anchor=south] (T20lab) at ($(T20.north)+(0,0.18)$) {$\mathcal T_{2,0}$};
    \node[vtx] (T20a) at ($(T20.south)+(-0.60,0.4)$) {};
    \node[vtx] (T20b) at ($(T20.south)+( 0.60,0.4)$) {};

    \node[clique] (D1) at (-5,0) {};
    \node[anchor=north] (D1lab) at ($(D1.south)-(0,0.18)$) {$\mathcal D_1$};
    \node[vtx] (D1a) at ($(D1.south)+(-0.60,0.40)$) {};
    \node[vtx] (D1b) at ($(D1.south)+( 0.60,0.40)$) {};

    \node[clique] (D3) at (0,0) {};
    \node[anchor=west] (D3lab) at ($(D3.east)+(0.18,-0.09)$) {$\mathcal D_3$};
    \node[vtx] (D3a) at ($(D3.south)+(-0.60,0.40)$) {};
    \node[vtx] (D3b) at ($(D3.south)+( 0.60,0.40)$) {};

    \node[clique] (D2) at (5,0) {};
    \node[anchor=north] (D2lab) at ($(D2.south)-(-0,0.18)$) {$\mathcal D_2$};
    \node[vtx] (D2a) at ($(D2.south)+(-0.60,0.40)$) {};
    \node[vtx] (D2b) at ($(D2.south)+( 0.60,0.40)$) {};

    \node[clique, minimum width=4.6cm] (D0) at (0,-1.4) {};
    \node[anchor=west] (D0lab) at ($(D0.east)+(0.18,-0.09)$) {$\mathcal D_0$};
    \node[vtx] (D0a) at ($(D0.south)+(-1.65,0.40)$) {};
    \node[vtx] (D0b) at ($(D0.south)+(-0.55,0.40)$) {};
    \node[vtx] (D0c) at ($(D0.south)+( 0.55,0.40)$) {};
    \node[vtx] (D0d) at ($(D0.south)+( 1.65,0.40)$) {};

    \node[container, fit=(T10)(T11)(T13)(T10lab)(T11lab)(T13lab)] (Jone) {};
    \node[anchor=south] at ($(Jone.north)+(-3,0.20)$) {$J_1 (=\phi_1^{-1}(X_1)\setminus I)$};

    \node[container, fit=(T23)(T22)(T20)(T23lab)(T22lab)(T20lab)] (Jtwo) {};
    \node[anchor=south] at ($(Jtwo.north)+(3,0.20)$) {$J_2(=\phi_2^{-1}(X_2)\setminus I)$};

    \node[container, fit=(D1)(D3)(D2)(D0)(D1lab)(D3lab)(D2lab)(D0lab)] (Kbox) {};
    \node[anchor=north west] at ($(Kbox.south west)+(-0.05,-0.15)$) {$K$};

    \node[container, minimum width=10cm, minimum height=1.2cm] (Ibox) at (0, 5.5) {};
    \node[anchor=south] at ($(Ibox.north)+(0,0.20)$) {$I (=\phi_1^{-1}(X_1)\cap \phi_2^{-1}(X_2))$};

    \draw[conn] (D1a) -- (T11a);
    \draw[conn] (D1b) -- (T11b);
    \draw[conn] (D2a) -- (T22a);
    \draw[conn] (D2b) -- (T22b);
    \draw[conn] (D3a) -- (T13a);
    \draw[conn] (D3b) -- (T13b);
    \draw[conn] (D3a) -- (T23a);
    \draw[conn] (D3b) -- (T23b);

    \draw[dconn] ($(Jone.north east)-(3,0)$) .. controls (-1,4.30) and (1,4.30) .. ($(Jtwo.north west)+(3,0)$);
    \draw[dconn] (Jtwo.north) -- (Ibox) -- (Jone.north);
    \draw[dconn] (Ibox.west) .. controls (-13,4.5) and (-13,0) .. (Kbox.west);

    \end{tikzpicture}
    \caption{The structure of $H$: each ellipse is a clique in the indicated family, thick segments denote complete joins between the two cliques, and the dashed lines indicate that edges in between are arbitrary.}
    \label{fig:mcis}
\end{sidewaysfigure}
}
%First, we guess the partition of $J_i$ into $\mathcal{S}_{i,0}$, $\mathcal{S}_{i,i}$, and $\mathcal{S}_{i,3}$, where each clique of each part is embedded into a clique of $\mathcal{T}_{i,0}$, union of a clique in $\mathcal{T}_{i,i}$ and a clique in $\mathcal{D}_i$, and union of a clique in $\mathcal{T}_{i,3}$ and a clique in $\mathcal{D}_3$, respectively.
%Then, 
We build an edge-colored bipartite graph whose left side consists of clique components of $G_1-X_1$ together with the cliques in $\mathcal{S}_{1,0}$, and whose right side consists of clique components of $G_2-X_2$ together with $\mathcal{S}_{2,0}$, where $\mathcal{S}_{i,0}$ is the set of guessed cliques in which cliques in $\mathcal{T}_{i,0}$ are embedded.
Edges encode feasible placements of cliques of $H$: for each pair of cliques $(C_1,C_2)$ from the two sides we add colored edges representing the possibility that a clique $D\in\mathcal{D}_3,\mathcal{D}_1,\mathcal{D}_2$, or $\mathcal{D}_0$ maps into $C_1$ and $C_2$ together with its attached $\mathcal{T}$-cliques; edges incident to $\mathcal{S}_{1,0}$ or $\mathcal{S}_{2,0}$ represent cliques in $\mathcal{T}_{1,0}$ or $\mathcal{T}_{2,0}$ that embed entirely on the other side.
Nonzero colors must be used exactly once, while a special color $0$ can be reused.
Edge weights correspond to the maximum feasible size of the underlying clique $D$. The relevant quantities are computed via the \emph{signature argument} by entry-wise comparison of vectors representing inter-clique edges.

\versionparagraph{Overview for \Cref{prop:mpm}}
\Cref{prop:mpm} states that \textsc{Weighted Exact Multicolored Matching} can be solved in randomized $O^*(2^k W)$ time and polynomial space when all weights are at most $W$.
We encode perfect matchings as monomials in the Pfaffian of a skew-symmetric matrix whose entries carry variables for colors and weights.
Using interpolation and inclusion-exclusion over the $k$ colors, we isolate the polynomial corresponding to matchings that use each nonzero color exactly once.
We then interpolate on the weight variable and apply the Schwartz--Zippel lemma (polynomial identity testing) to detect whether a coefficient of degree at least $t$ is nonzero, yielding the stated randomized bound.

\versionparagraph{Overview for \Cref{thm:mcislb}}
Here we outline the hardness proof for \Cref{thm:mcislb}, which states that assuming ETH there is no $O^*(2^{o(k^2)})$-time algorithm for \textsc{MCIS} parameterized by $k=\cvd(G_1)+\cvd(G_2)$.
The proof has two steps: we first reduce $3$-SAT on $k^2$ variables to \textsc{LCBM} (\Cref{thm:lcbmlb}) with parameter $k$, and then reduce \textsc{LCBM} (parameter $k$) to \textsc{MCIS} (parameter $k$).

Let $\varphi$ be a $3$-CNF on $n$ variables. We may assume each variable appears only a constant number of times by the sparsification lemma of Impagliazzo et al.~\cite{DBLP:journals/jcss/ImpagliazzoPZ01}.
We encode $\varphi$ as an instance of \textsc{LCBM}.
We first partition the variables into $k:=O(\sqrt{n})$ parts $X_1,\dots, X_k$ so that each part contains $O(\sqrt{n})$ variables.
We do the same for the clauses to obtain a partition $S_1,\dots, S_k$.
Let $Y_j$ be the set of variables that appear in clauses in $S_j$.
We can ensure $|X_i\cap Y_j|\leq 1$ for all $i,j$ by a suitable partition obtained via greedy coloring of a bounded-degree graph.

In our construction, each cell $A[i,j]$ corresponds to the assignment of the unique variable in $X_i\cap Y_j$; if there is no such variable, we may set $A[i,j]$ arbitrarily.
The row constraint $\mathcal{R}_i$ enforces consistency among all occurrences of variables in $X_i$ by requiring equal values in the corresponding cells.
The column constraint $\mathcal{C}_j$ enforces satisfaction of the clauses in $S_j$ by restricting the column to vectors that satisfy all clauses over $Y_j$.
Thus, a matrix $A$ satisfying all row and column constraints corresponds to a satisfying assignment of $\varphi$.

We now reduce \textsc{LCBM} to \textsc{MCIS}.
Recall that, in our algorithm for \textsc{MCIS}, the bottleneck is guessing the edges between $P:=X'_1\setminus X'_2$ and $Q:=X'_2\setminus X'_1$ in the hypothetical solution~$H$.
In the reduction, vertices in $P$ and $Q$ correspond to the rows and columns of the matrix~$A$, respectively, and an edge $(i,j)\in P\times Q$ represents setting $A[i,j]=1$.

We now sketch the graph construction; see \Cref{fig:mcis-lb:g0}.
\shortonly{%
\begin{figure}
    \centering
    \begin{subfigure}[t]{.9\linewidth}
    \begin{center}
        \begin{tikzpicture}[
    x=2cm,y=2cm,
    main/.style={circle,draw,fill=black,inner sep=2pt},
    aux/.style={circle,draw,inner sep=1.8pt},
    every node/.style={font=\footnotesize},
    scale=1.2
  ]

  \def\k{4}
  \def\ay{0.8}
  \def\by{0}
  \def\rA{0.35}
  \def\rB{0.35}

  \foreach \i in {1,...,\k} {
    \node[main,label=below left:{$a^{\i}$}] (a\i) at (\i,\ay) {};
    \node[main,label=above right:{$b^{\i}$}] (b\i) at (\i,\by) {};
  }

  \foreach \i in {1,...,\numexpr\k-1\relax} {
    \pgfmathtruncatemacro{\j}{\i+1}
    \draw (a\i) -- (a\j);
    \draw (b\i) -- (b\j);
  }

  \foreach \i in {1,...,\k} {
    \draw (a\i) -- (b\i);
  }

  \foreach \i in {1,...,\k} {
    \pgfmathsetmacro{\cx}{\i}
    \pgfmathsetmacro{\cy}{\ay + \rA}
    \foreach \j in {1,...,4} {
      \pgfmathsetmacro{\angle}{270 + 72*\j}
      \coordinate (tmp) at ({\cx + \rA*cos(\angle)},{\cy + \rA*sin(\angle)});
      \node[aux] (a-\i-\j) at (tmp) {};
    }
    \draw (a\i) -- (a-\i-1) -- (a-\i-2) -- (a-\i-3) -- (a-\i-4) -- (a\i);
  }

  \foreach \i in {1,...,\k} {
    \pgfmathsetmacro{\cx}{\i}
    \pgfmathsetmacro{\cy}{\by - \rB}
    \foreach \j in {1,...,6} {
      \pgfmathsetmacro{\angle}{90 + 360/7*\j}
      \coordinate (tmpb) at ({\cx + \rB*cos(\angle)},{\cy + \rB*sin(\angle)});
      \node[aux] (b-\i-\j) at (tmpb) {};
    }
    \draw (b\i) -- (b-\i-1) -- (b-\i-2) -- (b-\i-3)
      -- (b-\i-4) -- (b-\i-5) -- (b-\i-6) -- (b\i);
  }

  \pgfmathsetmacro{\cxA}{1}
  \pgfmathsetmacro{\cyA}{\ay + \rA}
  \pgfmathsetmacro{\angA}{216}
  \node[aux,label=left:{$a^{1,0}$}] (a10)
    at ({\cxA + 1.8*\rA*cos(\angA)},{\cyA + 1.8*\rA*sin(\angA)}) {};
  \draw (a10) -- (a-1-4);

  \pgfmathsetmacro{\cxB}{1}
  \pgfmathsetmacro{\cyB}{\by - \rB}
  \pgfmathsetmacro{\angB}{90 + 360/7*1}
  \node[aux,label=left:{$b^{1,0}$}] (b10)
    at ({\cxB + 1.8*\rB*cos(\angB)},{\cyB + 1.8*\rB*sin(\angB)}) {};
  \draw (b10) -- (b-1-1);

\end{tikzpicture}
    \end{center}
    \subcaption{Base gadget $G_0$ (illustrated for $k=4$): two length-$k$ paths with attached 5- and 7-cycles, matched by edges $a^i b^i$, plus the extra vertices $a^{1,0}$ and $b^{1,0}$.}
    \label{fig:lb:G0}
    \end{subfigure}

    \vspace{5ex}

    \begin{subfigure}{.9\linewidth}
    \begin{center}
    \begin{tikzpicture}[
        x=2cm,y=2cm,
        main/.style={circle,draw,fill=black,inner sep=2.4pt},
        vtx/.style={circle,draw,inner sep=2pt},
        rect/.style={draw,rounded corners,inner sep=4pt},
        every node/.style={font=\scriptsize},
        >={Latex[length=3mm,width=2mm]}
      ]
      \foreach \row/\i in {0/1, 1.6/2, 3.2/3} {
        \begin{scope}[shift={(\row,0)}]
        \node[main,label=below left:{$a_1^{\i}$}] (a-\i) at (0.6,-0.7) {};
        \node[vtx] (p-\i-1) at (0,0) {};
        \node[vtx] (p-\i-2) at (.4,0) {};
        \node[vtx] (p-\i-3) at (.8,0) {};
        \node[vtx] (p-\i-4) at (1.2,0) {};
        \node[ultra thick,rect,inner ysep=10pt,fit=(p-\i-1)(p-\i-2)(p-\i-3)(p-\i-4)] (R-\i) {};
        \node[font=\footnotesize] at (0, -.5) {$P_1^{\i}$};
        \foreach \j in {1,...,4} {\draw (a-\i) -- (p-\i-\j);}
        \draw (p-\i-1) -- (p-\i-2);
        \draw (p-\i-2) -- (p-\i-3);
        \draw (p-\i-3) -- (p-\i-4);
        \draw (p-\i-1) edge[bend left=20] (p-\i-3);
        \draw (p-\i-2) edge[bend left=20] (p-\i-4);
        \draw (p-\i-1) edge[bend right=15] (p-\i-4);
        \node[vtx,label=above right:{$q_1^{\i}$}] (q-\i) at (0.6,0.7) {};
        \node[main,label=above right:{$b_1^{\i}$}] (b-\i) at (0.6,1.2) {};
        \draw (b-\i) -- (q-\i);
        \foreach \j in {1,...,4} {\draw (q-\i) -- (p-\i-\j);}
        \end{scope}
      }
      \draw[dashed, thick] (q-1) -- (R-2);
      \draw[dashed, thick] (q-1) -- (R-3.155);
      \draw[dashed, thick] (q-3) -- (R-1.25);
      \draw[dashed, thick] (q-3) -- (R-2);
      \draw[dashed, thick] (q-2) -- (R-1.50);
      \draw[dashed, thick] (q-2) -- (R-3.130);
    \end{tikzpicture}
    \end{center}
    \subcaption{Construction of $G_1$: each $P_1^i$ is a clique of row vertices attached to $a_1^i$, and each $q_1^j$ attaches to $b_1^j$; dashed edges indicate optional adjacencies encoding vectors in $\mathcal{R}_i$.}
    \label{fig:lb:G1}
    \end{subfigure}
  
    \vspace{5ex}

    \begin{subfigure}{.9\linewidth}
        \begin{center}
    \begin{tikzpicture}[
        x=2cm,y=2cm,
        main/.style={circle,draw,fill=black,inner sep=2.4pt},
        vtx/.style={circle,draw,inner sep=2pt},
        rect/.style={draw,rounded corners,inner sep=4pt},
        every node/.style={font=\scriptsize},
        >={Latex[length=3mm,width=2mm]}
      ]
      \foreach \row/\i in {0/1, 1.6/2, 3.2/3} {
        \begin{scope}[shift={(\row,0)}]
        \node[main,label=below left:{$a_2^{\i}$}] (a2-\i) at (0.6,-0.5) {};
        \node[vtx, label=below left:{$p_2^{\i}$}] (p2-\i) at (0.6,0) {};
        \node[vtx] (q2-\i-1) at (0,0.7) {};
        \node[vtx] (q2-\i-2) at (.4,0.7) {};
        \node[vtx] (q2-\i-3) at (.8,0.7) {};
        \node[vtx] (q2-\i-4) at (1.2,0.7) {};
        \node[ultra thick,rect,inner ysep=8pt,fit=(q2-\i-1)(q2-\i-2)(q2-\i-3)(q2-\i-4)] (R2-\i) {};
        \node[font=\footnotesize] at (0, 1.15) {$Q_2^{\i}$};
        \draw (a2-\i) -- (p2-\i);
        \node[main,label=above right:{$b_2^{\i}$}] (b2-\i) at (0.6,1.4) {};
        \foreach \j in {1,...,4} {\draw (b2-\i) -- (q2-\i-\j);}
        \foreach \j in {1,...,4} {\draw (p2-\i) -- (q2-\i-\j);}
        \end{scope}
      }
      \draw[dashed, thick] (p2-1) -- (R2-2);
      \draw[dashed, thick] (p2-1) -- (R2-3.205);
      \draw[dashed, thick] (p2-3) -- (R2-1.335);
      \draw[dashed, thick] (p2-3) -- (R2-2);
      \draw[dashed, thick] (p2-2) -- (R2-1.315);
      \draw[dashed, thick] (p2-2) -- (R2-3.225);
    \end{tikzpicture}
    \end{center}
        \subcaption{Construction of $G_2$: each $Q_2^i$ is an independent set attached to $p_2^i$ and $b_2^i$; dashed edges indicate optional adjacencies from $p_2^i$ to $Q_2^j$ encoding vectors in $\mathcal{C}_j$.}
        \label{fig:lb:G2}
    \end{subfigure}
    \caption{Constructions. Dashed lines indicate that edge may or may not exist.}
    \label{fig:mcis-lb:g0}
\end{figure}
}
The graph $G_1$ has $k$ vertices $q_1^1,\dots, q_1^k$ in the cluster vertex deletion set.
It also has $k$ cliques $P_1^1,\dots, P_1^k$, where each vertex $p_1^{i,v}\in P_1^i$ corresponds to a vector $v\in \mathcal{R}_i$.
We add an edge between $p_1^{i,v}$ and $q_1^j$ if $v[j]=1$.
The graph $G_2$ is defined analogously with the roles of $p$ and $q$ interchanged, and with independent sets in place of cliques.
Specifically, $G_2$ has $k$ vertices $p_2^1,\dots, p_2^k$ in the cluster vertex deletion set and $k$ independent sets $Q_2^1,\dots, Q_2^k$ corresponding to $\mathcal{C}_j$, and we connect $p_2^{i}$ and $q_2^{j,v}$ by an edge if $v[i]=1$.

With additional care, using the $O(k)$-vertex base gadget in \Cref{fig:lb:G0}, we can ensure that $q_1^j$ (resp.\ $p_2^i$) must be used in the common subgraph and corresponds to one of the vertices $q_2^j\in Q_2^j$ (resp.\ $p_1^i\in P_1^i$), which selects a vector $c_j\in \mathcal{C}_j$ (resp.\ $r_i\in \mathcal{R}_i$).
Moreover, the edges of $G_1$ between $\{p_1^i\}_{i\in [k]}$ and $\{q_1^j\}_{j\in [k]}$ must be isomorphic to the edges of $G_2$ between $\{p_2^i\}_{i\in [k]}$ and $\{q_2^j\}_{j\in [k]}$.
In particular,
$r_i[j]=1\Longleftrightarrow \{p_1^{i},q_1^j\}\in E(G_1)\Longleftrightarrow \{p_2^{i},q_2^j\}\in E(G_2)\Longleftrightarrow c_j[i]=1$,
which is exactly the \textsc{LCBM} constraint.

\versionparagraph{Overview for Theorem~\ref{thm:tmcislb}}

Finally, we outline our NP-hardness proof for \textsc{3-MCIS}.
While for two graphs, \textsc{MCIS} reduces to a bipartite matching problem over clique components, for three graphs, the analogous construction yields a three-way matching on triples of cliques, which captures the core difficulty.

Consider \textsc{3-MCIS} on graphs $G_i$ ($i\in [3]$) with cluster vertex deletion set $X_i=\{x_i^1, x_i^2\}$ of size $2$, and let $\mathcal{C}_i$ be the set of cliques of $G_i-X_i$.
Let $H$ be a hypothetical solution.
In this overview we restrict to solutions where $H$ has cluster vertex deletion set $X=\{x^1,x^2\}$ with $x^j$ mapped to $x_i^j$ under the embedding into each $G_i$; the full construction enforces this.
In this setting, the remaining task becomes a restricted maximum-weight three-dimensional matching problem: the vertex set is $\mathcal{C}_1 \dotcup \mathcal{C}_2 \dotcup \mathcal{C}_3$, and a triple $(C_1,C_2,C_3)$ has weight equal to the largest clique that can be embedded into all three $C_i$.
Although weights are not arbitrary, the structure is still expressive enough to encode \textsc{3-SAT}.

To characterize the weights, note that vertices in a clique component of $G_i-X_i$ fall into $4$ classes according to adjacency to $X_i$.
We represent each clique $C_i$ by a vector $v_{C_i}=(v_{C_i}^1,\dots, v_{C_i}^{4})$, where $v_{C_i}^j$ counts vertices in the $j$-th class.
Then the weight of $(C_1,C_2,C_3)$ is
\(
\sum_{j=1}^{4}\min(v_{C_1}^j,v_{C_2}^j,v_{C_3}^j),
\)
the size of the entrywise minimum.
Thus \textsc{3-MCIS} can be viewed as a matching problem on three sets of points in $4$ dimensions.

We then reduce a variant of \textsc{3-SAT} to this geometric matching problem; see \Cref{fig:variable-gadget}.
Informally, the first two coordinates enforce consistency of variable assignments by arranging variable gadgets ``anti-diagonally'', and the last two coordinates enforce clause satisfaction with a similar anti-diagonal alignment, so constraints from different variables do not interfere.
\shortonly{%
\begin{figure}
    \centering
    \begin{tikzpicture}[scale=0.13, every node/.style={font=\small}, >={Latex[length=4mm,width=3mm]}]
    \draw[thick,->] (-1,0) -- (37,0) node[below right] {$a$};
    \draw[thick,->] (0,-1) -- (0,37) node[above left] {$b$};
    \draw[step=10,help lines] (-0.5,-0.5) grid (33,33);

    \tikzset{
        V1/.style={draw=black,fill=black,inner sep=2pt, minimum size=5pt},
        V2/.style={draw=red!75,fill=red!50,rectangle,inner sep=2pt, minimum size=5pt},
        V3/.style={draw=blue!75,fill=blue!50,rectangle,inner sep=2pt, minimum size=5pt}
    }

    \node[V2,label={[red]above right:$b_{i,0}$}] at ($(0,0)+(0.2,0.2)$) {};
    \node[V2,label={[red]above right:$b_{i,1}$}] at (10,30) {};
    \node[V2,label={[red]above left:$b_{i,2}$}] at (30,10) {};

    \node[V3,label={[blue]right:$c_{i,0}$}] at ($(0,30)+(0.2,0.2)$) {};
    \node[V3,label={[blue]right:$c_{i,1}$}] at (20,20) {};
    \node[V3,label={[blue]above left:$c_{i,2}$}] at ($(30,0)+(0.2,0.2)$) {};

    \node[V1,label={below left:$a_{i,0}$}] at ($(0,0)-(0.2,0.2)$) {};
    \node[V1,label={above left:$a_{i,1}$}] at ($(0,30)-(0.2,0.2)$) {};
    \node[V1,label={above left:$a_{i,2}$}] at (10,20) {};
    \node[V1,label={above left:$a_{i,3}$}] at (20,10) {};
    \node[V1,label={below right:$a_{i,4}$}] at ($(30,0)-(0.2,0.2)$) {};

    \end{tikzpicture}
    \caption{Variable gadget positions in the $(x,y)$-plane. Only the first two coordinates are shown; the last two coordinates are $N^2$ (or $(p_{i,t}N,(n-p_{i,t})N)$), and the shift $v_i$ applies in all four dimensions.}
    \label{fig:variable-gadget}
\end{figure}
}

\shortonly{%
\versionparagraph{Scope of this version.}
In this short version, we only present the algorithmic result for \textsc{ISI}. The results for \textsc{MCIS}, the ETH lower bounds, and the hardness results for \textsc{3-MCIS} are deferred to the full version.
}

\ifshortversion
\section{Algorithm for ISI}
\else
\section{FPT algorithms for ISI and MCIS}
\fi

\ifshortversion
We now develop the \textsc{ISI} algorithm, using the matching subroutine from \Cref{prop:mpm}.
\else
We begin by presenting the algebraic algorithm for \textsc{Weighted Exact Multicolored Matching}, which we use as a subroutine in our \textsc{ISI} and \textsc{MCIS} algorithms (\Cref{ssec:matching}).
We then develop the \textsc{ISI} algorithm (\Cref{ssec:reduction}) and the \textsc{MCIS} algorithm (\Cref{ssec:mcis}).
\fi

\longonly{
\subsection{Weighted Exact Multicolored Matching} \label{ssec:matching}

In \textsc{Weighted Exact Multicolored Matching}, the input is an undirected multi-graph $G=(V,E)$, an edge-coloring $c:E\to\{0,1,\dots,k\}$, nonnegative integer weights $w:E\to\mathbb{N}$, and an integer $t\in\mathbb{N}$.
The task is to find a matching $M\subseteq E$ such that $|\{e\in M:\,c(e)=i\}|=1$ for each $i\in\{1,\dots,k\}$ and $w(M)=\sum_{e\in M} w(e)\ge t$.
When $w(e)=1$ for all edges $e\in E(G)$ and $t=n/2$, we call the problem \textsc{Exact Multicolored Matching}.
This problem captures \textsc{Rainbow Matching}: given an edge-colored graph and an integer $k$, \textsc{Rainbow Matching} asks whether there is a matching of size $k$ whose edges all receive distinct colors.
It is the special case of our formulation with $w \equiv 1$, $t = k$, and colors $\{1, \dots, k\}$; if $n > 2k$, we can pad the instance by adding $n - 2k$ new vertices and connecting each of them to all original vertices with color-$0$ edges so the padding does not interact with the original instance.

Bannach et al.~\cite{DBLP:conf/mfcs/BannachBMMLRS20} studied a related variant.
Given a multi-graph $G=(V,E)$, an edge-coloring $c:E\to\{0,1,\dots,k\}$, a weight function $w:E\to\mathbb{N}$, and an integer $t\in\mathbb{N}$, the task is to find a perfect matching $M$ with $w(M)\ge t$ such that every color from $\{1, \dots, k\}$ is used at least once.
Colors may repeat across the chosen edges, which is crucial for their application.
They gave a randomized $O^*(2^k)$-time algorithm for this problem.

In this section, we will prove the following:
\wemm*

The classic \textsc{Exact Matching} problem asks, for a graph $G=(V,E)$, a distinguished set $R\subseteq E$ of ``red'' edges, and an integer $k\in \mathbb{N}$, whether there exists a perfect matching $M$ with $|M\cap R|=k$.
It admits a randomized polynomial-time algorithm via the isolation lemma of Mulmuley et al.~\cite{DBLP:journals/combinatorica/MulmuleyVV87}, but the existence of a deterministic (even parameterized) polynomial-time algorithm remains a long-standing open question (see, e.g., \cite{DBLP:conf/stacs/Maalouly23,MurakamiY25}).
The following shows that derandomizing \Cref{prop:mpm} would resolve a difficult research problem.
%\todo{The last part of this paragraph should appear after \Cref{prop:wemmtoem}.}

\wemmtoem*

Before proving \Cref{prop:mpm}, we recall the algebraic tools we rely on.
Throughout we work over a finite field $\mathbb F$ of characteristic 2 with $n^{\Theta(1)}$ elements.
Arithmetic operations over $\mathbb{F}$ take $\log^{O(1)} n$ time. 

Given a set $S$, let $\Pi(S)$ denote the set of perfect matchings on $S$ (i.e., the collection of partitions of $S$ into sets of size exactly two). 
Given a skew-symmetric matrix $A$ (i.e., a matrix that equals $A = -A^\top$ the negative of its transpose) with rows and columns indexed by a set $S$, 
its Pfaffian is defined to be 
    \begin{equation}
    \label{eq:Pfaffdef}
    \Pf A = \sum_{M\in \Pi(S)} \sgn(M)\prod_{ \{u,v\}\in M} A[u,v]
    \end{equation}
for a function $\sgn \colon\Pi(S)\to \{-1,1\}$ whose definition is not relevant here because we work over a characteristic-2 field (see e.g., \cite[Section 7.3.2]{murota1999matrices} for the definition).

An arithmetic circuit is a way of constructing a polynomial by starting with its input variables, and iteratively building up more complicated expressions by using the standard  operations of addition, multiplication, and division.
The size of an arithmetic circuit is the total number of arithmetic operations used to construct its final output polynomial in this way. 
Our algorithms use the well known fact that Pfaffians admit polynomial-size arithmetic circuits that only use addition and multiplication operations \cite{Rote2001}. 

\begin{proposition}[Pfaffian Construction]
    \label{prop:Pfaffian}
    In $n^{O(1)}$ time we can construct a division-free arithmetic circuit of $n^{O(1)}$ size over $\mathbb F$ whose inputs are indeterminate entries of an $n\times n$ skew-symmetric matrix $A$, and whose output is the polynomial $\Pf A$.
\end{proposition}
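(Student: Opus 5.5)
The plan is to build the circuit in two stages. First, write down a short arithmetic circuit for $\Pf A$ that is allowed to divide. Then remove the divisions with Strassen's technique: shift the input to a point where every denominator is $1$, and compute with power series truncated at the degree of the output. Since $\Pf A$ is homogeneous of degree $n/2$ in the entries (take $n$ even; otherwise $\Pf A=0$), the truncation degree is at most $n$, so the blow-up stays polynomial.

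\emph{Step 1 (circuit with divisions).} Use skew-symmetric Gaussian elimination. Let the index set be $\{1,\dots,n\}$ and assume $A[1,2]$ is invertible. Expanding the definition \eqref{eq:Pfaffdef} along the pair $\{1,2\}$ gives
\[
\Pf A = A[1,2]\cdot \Pf A',
\]
where $A'$ is indexed by $\{3,\dots,n\}$ with
\[
A'[u,v]=A[u,v]-\frac{A[u,1]A[2,v]-A[u,2]A[1,v]}{A[1,2]}.
\]
This $A'$ is again skew-symmetric. I would verify the identity by checking that eliminating rows/columns $1,2$ is a congruence $A\mapsto P A P^\top$ with $\det P=1$, using $\Pf(PAP^\top)=\det P\cdot \Pf A$, and then applying the block-diagonal formula for the Pfaffian. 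Iterating $n/2$ times gives a circuit of size $O(n^3)$ whose only divisions are by the successive pivots.

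\emph{Step 2 (a good base point).} Let $J$ be the block-diagonal matrix with blocks $\left(\begin{smallmatrix}0&1\\1&0\end{smallmatrix}\right)$; in characteristic $2$ this is skew-symmetric. Running the elimination on $J$, every pivot equals $1$ and every Schur complement is again of the form $J$. Now substitute $A = J + z(X-J)$, where $X$ holds the indeterminates and $z$ is a new variable. Every pivot then has the form $1 - z\,g$ for some polynomial $g$. Its inverse is the power series $\sum_{i\ge 0}(z g)^i$, truncated at degree $n/2$, which needs only additions and multiplications. Every intermediate quantity is represented by its vector of coefficients in $z$ up to degree $n/2$. Then:
\begin{itemize}
\item each gate of the Step 1 circuit becomes $O(n^2)$ gates, via truncated polynomial multiplication in $z$;
\item each division becomes multiplication by the truncated geometric series.
\end{itemize}

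\emph{Step 3 (reading off $\Pf X$).} The output of Step 2 is $\Pf(J+z(X-J))$ modulo $z^{n/2+1}$. Because $\Pf$ is homogeneous of degree $n/2$, this polynomial in $z$ has degree at most $n/2$, so the truncation loses nothing. Evaluating at $z=1$, i.e.\ summing its coefficients, gives $\Pf X$. The total size is $O(n^3)\cdot O(n^2)=n^{O(1)}$, and the construction is clearly computable in polynomial time.

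The main obstacle is justifying that truncated power series faithfully simulate the divisions. The key facts are:
\begin{itemize}
\item $\mathbb F[z]/(z^{n/2+1})$ is a ring in which every element with constant term $1$ is invertible, and the geometric series computes exactly that inverse;
\item the map from $\mathbb F(X)[z]_{(z)}$ (rational functions whose denominator is nonzero at $z=0$) to this quotient ring is a ring homomorphism.
\end{itemize}
Together these give that the simulated circuit computes the true Pfaffian modulo $z^{n/2+1}$. For this I must check that every pivot, viewed as a rational function, has constant term exactly $1$ at $z=0$. This holds because setting $z=0$ recovers the elimination on $J$.
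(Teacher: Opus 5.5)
Your argument is correct, but it does not follow the paper, which gives no proof and simply cites Rote's division-free Pfaffian algorithm. Rote's algorithm is combinatorial: a Mahajan--Vinay-style clow-sequence dynamic program in which the non-matching terms cancel. It uses $O(n^4)$ additions and multiplications and works directly over any commutative ring.

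You take the algebraic route instead. You use Pfaffian Gaussian elimination, justified via $\Pf(PAP^\top)=\det P\cdot\Pf A$ and the block-diagonal rule. You then apply Strassen's elimination of divisions around the base point $J$, truncating at degree $n/2$, which homogeneity justifies. This route is modular and needs only standard algebra. The fact you single out, that every pivot equals $1$ at $z=0$ because elimination on $J$ reproduces $J$, is exactly what makes the local-ring homomorphism argument work. The cost is a larger polynomial (about $O(n^5)$ instead of $O(n^4)$) plus the power-series bookkeeping that the combinatorial approach avoids.

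A few small slips, none of which affects the conclusion:
\begin{itemize}
\item The correct Schur update is $A'[u,v]=A[u,v]+\bigl(A[u,1]A[2,v]-A[u,2]A[1,v]\bigr)/A[1,2]$, as a $4\times 4$ check confirms. Your sign is wrong outside characteristic $2$, but it is harmless for the paper's $\mathbb{F}$.
\item The truncated ring should be $\mathbb{F}[X][z]/(z^{n/2+1})$, or the same with $\mathbb{F}(X)$. Also, the pivots after the first are rational functions, not polynomials. What you actually need and use is only that their truncated series have constant term $1$.
\item Evaluating the truncated geometric series naively costs $O(n^3)$ gates. To keep your $O(n^3)\cdot O(n^2)$ count, invert each of the $n/2$ pivots once per round, not at every division gate.
\end{itemize}
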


While \textsc{Exact Matching} is typically solved via the isolation lemma
\cite{DBLP:journals/combinatorica/MulmuleyVV87},
we present an alternative (and more natural in our opinion) approach via the polynomial identity testing, also known as the Schwartz-Zippel lemma.
It is proven for example in \cite[Theorem 7.2]{Motwani1995}.

\begin{proposition}[Polynomial Identity Testing]
    \label{prop:PIT}
    Let $P$ be a nonzero polynomial over a finite field $\mathbb F$ of degree at most $d$.
    If each variable of $P$ is assigned an independent, uniform random value from $\mathbb F$,
    then the corresponding evaluation of $P$ is nonzero with probability at least $1-d/|\mathbb F|$.
\end{proposition}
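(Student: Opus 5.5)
We prove the equivalent bound $\Pr[P(r)=0]\le d/|\mathbb F|$, where $r$ is the uniformly random assignment and $d$ is the total degree. The argument is induction on the number $m$ of variables of $P$, which is the standard Schwartz--Zippel argument.

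\textbf{Base case.} For $m=1$, $P$ is a nonzero univariate polynomial of degree at most $d$ over the field $\mathbb F$. It therefore has at most $d$ roots in $\mathbb F$. A uniformly random $r\in\mathbb F$ hits a root with probability at most $d/|\mathbb F|$. If $P$ is a nonzero constant, the probability is $0$.

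\textbf{Inductive step.} For $m\ge 2$, expand $P$ in the first variable:
\[
P(x_1,\dots,x_m)=\sum_{i=0}^{j} x_1^{i}\,Q_i(x_2,\dots,x_m),
\]
where $j$ is the largest index with $Q_j\not\equiv 0$. Every monomial of $x_1^{j}Q_j$ has total degree at most $d$, so $\deg Q_j\le d-j$. Let $r=(r_1,\dots,r_m)$ be the random assignment and define two events:
\begin{itemize}
\item $E_1$: $Q_j(r_2,\dots,r_m)=0$;
\item $E_2$: $P(r)=0$ and $Q_j(r_2,\dots,r_m)\neq 0$.
\end{itemize}
Since $\{P(r)=0\}\subseteq E_1\cup E_2$, it suffices to bound each event.
\begin{itemize}
\item By the induction hypothesis applied to $Q_j$ (in $m-1$ variables), $\Pr[E_1]\le (d-j)/|\mathbb F|$.
\item For $E_2$, fix any values $r_2,\dots,r_m$ with $Q_j(r_2,\dots,r_m)\neq 0$. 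Then $P(x_1,r_2,\dots,r_m)$ is a nonzero univariate polynomial in $x_1$ of degree exactly $j$. Since $r_1$ is independent of $r_2,\dots,r_m$ and uniform, the base case gives conditional probability at most $j/|\mathbb F|$ that $r_1$ is a root. Averaging over $r_2,\dots,r_m$ yields $\Pr[E_2]\le j/|\mathbb F|$.
\end{itemize}
A union bound gives $\Pr[P(r)=0]\le (d-j)/|\mathbb F|+j/|\mathbb F|=d/|\mathbb F|$, which is the claimed statement.

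\textbf{Main obstacle.} The only delicate point is the conditioning in $E_2$. We must use that $r_1$ is independent of the remaining coordinates, so that fixing $r_2,\dots,r_m$ leaves $r_1$ uniform. We must also use that the leading coefficient $Q_j(r_2,\dots,r_m)$ is nonzero, so the specialized polynomial neither vanishes identically nor drops below degree $j$ in a way that breaks the root count. Both hold by construction.

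One convention must be fixed at the outset: ``degree'' means total degree. This is what the later application to the Pfaffian-based polynomials requires, and it is what makes the bound $\deg Q_j\le d-j$ valid.
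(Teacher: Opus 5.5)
Your proof is correct: it is the standard Schwartz--Zippel induction on the number of variables (leading coefficient in $x_1$, union bound over the two events), and you handle the conditioning on $r_2,\dots,r_m$ properly. The paper gives no proof of its own and simply cites Motwani and Raghavan (Theorem~7.2), whose proof is essentially this same argument.
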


Following recent works (e.g., \cite{DBLP:conf/stacs/AkmalK25,eiben2025determinantal}), we build the required polynomial using two algebraic transformations: interpolation and inclusion-exclusion. 

\begin{lemma}[Interpolation]
  \label{lemma:interpolation}
  Let $P(z)$ be a polynomial of degree $n - 1$ over a field $\F$.
  Suppose that $P(z_i) = p_i$ for distinct $z_1, \cdots, z_n \in \F$.
  By the Lagrange interpolation, 
  \[
    P(z)
    = \sum_{i \in [n]} p_i \prod_{j \in [n] \setminus \{ i \}} \frac{z - z_j}{z_i - z_j}.
  \]
  Thus, given $n$ evaluations $p_1, \dots, p_n$ of $P(z)$, the coefficient of $z^t$ in $P(z)$ for every $t \in [n]$ can be computed in polynomial time.
\end{lemma}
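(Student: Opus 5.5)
The statement to prove is the interpolation lemma: from $n$ evaluations of a polynomial of degree at most $n-1$ at distinct points we recover it, and hence each of its coefficients, in polynomial time. The plan is first to justify the Lagrange formula, and then to explain how the coefficients are extracted efficiently.

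\emph{Correctness of the formula.} Write
\[
Q(z) = \sum_{i\in[n]} p_i L_i(z), \qquad L_i(z)=\prod_{j\neq i}\frac{z-z_j}{z_i-z_j}.
\]
Each $L_i$ is well defined because the points $z_1,\dots,z_n$ are distinct, so every denominator $z_i-z_j$ is a nonzero element of $\F$ and can be inverted. Each $L_i$ has degree $n-1$ and satisfies $L_i(z_k)=[i=k]$. Hence $Q$ has degree at most $n-1$ and $Q(z_k)=p_k=P(z_k)$ for all $k\in[n]$. The difference $P-Q$ then has degree at most $n-1$ but vanishes at $n$ distinct points. Since a nonzero polynomial over a field has at most as many roots as its degree, $P-Q=0$, and therefore $P=Q$.

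\emph{Computing the coefficients.} For each $i$, expand the numerator $\prod_{j\neq i}(z-z_j)$ into its coefficient vector. This takes $n-1$ successive multiplications by linear factors, each costing $O(n)$ field operations. Next compute the scalar $\prod_{j\ne i}(z_i-z_j)$ and invert it in $\F$. Finally, scale the coefficient vector by $p_i$ times this inverse and add it into a running total. Overall this uses $O(n^3)$ field operations, or $O(n^2)$ if we first form $\prod_j (z-z_j)$ once and divide by each $(z-z_i)$ synthetically. Each field operation is polylogarithmic over our field of size $n^{\Theta(1)}$, so the total running time is polynomial. The coefficient of $z^t$ is then read off directly.

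\emph{Main obstacle.} The argument is routine; the one point that needs care is the availability of $n$ distinct evaluation points. We need $|\F|\ge n$, which holds because $|\F|=n^{\Theta(1)}$ with a suitable constant; we would state this requirement explicitly when applying the lemma. Characteristic $2$ plays no role, since the argument only uses that $\F$ is a field.
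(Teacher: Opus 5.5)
Your proposal is correct and takes the same approach as the paper. The paper gives no proof beyond invoking the Lagrange formula; you supply the standard details it leaves implicit, namely uniqueness by root counting for a polynomial of degree at most $n-1$ and explicit $O(n^2)$ or $O(n^3)$ expansion of the basis polynomials (note also that the recoverable coefficients are those of $z^0,\dots,z^{n-1}$, so the lemma's ``$t\in[n]$'' should really read $t\in\{0,\dots,n-1\}$).
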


\begin{lemma}[Inclusion-exclusion \cite{Wahlstrom13STACS}]
  \label{lemma:inclusion-exclusion}
  Let $P(Y)$ be a polynomial over a set of variables $Y = \{ y_1, \cdots, y_n \}$ and a field of characteristic $2$.
  For $T \subseteq [n]$, $Q_T$ be a polynomial identical to $P$ except that the coefficients of monomials not divisible by $\prod_{i \in T} y_i$ are zero.
  Then, $Q_T = \sum_{I \subseteq T} P_{-I}$, where $P_{-I}(y_1, \cdots, y_n) = P(y_1', \cdots, y_n')$ for $y_i' = y_i$ if $i \notin I$ and $y_i' = 0$ otherwise. 
\end{lemma}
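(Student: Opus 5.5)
The plan is to prove the identity by linearity: reduce it to a single monomial and then count subsets modulo $2$.

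Both sides of $Q_T = \sum_{I \subseteq T} P_{-I}$ are linear in $P$. The map $P \mapsto Q_T$ keeps some coefficients and zeroes the others. Each map $P \mapsto P_{-I}$ substitutes $0$ for some variables, which is a ring homomorphism and in particular linear. So I will write $P = \sum_m \alpha_m m$ as a sum over monomials $m$ with coefficients $\alpha_m \in \F$. It then suffices to check the identity for a single monomial $m = \prod_i y_i^{e_i}$.

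Fix such an $m$ and let $U := \{ i \in T : e_i \ge 1 \}$ be the indices in $T$ whose variable actually occurs in $m$.

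Step 1: describe each $P_{-I}$. Setting $y_i = 0$ for all $i \in I$ sends $m$ to $0$ if $e_i \ge 1$ for some $i \in I$, and leaves $m$ unchanged otherwise. For $I \subseteq T$, this means $m_{-I} = m$ when $I \cap U = \emptyset$, that is when $I \subseteq T \setminus U$, and $m_{-I} = 0$ otherwise.

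Step 2: sum over $I$. We get
\[
\sum_{I \subseteq T} m_{-I} = \bigl|\{ I : I \subseteq T \setminus U\}\bigr| \cdot m = 2^{|T \setminus U|}\, m .
\]
Since $\F$ has characteristic $2$, this equals $m$ when $U = T$ and $0$ otherwise.

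Step 3: compare with $Q_T$. The condition $U = T$ says exactly that $\prod_{i \in T} y_i$ divides $m$. So $Q_T$ applied to $m$ is $m$ if $U = T$ and $0$ otherwise, which matches Step 2.

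Summing over all monomials of $P$ with their coefficients gives $Q_T = \sum_{I \subseteq T} P_{-I}$. There is no real obstacle here. The only point needing care is that the parity count $2^{|T\setminus U|}$ vanishes precisely when $U \neq T$, which uses characteristic $2$ and lets us ignore the signs $(-1)^{|I|}$ that appear in the usual inclusion–exclusion formula.
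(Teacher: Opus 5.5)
Your proof is correct. You reduce to a single monomial by linearity, then note that a monomial survives in exactly $2^{|T\setminus U|}$ of the terms $P_{-I}$, which is an even count (hence zero in characteristic $2$) unless $U = T$; this is the standard argument. The paper gives no proof of its own and only cites Wahlstr\"om, whose proof is this same per-monomial parity count.
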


Now we are ready to give proofs of \Cref{prop:mpm}:

\wemm*

\begin{proof}
  If $n < 2k$, the instance is trivially a no-instance, so assume $n \ge 2k$.
  Add $n-2k$ fresh vertices $D$, make $D$ a color-$0$, weight-$0$ clique, and connect every $d \in D$ to every $v \in V$ with a color-$0$, weight-$0$ edge, obtaining $G'$.
  Any matching that uses exactly one edge of each color in $[k]$ can be extended to a perfect matching of $G'$ by matching every vertex in $D$ to a distinct unused vertex of $V$ via a color-$0$ edge.
  Any perfect matching of $G'$ with $k$ non-zero colors projects back to a solution in $G$ with weights preserved.
  Thus it suffices to find a perfect matching in $G'$ that uses exactly $k$ non-zero colors and maximizes the total weight.

  Introduce edge variables $x_e$ for $e \in E'$ (collectively $X$), a color-$0$ variable $y_0$, color variables $y_1,\dots,y_k$ (collectively $Y$), and a weight variable $z$.
  For an edge set $S \subseteq E'$, define
  \[ P_S(X, y_0, Y, z) = \left( \prod_{e \in S} x_e \right) \cdot  \left( \prod_{e \in S} y_{c(e)} \right) \cdot z^{w(S)}, \]
  i.e., $P_S$ encodes the chosen edges, their colors, and the total weight as exponents.
  For a family of edge sets $\mathcal{S}$, define
  \[ P_{\mathcal{S}} = \sum_{S \in \mathcal{S}} P_S(X, y_0, Y, z). \]

  Let $A$ be the skew-symmetric matrix indexed by $V'$ with
  \[
      A[u,v] = \sum_{e \colon e \text{ joins } u \text{ and } v} x_e \cdot y_{c(e)} \cdot z^{w(e)} .
  \]
  By the definition of the Pfaffian,
  \( \Pf A = P_\mathcal{M}(X, y_0, Y, z), \)
  where $\mathcal{M}$ is the family of all perfect matchings in $G'$.

  Let $\mathcal{M}'$ be the perfect matchings in $G'$ that contain exactly $n-2k$ color-$0$ edges.
  Every perfect matching has $|V'|/2 = n-k$ edges, and its $y_0$-exponent encodes the number of color-$0$ edges.
  Interpolating in $y_0$ via \Cref{lemma:interpolation}, we extract the coefficient of $y_0^{n-2k}$ in $\Pf A$ with $n-k+1$ evaluations, yielding $P_{\mathcal{M'}}(X, 1, Y, z)$.
  Let $\mathcal{M}^{\star}$ be the perfect matchings in $G'$ with exactly $n-2k$ color-$0$ edges and one edge of each color $i \in [k]$.
  The polynomial $P_{\mathcal{M}^{\star}}(X, 1, 1, z)$ is the coefficient of $\prod_{i \in [k]} y_i$ in $P_{\mathcal{M'}}(X, 1, Y, z)$.
  Inclusion-exclusion (\Cref{lemma:inclusion-exclusion}) retrieves this coefficient from $2^k$ evaluations of $P_{\mathcal{M'}}(X, 1, Y, z)$.

  Let $p(X, z) = P_{\mathcal{M}^{\star}}(X, 1, 1, z)$.
  Because every edge has weight at most $W$ and each perfect matching has $n-k$ edges, the degree of $p$ in $z$ is at most $(n-k) W$.
  For each $t'$, the coefficient of $z^{t'}$ in $p$ is the sum of $\prod_{e \in M} x_e$ over all $M \in \mathcal{M}^{\star}$ of total weight exactly~$t'$.
  Therefore, the maximum weight is the largest $t'$ such that the coefficient of $z^{t'}$ in $p$ is nonzero.
  By \Cref{lemma:interpolation}, we recover all coefficients of $z^{t'}$ in $p$ from $(n - k)W + 1$ evaluations of $p(X, z)$.
  Assigning each $x_e$ uniformly at random from $\F$, 
  we evaluate each coefficient. 
  By \Cref{prop:PIT}, this is correct with probability at least $1 - 1/\operatorname{poly}(n)$.
\end{proof}

Next, we prove \Cref{prop:wemmtoem}.
We use perfect hash families to derandomize color-coding~\cite{DBLP:journals/jacm/AlonYZ95}.
A family $\mathcal{H}$ of functions $h \colon [n] \to [k]$ is an \emph{$(n,k)$-perfect hash family} if for every $S \subseteq [n]$ with $|S| = k$ there is an $h \in \mathcal{H}$ that is injective on $S$.
Naor, Schulman, and Srinivasan~\cite{DBLP:conf/focs/NaorSS95} gave a deterministic construction of size $e^{k + O(\log^2 k)} \log n$ computable in time $e^{k  + O(\log^2 k)} n \log n$.

\wemmtoem*

\begin{proof}
  Consider an \textsc{Exact Matching} instance $(G, R, k)$ with red edge set $R \subseteq E(G)$.
  For a function $h \colon R \to [k]$, build an \prob{Exact Multicolored Matching} instance on $G$ by coloring $e \in R$ with $h(e)$ and $e \in E(G) \setminus R$ with color $0$.
  If $M$ is a perfect matching with exactly $k$ red edges, then for any $h$ that is injective on $M \cap R$, the colored instance has an exact multicolored matching (one edge of each color $1,\dots,k$, remaining edges of color $0$).

  Let $\mathcal{H}$ be an $(|R|, k)$-perfect hash family from \cite{DBLP:conf/focs/NaorSS95} of size $e^{k + O(\log^2 k)} \log |R|$.
  For each $h \in \mathcal{H}$, construct the colored instance and run the deterministic $O^*(f(k))$-time algorithm for \prob{Exact Multicolored Matching}.
  If the original instance is a yes-instance, some $h \in \mathcal{H}$ is injective on $M \cap R$, so the algorithm accepts.
  Conversely, an accepting run produces a multicolored matching using exactly one edge of each color $1,\dots,k$, which are all red, hence a perfect matching $M$ with $|M \cap R| = k$.
  The overall running time is $e^{k + O(\log^2 k)} \log |R| \cdot f(k) \cdot |E|^{O(1)}$, giving a deterministic FPT algorithm for \textsc{Exact Matching}.
\end{proof}
}

\ifshortversion
\subsection{Main Reduction} \label{ssec:reduction}
\else
\subsection{Induced Subgraph Isomorphism} \label{ssec:reduction}
\fi

\ifshortversion
In this section, we give a randomized $O^*(k^{O(k)})$-time algorithm for \prob{ISI}.
\else
In this section, we will give an FPT algorithm for \prob{ISI}:
\isifpt*
\fi
We proceed as follows.
First, we define an extension variant and show that it is polynomial-time solvable when the remaining graphs are cliques.
Next, we show how a minimal cluster vertex deletion set of the pattern can be placed within a cluster vertex deletion set of the host.
Finally, we reduce the resulting extension instance to \textsc{Exact Multicolored Matching} (Lemma~\ref{lem:target}), yielding the claimed FPT algorithm.

\versionparagraph{Extension problem}

We define an auxiliary problem \textsc{Induced Subgraph Isomorphism Extension (ISIE)} as follows:
Given graphs $G_1$ (host), $G_2$ (pattern), and a partial induced embedding $\alpha \colon Z_2 \to Z_1$ with $Z_1 \subseteq V(G_1)$ and $Z_2 \subseteq V(G_2)$, decide whether there exists an induced embedding $\phi \colon V(G_2)\to V(G_1)$ that extends $\alpha$ (i.e., $\phi|_{Z_2}=\alpha$).

We give a polynomial-time algorithm for the case where $G_i - Z_i$ are cliques. This will be used in the reduction to \textsc{Exact Multicolored Matching}. 
The algorithm relies on a signature argument: we bucket vertices by their adjacency pattern to the fixed set $Z_i$ and compare the resulting counts.
\longonly{We will reuse this signature argument in the algorithm for \textsc{MCIS} as well.}

% We define the \textsc{Maximum Common Induced Subgraph Extension (MCISE)} problem as follows.
% Given graphs $G_1$, $G_2$, and $H'$, with partial induced embeddings $\alpha_i \colon Z_{H} \to Z_i$, where $Z_{H} \subseteq V(H')$ and $Z_{i} \subseteq V(G_i)$, for $i = 1, 2$, and an integer $t \in \mathbb{N}$, the goal is to find a $t$-vertex common induced subgraph $H$ extending $\alpha_i$, i.e., $V(H) \supseteq V(H')$ and there are induced embeddings $\phi_{i} \colon V(H) \to V(G_i)$ extending $\alpha_i$ for $i = 1, 2$.

\begin{lemma} \label{lem:clique-to-clique}
    \textsc{ISIE} is polynomial-time solvable when $G_i - Z_i$ for $i=1,2$ are both cliques.
\end{lemma}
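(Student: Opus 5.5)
Let $C_i := V(G_i)\setminus Z_i$, so $C_1$ and $C_2$ are cliques. First we check that $\alpha$ is a partial induced embedding; if it is not, we answer no. An extension $\phi$ must then map $C_2$ injectively into $V(G_1)\setminus \alpha(Z_2)$. This set is $C_1 \cup (Z_1\setminus \alpha(Z_2))$, and we may drop $Z_1\setminus\alpha(Z_2)$ up front. The reason is that the lemma is meant for the setting where $Z_1$ is exactly the set of used modulator vertices. If we want to be robust here instead, we can move the unused vertices of $Z_1$ into the candidate target set. In that case the target set is no longer a clique, and the argument below would have to be carried out over the clique $C_1$ only. I will assume $Z_1=\alpha(Z_2)$, which is how the lemma is applied: the ISI reduction uses only cliques with the modulators already fixed.

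For each $v\in C_2$ define its \emph{signature} $\sigma_2(v) := \{z\in Z_2 : vz\in E(G_2)\}$. For each $u\in C_1$ define $\sigma_1(u) := \alpha^{-1}(N(u)\cap Z_1)\subseteq Z_2$, so both signatures are subsets of $Z_2$. For an injective map $\phi$ extending $\alpha$ with $\phi(C_2)\subseteq C_1$, consider the three kinds of vertex pairs. Pairs inside $Z_2$ are handled by $\alpha$. Pairs inside $C_2$ are adjacent in $G_2$ and map to adjacent vertices, since $C_1$ is a clique. A pair $(v,z)$ with $v\in C_2$ and $z\in Z_2$ is preserved exactly when $\sigma_1(\phi(v))=\sigma_2(v)$. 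Hence $\phi$ is an induced embedding if and only if it is injective and signature-preserving.

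Such a $\phi$ exists if and only if, for every signature $S\subseteq Z_2$, the number of vertices $v\in C_2$ with $\sigma_2(v)=S$ is at most the number of vertices $u\in C_1$ with $\sigma_1(u)=S$. Vertices with the same signature are interchangeable, so when this holds we can match them arbitrarily class by class. Only signatures that actually occur matter, so there are at most $|C_2|$ classes, and all counts are computed by bucketing in polynomial time. Alternatively, we can phrase the test as bipartite matching between $C_2$ and $C_1$ with edges given by equal signatures, which is also polynomial.

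The main subtlety is the bookkeeping of which host vertices are available, namely whether vertices of $Z_1$ outside $\alpha(Z_2)$ may receive images of $C_2$. If we want the lemma in full generality, I would instead run a bipartite matching from $C_2$ to $V(G_1)\setminus\alpha(Z_2)$. The edges would be the signature-compatible pairs, and we would then need to enforce pairwise adjacency of the images. Enforcing that adjacency is straightforward only if the targets form a clique. So I would state and prove the lemma under the convention $Z_1=\alpha(Z_2)$, noting that the algorithm guarantees this by deleting unused modulator vertices beforehand.
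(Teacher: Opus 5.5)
Your proof is correct and follows the paper's argument. You bucket the vertices outside $Z_i$ by their adjacency signature to the fixed set, observe that an extension is exactly an injective signature-preserving map (adjacency among images being automatic since $C_1$ is a clique), and test the per-signature counting inequalities; your explicit convention $Z_1=\alpha(Z_2)$ is also what the paper assumes implicitly when it writes $r:=|Z_1|=|Z_2|$ and maps $V(G_2)\setminus Z_2$ into $V(G_1)\setminus Z_1$.
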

\begin{proof}
Let $r := |Z_1| = |Z_2|$ and order $Z_2 = (z_2^1,\dots,z_2^r)$; write $z_1^j := \alpha(z_2^j) \in Z_1$.
First check that $\alpha$ is an isomorphism between $G_2[Z_2]$ and $G_1[Z_1]$.
If this fails, no extension exists.

For $i\in\{1,2\}$ and $u\in V(G_i)\setminus Z_i$, define its $Z_i$-signature by
\[
\sigma_i(u) \in \{0,1\}^r,\qquad (\sigma_i(u))_j := \mathbf{1}[\{u, z_i^j\}\in E(G_i)].
\]
Bucket vertices by signature $t \in \{0,1\}^r$. We claim that an extension exists if and only if for every $t$,
\[
|\{v \in V(G_2)\setminus Z_2 : \sigma_2(v)=t\}|
\ \le\
|\{u \in V(G_1)\setminus Z_1 : \sigma_1(u)=t\}|.
\]
If the inequalities hold, greedily assign each $v$ of type $t$ to a distinct $u$ of the same type $t$, obtaining an injective map $\phi$ on $V(G_2)\setminus Z_2$.
By construction, $\phi$ preserves all adjacencies/non-adjacencies to $Z_2$, and since $G_2 - Z_2$ and $G_1 - Z_1$ are cliques, it preserves adjacencies within $V(G_2)\setminus Z_2$.
This yields an induced embedding extending $\alpha$.

Conversely, any induced embedding extending $\alpha$ must map each $v \in V(G_2)\setminus Z_2$ to some $u \in V(G_1)\setminus Z_1$ with the same signature (to preserve adjacencies to all $z_2^j$), implying the counting conditions above. 

The counts can be computed in polynomial time by sorting signature vectors. Thus, the algorithm runs in polynomial time.
\end{proof}

\versionparagraph{Mapping cluster deletion set}
We show that within an induced embedding, $G_{2}$ has a minimal cluster vertex deletion set whose image is contained in a cluster vertex deletion set $X_{1}$ of $G_{1}$:

\begin{lemma} \label{lem:modulator-to-modulator}
Let $G_1, G_2$ be graphs, and let $\phi \colon V(G_{2}) \to V(G_{1})$ be an induced embedding from $G_{2}$ to $G_{1}$.
If $X_{1}$ is a cluster vertex deletion set of $G_{1}$,
then there is a minimal cluster vertex deletion set $X_{2}$ of $G_{2}$ such that $\phi(X_{2}) \subseteq X_{1}$.
\end{lemma}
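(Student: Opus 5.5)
The plan is to take the preimage of $X_1$ and then shrink it to a minimal set. Define $Y := \phi^{-1}(X_1) = \{v \in V(G_2) : \phi(v) \in X_1\}$. Then $\phi(Y) \subseteq X_1$ by construction, so it suffices to show that $Y$ is a cluster vertex deletion set of $G_2$. We then pick any inclusion-minimal cluster vertex deletion set $X_2 \subseteq Y$; this exists because $Y$ is finite and the property is monotone under supersets. Since $X_2 \subseteq Y$, we still have $\phi(X_2) \subseteq \phi(Y) \subseteq X_1$.

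To show that $G_2 - Y$ is a cluster graph, we use the characterization that a graph is a cluster graph if and only if it contains no induced $P_3$. Suppose for contradiction that $u,v,w \in V(G_2)\setminus Y$ induce a path $u$--$v$--$w$ in $G_2$, i.e., $uv, vw \in E(G_2)$ and $uw \notin E(G_2)$. Because $\phi$ is an induced embedding, it is injective and preserves both adjacency and non-adjacency. Hence $\phi(u),\phi(v),\phi(w)$ are distinct and induce a $P_3$ in $G_1$. None of these images lies in $X_1$, since $u,v,w \notin Y$. So they form an induced $P_3$ in $G_1 - X_1$, contradicting that $X_1$ is a cluster vertex deletion set. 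Thus $Y$ is a cluster vertex deletion set of $G_2$.

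There is no real obstacle. The only points needing care are the following.

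\begin{itemize}
\item Non-adjacency must be preserved. This is where the \emph{induced} embedding matters: for a mere subgraph embedding, $\phi(u)\phi(w)$ could be an edge and the argument would fail.
\item Minimality is obtained simply by passing to a subset, and the containment $\phi(X_2)\subseteq X_1$ is preserved under taking subsets.
\end{itemize}
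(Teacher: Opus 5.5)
Your proposal is correct and follows essentially the same route as the paper: the paper likewise takes the preimage $\phi^{-1}(\phi(V(G_2)) \cap X_1)$, which is your $Y$, observes that it is a cluster vertex deletion set of $G_2$, and then shrinks it to a minimal one. The only difference is that the paper justifies the middle step via the isomorphism $G_2 \cong G_1[\phi(V(G_2))]$ and heredity of cluster graphs, whereas you spell out the same fact directly with the induced-$P_3$ characterization.
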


\begin{proof}
Since $G_{2}$ and $G_{1}[\phi(V(G_{2}))]$ are isomorphic
and $X_{1}$ is a cluster vertex deletion set of $G_{1}$,
the set $\phi(V(G_{2})) \cap X_{1}$ is a cluster vertex deletion set of $G_{1}[\phi(V(G_{2}))]$.
This implies that $\phi^{-1}(\phi(V(G_{2})) \cap X_{1})$ is a cluster vertex deletion set of $G_{2}$.
Let $X_{2}$ be a subset of $\phi^{-1}(\phi(V(G_{2})) \cap X_{1})$ that is a minimal cluster vertex deletion set of $G_{2}$.
This set $X_{2}$ shows the lemma as $\phi(X_{2}) \subseteq \phi(V(G_{2})) \cap X_{1} \subseteq X_{1}$.
\end{proof}

Our algorithm proceeds as follows.
Fix a cluster vertex deletion set $X_1$ of $G_1$ with $|X_1|\le k$.
Enumerate all minimal cluster vertex deletion sets $X_2$ of $G_2$ of size at most $|X_1|$ using the standard three-way branching on a $P_3$, which takes $O^*(3^{|X_1|})$ time.
By \Cref{lem:modulator-to-modulator}, if an induced embedding $\phi$ exists, then some enumerated $X_2$ satisfies $\phi(X_2)\subseteq X_1$.

For each such $X_2$, we try all $k^{O(k)}$ options of a set $Z_1\subseteq X_1$ of size $|X_2|$ and a bijection $\alpha\colon X_2\to Z_1$ that is an induced embedding between $G_2[X_2]$ and $G_1[Z_1]$ (rejecting a guess otherwise).
We further guess, from $2^{O(k)}$ candidates, a set $A = \phi(V(G_2)\setminus X_2)\cap X_1$, which is the part of the image that lands in $X_1\setminus Z_1$.
Discard the guess if $G_1[A]$ is not a cluster graph.
This is safe because $G_1[A]$ is a cluster graph: $A$ lies in $\phi(V(G_2)\setminus X_2)$ and $G_2 - X_2$ is a cluster graph.
We may assume, for the sake of analysis, that $A$ is inclusion-wise minimal, i.e., there is no induced embedding with $\phi(V(G_2) \setminus X_2) \cap X_1 \subset A$.

Fix such a choice of $X_2$, $Z_1$, $\alpha$, and $A$. It remains to find an induced embedding extending~$\alpha$.

\begin{lemma} \label{lem:target}
With the guesses above (the partial induced embedding $\alpha \colon X_2 \to Z_1$ and the set $A \subseteq X_1 \setminus Z_1$), we can in randomized $O^*(2^k)$ time either output an induced embedding $\phi \colon V(G_2) \to V(G_1)$ extending $\alpha$ with $A = \phi(V(G_2)\setminus X_2)\cap X_1$, or correctly conclude that none exists.
\end{lemma}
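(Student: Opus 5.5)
Given the guesses $\alpha\colon X_2\to Z_1$ and $A$, we delete $X_1\setminus(Z_1\cup A)$ from $G_1$; the vertices there are unused by any embedding consistent with the guess. Set $F:=Z_1$, the fixed part. Let $\mathcal{C}_2$ be the cliques of $G_2-X_2$, $\mathcal{C}_1$ the cliques of $G_1-X_1$, and $\mathcal{S}_1$ the cliques of $G_1[A]$. Note $|\mathcal{S}_1|\le|A|\le k$.

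\textbf{Structure of an embedding.} Fix an induced embedding $\phi$ extending $\alpha$ with $\phi(V(G_2)\setminus X_2)\cap X_1=A$. The set $\phi(V(G_2)\setminus X_2)$ induces a cluster graph in $G_1$ and avoids $Z_1$. Each $C_2\in\mathcal{C}_2$ is a clique, so its image meets at most one $S_1\in\mathcal{S}_1$ and at most one $C_1\in\mathcal{C}_1$, since different components are nonadjacent. Distinct cliques of $\mathcal{C}_2$ are anticomplete, so their images lie in distinct components of the cluster graph $G_1[\phi(V(G_2)\setminus X_2)]$. Hence each $S_1$ and each $C_1$ meets the image of at most one $C_2$. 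Because $\phi(V(G_2)\setminus X_2)\cap X_1=A$ exactly, every $S_1$ is hit by exactly one $C_2$. Adjacency to $X_2$ is controlled through $\alpha$. Adjacencies between images of different $C_2$'s, and between $C_2$ and parts of other cliques, are automatically absent: two vertices in distinct $\mathcal{C}_1$-cliques are nonadjacent, and two in distinct $\mathcal{S}_1$-cliques are nonadjacent. The only dangerous pairs are a vertex in $S_1$ and a vertex in some $C_1$ used by a \emph{different} $C_2$. I will handle this in the edge definition.

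\textbf{Instance of Exact Multicolored Matching.} Build a bipartite multigraph with vertex sets $\mathcal{C}_2$ and $\mathcal{C}_1$, plus one extra vertex $s_{S_1}$ for each $S_1\in\mathcal{S}_1$. The colors are $\mathcal{S}_1$ together with $0$. The edges are as follows.
\begin{enumerate}
\item For $C_2$ and $C_1$, add a color-$0$ edge if ISIE with $G_2[X_2\cup C_2]$, $G_1[Z_1\cup C_1']$ and $\alpha$ is feasible (\Cref{lem:clique-to-clique}). Here $C_1'$ is the set of vertices of $C_1$ with no neighbour in $A$.
\item For $C_2$, $C_1$ and $S_1$, add an edge $C_2C_1$ of color $S_1$ if $C_2$ embeds into $S_1\cup C_1''$ extending $\alpha$. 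Here $C_1''$ is the set of vertices of $C_1$ with no neighbour in $A\setminus S_1$. The host part must be a clique, so we also require that the chosen vertices of $C_1''$ are complete to the chosen vertices of $S_1$.
\item For $C_2$ and $S_1$, add an edge $C_2\,s_{S_1}$ of color $S_1$ if $C_2$ embeds into $S_1$ alone.
\end{enumerate}

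\textbf{The main obstacle: feasibility of a mixed pair.} The hard step is deciding the case-2 edges in polynomial time. There the host $S_1\cup C_1''$ is not a clique, and both the $S_1$-part and the $C_1''$-part must be used fully consistently. My plan is to put all of $S_1$ into the fixed set and guess nothing: $S_1$ must be entirely used, because $A$ is exact. Then we need a subset $T\subseteq C_2$ with $|T|=|S_1|$ mapped bijectively onto $S_1$, matching signatures to $Z_1$. The remaining vertices map into the common neighbourhood of $S_1$ in $C_1''$, again with matching signatures. This is a bipartite $b$-matching or flow between signature classes, which is polynomial. I expect the minimality of $A$ also lets us simply use $C_1''$ restricted to vertices complete to $S_1$, so a counting argument like the one in \Cref{lem:clique-to-clique} suffices.

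\textbf{Completing the reduction.} Pad each unmatched $C_1$ and $s_{S_1}$ vertex appropriately, or use the padding inside the proof of \Cref{prop:mpm}, and require every $\mathcal{C}_2$ vertex to be matched. This is an Exact Multicolored Matching instance in which each color $S_1$ appears exactly once and all of $\mathcal{C}_2$ is covered; we encode the covering requirement with weights ($w=1$ on all edges, $t=|\mathcal{C}_2|$ after making the graph bipartite). \Cref{prop:mpm} solves it in randomized $O^*(2^k)$ time.

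\textbf{Correctness.} A solution matching yields the embedding by combining the local embeddings. The checks above ensure that cross pairs are nonadjacent, so the union is induced. Conversely, the structure described above turns any $\phi$ into such a matching.
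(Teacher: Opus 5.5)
Your proposal is correct and is essentially the paper's proof. It uses the same bipartite graph with $\mathcal{S}_1\cup\mathcal{C}_1$ on one side and $\mathcal{C}_2$ on the other, colors $\{0\}\cup\mathcal{S}_1$, the same three edge types decided by the signature argument of \Cref{lem:clique-to-clique}, and unit weights with target $|\mathcal{C}_2|$. Your restriction of the host sets to vertices $v\in C_1$ with $N_A(v)=\emptyset$ or $N_A(v)=S_1$ plays the role of the paper's explicit reduction rule, which deletes all other vertices of $B$.

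The only real difference is that you build full coverage of $S_1$ into the edge test, as a per-signature lower bound. This gives $A=\phi(V(G_2)\setminus X_2)\cap X_1$ directly rather than through the paper's minimality assumption on $A$. For this to work, you must impose the same requirement (that $C_2$ is mapped onto all of $S_1$) on the edges $C_2\,s_{S_1}$ as well. Also, it is this exactness, not minimality of $A$, that justifies restricting to vertices complete to $S_1$.
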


We first show how \Cref{lem:target} implies \Cref{thm:isi-fpt}, and then prove the lemma.

\begin{proof}[Proof of \Cref{thm:isi-fpt} assuming \Cref{lem:target}]
  %Fix a cluster vertex deletion set $X_1$ of $G_1$ with $|X_1|\le k$.
  %Enumerate all minimal cluster vertex deletion sets $X_2$ of $G_2$ of size at most $|X_1|$ using the standard three-way branching on a $P_3$, which takes $O^*(3^{|X_1|})$ time.
  %By \Cref{lem:modulator-to-modulator}, if an induced embedding $\phi$ exists, then some enumerated $X_2$ satisfies $\phi(X_2)\subseteq X_1$.

  %For each such $X_2$ we try all choices of a set $Z_1\subseteq X_1$ of size $|X_2|$ and a bijection $\alpha\colon X_2\to Z_1$ that is an induced embedding between $G_2[X_2]$ and $G_1[Z_1]$ (rejecting a guess otherwise).
  %Let $A := X_1 \setminus Z_1$ and $B := V(G_1)\setminus X_1$; discard the guess if $G_1[A]$ is not a cluster graph.
  Applying \Cref{lem:target} for each guess $A$ (and $B = V(G_1)\setminus X_1$) runs in randomized $O^*(2^{|A|}) \le O^*(2^{|X_1|})$ time.
  When an induced embedding $\phi$ exists, the enumeration produces the required $X_2$, and one branch chooses $Z_1=\phi(X_2)$ and the corresponding $\alpha$. Then \Cref{lem:target} reconstructs $\phi$.
  Conversely, if \Cref{lem:target} outputs an induced embedding for some guess, it is by definition an induced embedding of $G_2$ into $G_1$ extending $\alpha$, hence a valid solution to the original instance.
  The total running time is $O^*(3^{|X_1|}\cdot |X_1|^{O(|X_1|)}\cdot 2^{|X_1|}) = O^*(k^{O(k)})$, as claimed.
\end{proof}

We now prove \Cref{lem:target} by reducing the remaining instance to \prob{Exact Multicolored Matching}, which yields a randomized $O^*(2^{k})$-time algorithm.

\begin{proof}[Proof of \Cref{lem:target}]

Let $B = V(G_1)\setminus X_1$.
Since $G_1[A]$, $G_1[B]$, and $G_2 - X_2$ are cluster graphs, let $\mathcal{S}_1$, $\mathcal{C}_1$, and $\mathcal{C}_2$ denote their clique components, respectively.
Let $\ell =|\mathcal{S}_1|$. Note that $\ell \le |A| \le k$.
We now give a randomized $O^*(2^{\ell})$-time algorithm.

We first apply a reduction rule on $G_1$.
For $v \in B$, let $N_A(v) = N_{G_1}(v) \cap A$.

\begin{claim}\label{claim:rr}
For $v \in B$,
if either
\begin{enumerate}
    \item there exists a clique $S_1 \in \mathcal{S}_1$ and vertices $w,w'\in S_1$ with $w\notin N_A(v)$ and $w'\in N_A(v)$, or
    \item there exist $w\in N_A(v)\cap S_1$ and $w'\in N_A(v)\cap S_1'$ for two distinct cliques $S_1,S_1'\in\mathcal{S}_1$,
\end{enumerate}
then $v \notin \phi(V(G_2))$ for every induced embedding $\phi$ with $A \subseteq \phi(V(G_2))$.
\end{claim}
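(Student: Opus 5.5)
We argue by contradiction: assume $v = \phi(u)$ for some $u \in V(G_2)$, where $\phi$ is an induced embedding with $A \subseteq \phi(V(G_2))$. Then $A \cup \{v\} \subseteq \phi(V(G_2))$, so $G_1[A \cup \{v\}]$ is an induced subgraph of $G_1[\phi(V(G_2))] \cong G_2$. The plan is to show that, in either case, $G_1[A\cup\{v\}]$ contains a specific induced subgraph that cannot occur in the image.

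\emph{Case 1.} We have $w' \in N_A(v)$ and $w \notin N_A(v)$ with $w,w' \in S_1$, so $ww'$ is an edge. Hence $w - w' - v$ is an induced $P_3$ in $G_1[\phi(V(G_2))]$. Since $A \subseteq \phi(V(G_2)\setminus X_2)$ (as $A \cap Z_1 = \emptyset$ and $\phi(X_2) = Z_1$), the vertices $w,w'$ are images of vertices of $G_2 - X_2$. For $v$, there are two subcases. If $u \notin X_2$, the whole $P_3$ lies in $\phi(V(G_2)\setminus X_2)$, contradicting that $G_2 - X_2$ is a cluster graph. If $u \in X_2$, then $v \in Z_1 \subseteq X_1$, contradicting $v \in B = V(G_1)\setminus X_1$. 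So this subcase cannot arise, and we should note explicitly that $v\in B$ forces $u\notin X_2$.

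\emph{Case 2.} Here $w \in S_1$ and $w' \in S_1'$ lie in different components of $G_1[A]$, so they are nonadjacent. Both are adjacent to $v$, so $w - v - w'$ is an induced $P_3$ inside $\phi(V(G_2)\setminus X_2)$. This again contradicts $G_2 - X_2$ being a cluster graph.

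The only subtle point is the observation that every vertex of $B$ in the image comes from $V(G_2)\setminus X_2$, and that $A$ is likewise contained in $\phi(V(G_2)\setminus X_2)$ by the definition of $A$ as $\phi(V(G_2)\setminus X_2)\cap X_1$. With these, both cases reduce to locating an induced $P_3$ in $G_1[\phi(V(G_2)\setminus X_2)] \cong G_2 - X_2$. Since the claim only assumes $A \subseteq \phi(V(G_2))$, we need $A \cap Z_1 = \emptyset$ (true by construction, as $A \subseteq X_1\setminus Z_1$) together with $\phi$ extending $\alpha$; this gives $\phi^{-1}(A) \cap X_2 = \emptyset$. I take this to be the intended setting, and I expect this bookkeeping to be the main care point rather than any real obstacle.
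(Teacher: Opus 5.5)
Your proposal is correct and follows essentially the same route as the paper: assume $v$ is in the image, note that the preimages of $v$ and of all of $A$ lie in $V(G_2)\setminus X_2$, and exhibit the induced $P_3$ $(w,w',v)$ in case 1 or $(w,v,w')$ in case 2 inside $G_1[\phi(V(G_2)\setminus X_2)] \cong G_2 - X_2$. Your explicit justification that these preimages avoid $X_2$ uses that $\phi$ extends $\alpha$, so $\phi(X_2)=Z_1\subseteq X_1$, and that $A\cap Z_1=\emptyset$. The paper's proof asserts this without comment, so your version makes that step explicit.
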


\begin{proof}[Proof of \Cref{claim:rr}]
Suppose, for contradiction, that $v \in \phi(V(G_2))$.
Its preimage lies in $V(G_2)\setminus X_2$.
Since any $w \in A$ also lies in $\phi(V(G_2))$, its preimage also lies in $V(G_2)\setminus X_2$.

In case (1), pick $w,w'\in S_1$ with $w\notin N_A(v)$ and $w'\in N_A(v)$.
Since $S_1$ is a clique, $ww'\in E(G_1)$; moreover $w'v\in E(G_1)$ and $wv\notin E(G_1)$.
Thus $(w,w',v)$ induces a $P_3$ inside $G_1[\phi(V(G_2)\setminus X_2)]$, so the corresponding preimages induce a $P_3$ in $G_2 - X_2$, contradicting that $G_2 - X_2$ is a cluster graph.

In case (2), take $w\in S_1$ and $w'\in S_1'$ with $S_1\neq S_1'$ and $w,w'\in N_A(v)$.
Then $wv, w'v\in E(G_1)$ while $ww'\notin E(G_1)$ (different cliques), so $(w,v,w')$ induces a $P_3$ in $G_1[\phi(V(G_2)\setminus X_2)]$, yielding the same contradiction.
\end{proof}

We exhaustively delete vertices $v\in B$ that satisfy one of the conditions of \Cref{claim:rr}.
Then, for every $v\in B$, either $N_A(v)=\emptyset$ or there exists a clique $S_1\in\mathcal{S}_1$ with $N_A(v)=S_1$.
For $S_1 \in \mathcal{S}_1$ and $C_1 \in \mathcal{C}_1$, define $N_{C_1}(S_1) = \{ v \in C_1 \mid N_A(v) = S_1 \}$, and $C_1^{\circ} = C_1 \setminus \bigcup_{S_1 \in \mathcal{S}_1} N_{C_1}(S_1)$.

\begin{claim} \label{claim:clique-map}
    Fix a clique $C_2 \in \mathcal{C}_2$. 
    Then, $\phi(C_2)$ is contained in one of the following sets: (i) $C_1^{\circ}$ for some $C_1 \in \mathcal{C}_1$, (ii) $S_1$ for some $S_1 \in \mathcal{S}_1$, or (iii) $S_1 \cup N_{C_1}(S_1)$ for some $C_1 \in \mathcal{C}_1$ and $S_1 \in \mathcal{S}_1$.
\end{claim}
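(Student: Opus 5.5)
The plan is to look at the image $\phi(C_2)$ inside $G_1$. We use three facts. $\phi(C_2)$ is a clique in $G_1$. It avoids $Z_1$, since $\phi$ is injective and $\phi(X_2)=Z_1$. It lies in $A \cup B$, because $\phi(V(G_2)\setminus X_2)\cap X_1 = A$. We then read off which cliques of $G_1[A]$ and $G_1[B]$ it can meet. Here $\phi$ is an induced embedding extending $\alpha$ with $A = \phi(V(G_2)\setminus X_2)\cap X_1$, as fixed in \Cref{lem:target}. We also note that every vertex of $\phi(C_2)\cap B$ survived the reduction rule of \Cref{claim:rr}, since $A\subseteq \phi(V(G_2))$. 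Hence each such vertex $v$ has either $N_A(v)=\emptyset$ or $N_A(v)=S_1$ for a unique $S_1\in\mathcal{S}_1$.

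First, $\phi(C_2)\cap A$ is a clique in the cluster graph $G_1[A]$, so it lies inside a single $S_1\in\mathcal{S}_1$. Likewise $\phi(C_2)\cap B$ is a clique in the cluster graph $G_1[B]$, so it lies inside a single $C_1\in\mathcal{C}_1$. We now split into cases.

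If $\phi(C_2)\cap A=\emptyset$, then $\phi(C_2)\subseteq C_1$. If every vertex has $N_A(v)=\emptyset$, we are in case (i), $\phi(C_2)\subseteq C_1^{\circ}$. Otherwise some vertex $v\in\phi(C_2)$ has $N_A(v)=S_1\neq\emptyset$. Pick $w\in S_1$; it lies in $A\subseteq\phi(V(G_2)\setminus X_2)$, so its preimage is a vertex of the cluster graph $G_2-X_2$. Now $\phi^{-1}(w)$ is adjacent to $\phi^{-1}(v)\in C_2$, so $\phi^{-1}(w)$ belongs to the component $C_2$. Then $w\in\phi(C_2)\cap A$, a contradiction.

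If $\phi(C_2)\cap A\neq\emptyset$, it meets some $S_1$. If $\phi(C_2)\cap B=\emptyset$, we are in case (ii). Otherwise take $v\in\phi(C_2)\cap C_1$ and $w\in\phi(C_2)\cap S_1$. They are adjacent, so $w\in N_A(v)$, and by the reduction $N_A(v)=S_1$. Thus every vertex of $\phi(C_2)\cap B$ lies in $N_{C_1}(S_1)$, which gives case (iii).

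The main subtlety is the argument for case (i), namely ruling out a vertex of $C_1$ with nonempty $N_A$ when $\phi(C_2)$ misses $A$. This step uses that all of $A$ is in the image of the cluster part $G_2-X_2$, so that adjacency to $A$ forces the corresponding $A$-vertex into the same component $C_2$. Everything else is direct from the cluster structure and the reduction rule.
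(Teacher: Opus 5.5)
Your proof is correct and follows essentially the same route as the paper. You split on whether $\phi(C_2)$ meets $A$. In the disjoint case you rule out $A$-neighbors because every vertex of $A$ is the image of a vertex of $G_2-X_2$, so an adjacent one would have to lie in $\phi(C_2)$ itself. Otherwise you combine the cluster structure of $G_1[A]$ and $G_1[B]$ with the reduction rule of \Cref{claim:rr} to land in $S_1\cup N_{C_1}(S_1)$. Your version is somewhat more explicit than the paper's: it justifies $\phi(C_2)\subseteq A\cup B$, treats case (ii) separately, and spells out why $N_A(v)=S_1$.
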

\begin{proof}[Proof of \Cref{claim:clique-map}]
    Suppose that $\phi(C_2) \cap B$ is contained in $C_1 \in \mathcal{C}_1$ (or $\phi(C_2) \cap B = \emptyset$).

    If $\phi(C_2)$ is disjoint from $A$, then $\phi(C_2)\subseteq C_1$. 
    Since different cliques of $\mathcal{C}_2$ are pairwise anticomplete in $G_2\setminus Z_2$, their images must be pairwise anticomplete in $G_1$, so $\phi(C_2)$ cannot contain a vertex adjacent to $A$ (otherwise it would create edges to the image covering $A$), implying that $\phi(C_2) \subseteq C_1^{\circ}$.

    If $\phi(C_2)$ meets $A$, it must be contained in some $S_1 \in \mathcal{S}_1$ on the $A$-side.
    Since every vertex of $\phi(C_2)\cap B$ must then be adjacent to all of $S_1$, $\phi(C_2)\subseteq S_1 \cup N_{C_1}(S_1)$ for some pair $(S_1,C_1)$.
\end{proof}

We reduce to an instance of \prob{Exact Multicolored Matching} (all edge weights are $1$).
The target matching size is $t := |\mathcal{C}_2|$.
Build an edge-colored bipartite graph $H$ with left side $\mathcal{S}_1 \cup \mathcal{C}_1$, right side $\mathcal{C}_2$, and color set $\{0\}\cup \mathcal{S}_1$ (color $0$ plus one color per $S_1\in\mathcal{S}_1$).
Add edges as follows:
\begin{itemize}
    \item For each $S_1\in\mathcal{S}_1$ and $C_2\in\mathcal{C}_2$, add an edge $(S_1,C_2)$ of color $S_1$ if and only if $C_2$ can be embedded into $S_1$ extending $\alpha$.
    \item For each $C_1\in\mathcal{C}_1$ and $C_2\in\mathcal{C}_2$, add an edge $(C_1,C_2)$ of color $0$ if and only if $C_2$ can be embedded into $C_1^{\circ}$ extending $\alpha$.
    \item For each $C_1\in\mathcal{C}_1$, $S_1\in\mathcal{S}_1$, and $C_2\in\mathcal{C}_2$, add an edge $(C_1,C_2)$ of color $S_1$ if and only if $C_2$ can be embedded into $S_1\cup N_{C_1}(S_1)$ extending $\alpha$.
\end{itemize}

By \Cref{lem:clique-to-clique}, the instance can be constructed in polynomial time.
We prove correctness below, i.e., there is a desired induced embedding if and only if there is a multicolored matching with $|\mathcal{C}_2|$ edges.

($\Rightarrow$)
Let $\phi$ be an induced embedding. We build a matching $M$ in $H$ as follows. For each $C_2 \in \mathcal{C}_2$, by \Cref{claim:clique-map} exactly one of the following holds:
\begin{enumerate}[(i)]
    \item $\phi(C_2)\subseteq C_1^{\circ}$ for some $C_1\in\mathcal{C}_1$: add the edge $(C_1,C_2)$ of color $0$ to $M$.
    \item $\phi(C_2)\subseteq S_1$ for some $S_1\in\mathcal{S}_1$: add the edge $(S_1,C_2)$ of color $S_1$ to $M$.
    \item $\phi(C_2)\subseteq S_1\cup N_{C_1}(S_1)$ for some $S_1\in\mathcal{S}_1$ and $C_1\in\mathcal{C}_1$: add the edge $(C_1,C_2)$ of color $S_1$ to $M$.
\end{enumerate}
Cliques in $\mathcal{C}_2$ are pairwise anticomplete in $G_2\setminus Z_2$, hence their images cannot meet the same target clique in $G_1$.
Consequently, no two edges of $M$ share a left endpoint, so $M$ is a matching in $H$.
For every $S_1\in\mathcal{S}_1$, there is a unique $C_2 \in \mathcal{C}_2$ such that $S_1 \subseteq \phi(C_2)$, and thus $M$ contains exactly one edge of color $S_1$.

($\Leftarrow$)
Let $M$ be a perfect matching in $H$ that uses each color $S_1\in\mathcal{S}_1$ exactly once. For each edge $e\in M$ incident to a right vertex $C_2\in\mathcal{C}_2$, do:
\begin{itemize}
  \item if $e = (S_1,C_2)$ and its color is $S_1$, embed $C_2$ into $S_1$;
  \item if $e = (C_1,C_2)$ and its color is $0$, embed $C_2$ into $C_1^{\circ}$;
  \item if $e = (C_1,C_2)$ and its color is $S_1$, embed $C_2$ into $S_1\cup N_{C_1}(S_1)$.
\end{itemize}
By construction these embeddings extending $\alpha$ indeed exist.
Because $M$ is a matching and uses each color $S_1\in\mathcal{S}_1$ exactly once, no two chosen edges share a left endpoint and no two $C_2$'s are mapped using the same $S_1$ or the same $C_1$; consequently the images of distinct $C_2$'s are vertex-disjoint.

It remains to check that the union is an induced embedding, i.e., there are no unwanted edges across the images of different $C_2$'s:
(i) vertices mapped to $C_1^{\circ}$ have no neighbors in $A$, and
(ii) if an image uses $S_1 \cup N_{C_1}(S_1)$, then since color $S_1$ is used exactly once and the matched $C_1$ is unique, no other image meets $S_1$ or $C_1$, and vertices outside $N_{C_1}(S_1)$ are nonadjacent to $S_1$ by definition of $N_{C_1}(S_1)$. Hence no new edges appear across images, and we obtain an induced embedding $\phi$ extending $\alpha$.

Finally, by the minimality of $A$, we have $A = \phi(V(G_2) \setminus X_2) \cap X_1$.
\end{proof}

\longonly{
\subsection{Maximum Common Induced Subgraph} \label{ssec:mcis}

\begin{sidewaysfigure}
    \begin{tikzpicture}[
    scale=0.98,
    font=\small,
    clique/.style={
        draw, rounded corners=5pt, line width=0.9pt,
        minimum width=2.8cm, minimum height=0.8cm, inner sep=7pt
    },
    vtx/.style={
        draw, ellipse, line width=0.9pt,
        minimum width=10mm, minimum height=6mm, inner sep=0pt
    },
    container/.style={
        draw, rounded corners=10pt, line width=0.9pt, inner sep=6pt
    },
    conn/.style={line width=1.8pt},
    dconn/.style={line width=1.8pt, dashed}
    ]

    % -------------------------
    % J1 cliques (spread out more)
    % -------------------------
    \node[clique] (T10) at (-8,2) {};
    \node[anchor=south] (T10lab) at ($(T10.north)+(0,0.18)$) {$\mathcal T_{1,0}$};
    \node[vtx] (T10a) at ($(T10.south)+(-0.60,0.4)$) {};
    \node[vtx] (T10b) at ($(T10.south)+( 0.60,0.4)$) {};

    \node[clique] (T11) at (-5,2) {};
    \node[anchor=south] (T11lab) at ($(T11.north)+(0,0.18)$) {$\mathcal T_{1,1}$};
    \node[vtx] (T11a) at ($(T11.south)+(-0.60,0.4)$) {};
    \node[vtx] (T11b) at ($(T11.south)+( 0.60,0.4)$) {};

    \node[clique] (T13) at (-2,2) {};
    \node[anchor=south] (T13lab) at ($(T13.north)+(0,0.18)$) {$\mathcal T_{1,3}$};
    \node[vtx] (T13a) at ($(T13.south)+(-0.60,0.4)$) {};
    \node[vtx] (T13b) at ($(T13.south)+( 0.60,0.4)$) {};

    % -------------------------
    % J2 cliques (spread out more)
    % -------------------------
    \node[clique] (T23) at (2,2) {};
    \node[anchor=south] (T23lab) at ($(T23.north)+(0,0.18)$) {$\mathcal T_{2,3}$};
    \node[vtx] (T23a) at ($(T23.south)+(-0.60,0.4)$) {};
    \node[vtx] (T23b) at ($(T23.south)+( 0.60,0.4)$) {};

    \node[clique] (T22) at (5,2) {};
    \node[anchor=south] (T22lab) at ($(T22.north)+(0,0.18)$) {$\mathcal T_{2,2}$};
    \node[vtx] (T22a) at ($(T22.south)+(-0.60,0.4)$) {};
    \node[vtx] (T22b) at ($(T22.south)+( 0.60,0.4)$) {};

    \node[clique] (T20) at (8,2) {};
    \node[anchor=south] (T20lab) at ($(T20.north)+(0,0.18)$) {$\mathcal T_{2,0}$};
    \node[vtx] (T20a) at ($(T20.south)+(-0.60,0.4)$) {};
    \node[vtx] (T20b) at ($(T20.south)+( 0.60,0.4)$) {};

    % -------------------------
    % K cliques
    % -------------------------
    \node[clique] (D1) at (-5,0) {};
    \node[anchor=north] (D1lab) at ($(D1.south)-(0,0.18)$) {$\mathcal D_1$};
    \node[vtx] (D1a) at ($(D1.south)+(-0.60,0.40)$) {};
    \node[vtx] (D1b) at ($(D1.south)+( 0.60,0.40)$) {};

    \node[clique] (D3) at (0,0) {};
    \node[anchor=west] (D3lab) at ($(D3.east)+(0.18,-0.09)$) {$\mathcal D_3$};
    \node[vtx] (D3a) at ($(D3.south)+(-0.60,0.40)$) {};
    \node[vtx] (D3b) at ($(D3.south)+( 0.60,0.40)$) {};

    \node[clique] (D2) at (5,0) {};
    \node[anchor=north] (D2lab) at ($(D2.south)-(-0,0.18)$) {$\mathcal D_2$};
    \node[vtx] (D2a) at ($(D2.south)+(-0.60,0.40)$) {};
    \node[vtx] (D2b) at ($(D2.south)+( 0.60,0.40)$) {};

    \node[clique, minimum width=4.6cm] (D0) at (0,-1.4) {};
    \node[anchor=west] (D0lab) at ($(D0.east)+(0.18,-0.09)$) {$\mathcal D_0$};
    \node[vtx] (D0a) at ($(D0.south)+(-1.65,0.40)$) {};
    \node[vtx] (D0b) at ($(D0.south)+(-0.55,0.40)$) {};
    \node[vtx] (D0c) at ($(D0.south)+( 0.55,0.40)$) {};
    \node[vtx] (D0d) at ($(D0.south)+( 1.65,0.40)$) {};

    % -------------------------
    % Containers (fit includes the label nodes so labels are inside J1/J2/K)
    % -------------------------
    \node[container, fit=(T10)(T11)(T13)(T10lab)(T11lab)(T13lab)] (Jone) {};
    \node[anchor=south] at ($(Jone.north)+(-3,0.20)$) {$J_1 (=\phi_1^{-1}(X_1)\setminus I)$};

    \node[container, fit=(T23)(T22)(T20)(T23lab)(T22lab)(T20lab)] (Jtwo) {};
    \node[anchor=south] at ($(Jtwo.north)+(3,0.20)$) {$J_2(=\phi_2^{-1}(X_2)\setminus I)$};

    \node[container, fit=(D1)(D3)(D2)(D0)(D1lab)(D3lab)(D2lab)(D0lab)] (Kbox) {};
    \node[anchor=north west] at ($(Kbox.south west)+(-0.05,-0.15)$) {$K$};

    \node[container, minimum width=10cm, minimum height=1.2cm] (Ibox) at (0, 5.5) {};
    \node[anchor=south] at ($(Ibox.north)+(0,0.20)$) {$I (=\phi_1^{-1}(X_1)\cap \phi_2^{-1}(X_2))$};

    % -------------------------
    % Thick connections BETWEEN the ellipses (not rectangles)
    % -------------------------
    % D1 <-> T_{1,1}^1
    \draw[conn] (D1a) -- (T11a);
    \draw[conn] (D1b) -- (T11b);

    % D2 <-> T_{2,2}^1
    \draw[conn] (D2a) -- (T22a);
    \draw[conn] (D2b) -- (T22b);

    % D3 <-> T_{1,3}^1
    \draw[conn] (D3a) -- (T13a);
    \draw[conn] (D3b) -- (T13b);

    % D3 <-> T_{2,3}^1 (crossing like the sketch)
    \draw[conn] (D3a) -- (T23a);
    \draw[conn] (D3b) -- (T23b);

    \draw[dconn] ($(Jone.north east)-(3,0)$) .. controls (-1,4.30) and (1,4.30) .. ($(Jtwo.north west)+(3,0)$);

    \draw[dconn] (Jtwo.north) -- (Ibox) -- (Jone.north);
    \draw[dconn] (Ibox.west) .. controls (-13,4.5) and (-13,0) .. (Kbox.west);

    \end{tikzpicture}

    \caption{The structure of $H$: each ellipse is a clique in the indicated family, thick segments denote complete joins between the two cliques, and the dashed lines indicate that edges in between are arbitrary.} \label{fig:mcis}
\end{sidewaysfigure}

In this section, we give an FPT algorithm for \textsc{MCIS}.

\mcisfpt*

Let $X_i$ be the cluster deletion set of $G_i$ for each $i \in \{ 1, 2 \}$.
Consider a hypothetical solution $H$ with induced embeddings $\phi_i \colon V(H) \to V(G_i)$.
In contrast to the ISI algorithm, where the pattern graph is fixed and its modulator can be mapped into the host modulator (\Cref{lem:modulator-to-modulator}), the common subgraph $H$ is unknown and must embed into both $G_1$ and $G_2$.
We therefore guess which vertices of $H$ lie in $X_1$ and in $X_2$ and how these parts connect, and then maximize the remaining size by selecting compatible clique components.
As in the ISI case, this yields a reduction to multicolored matching, but here we need the weighted variant rather than the unweighted one.

\versionparagraph{Initial guesses}
We can guess the subsets $\phi_1(V(H)) \cap X_1$ and $\phi_2(V(H)) \cap X_2$ from $2^{O(k)}$ possible options.
By deleting, for each $i \in \{1,2\}$, the vertices in $X_i \setminus \phi_i(V(H))$, we may henceforth assume that $X_i \subseteq \phi_i(V(H))$.

Let $I := \phi_1^{-1}(X_1) \cap \phi_2^{-1}(X_2)$ denote the set of vertices of $H$ that are mapped into $X_i$ by both embeddings ($i=1,2$).
We can guess the images $\phi_i(I) \subseteq X_i$ from $2^{O(k)}$ options and the restrictions $\phi_i|_{I}$ from $k^{O(k)}$ options. Given $\phi_i|_{I}$, we can check in polynomial time that it is an induced embedding of $H[I]$ into $G_i[\phi_i(I)]$.

Define $X_i' := X_i \setminus \phi_i(I)$ for $i=1,2$, and set $J_i := \phi_i^{-1}(X_i')$ (so $|J_i|=|X_i'|\le k$ by injectivity).
As $I\cup J_i=\phi_i^{-1}(X_i)$, we already know the structure of $H[I\cup J_i]$ for each $i\in \{1,2\}$.
We guess the edges between $J_1$ and $J_2$ in $2^{|J_1|\cdot |J_2|}\leq 2^{O(k^2)}$ time; then, we know the complete structure of $H[I\cup J_1\cup J_2]$.

\versionparagraph{Structure of $H$}

%We define $X_i' := X_i \setminus \phi_i(I)$ for $i=1,2$, and set $J_i := \phi_i^{-1}(X_i')$ (so $|J_i|=|X_i'|\le k$ by injectivity). 
Since $\phi_i$ is an induced embedding and $G_i - X_i$ is a cluster graph, it follows that
\[
H\bigl[V(H)\setminus (I \cup J_i)\bigr] \cong G_i\bigl[\phi_i(V(H))\setminus X_i\bigr]
\]
is also a cluster graph. Equivalently, with $K := V(H)\setminus (I \cup J_1 \cup J_2)$, both $H[K \cup J_1]$ and $H[K \cup J_2]$ are cluster graphs. Note that $J_i = \phi_i^{-1}(X_i)\setminus I = \phi_i^{-1}(X_i)\setminus \phi_{3-i}^{-1}(X_{3-i})$, hence $J_1$ and $J_2$ are disjoint.

Let $\mathcal{D}$ be the family of cliques of $H[K]$ and, for $i\in\{1,2\}$, let $\mathcal{T}_i$ be the family of cliques of $H[J_i]$.
Since $H[K\cup J_i]$ is a cluster graph, every connected component of $H[K\cup J_i]$ is
(i) a clique $D \in \mathcal{D}$, 
(ii) a clique $T\in\mathcal{T}_i$, or
(iii) a union $D\cup T$ with $D\in\mathcal{D}$ and $T\in\mathcal{T}_i$.
Moreover, for each $D\in\mathcal{D}$, there is at most one $T\in\mathcal{T}_i$ such that $D\cup T$ is a connected component (otherwise $H[K \cup J_i]$ is not a cluster graph).

In particular, we can partition $\mathcal{D}$ into four parts \(\mathcal{D} =\mathcal{D}_0\,\dot\cup\,\mathcal{D}_1\,\dot\cup\,\mathcal{D}_2\,\dot\cup\,\mathcal{D}_3\) so that a clique $D\in \mathcal{D}$ is in
\begin{itemize}
    \item $\mathcal{D}_0$ if $D$ itself is a connected component of both $H[J_1\cup K]$ and $H[J_2\cup K]$,
    \item $\mathcal{D}_1$ if $D\cup T_1$ is a connected component of $H[J_1\cup K]$ for some (unique) $T_1\in \mathcal{T}_1$ and $D$ itself is a connected component of $H[J_2\cup K]$,
    \item $\mathcal{D}_2$ if $D$ itself is a connected component of $H[J_1\cup K]$ and $D\cup T_2$ is a connected component of $H[J_2\cup K]$ for some (unique) $T_2\in \mathcal{T}_2$, and
    \item $\mathcal{D}_3$ if $D\cup T_1$ is a connected component of $H[J_1\cup K]$ for some (unique) $T_1\in \mathcal{T}$ and $D\cup T_2$ is a connected component of $H[J_2\cup K]$ for some (unique) $T_2\in \mathcal{T}$.
\end{itemize}

For each $i \in \{ 1, 2 \}$, we also partition $\mathcal{T}_i$ into three parts \(\mathcal{T}_i=\mathcal{T}_{i,0}\,\dot\cup\,\mathcal{T}_{i,i}\,\dot\cup\,\mathcal{T}_{i,3}\) so that a clique $T_i\in \mathcal{T}_i$ is in
\begin{itemize}
    \item $\mathcal{T}_{i,0}$ if $T_i$ itself is a connected component of $H[J_i\cup K]$,
    \item $\mathcal{T}_{i,i}$ if $D\cup T_i$ is a connected component of $H[J_i\cup K]$ for some (unique) $D\in \mathcal{D}_i$, and
    \item $\mathcal{T}_{i,3}$ if $D\cup T_i$ is a connected component of $H[J_i\cup K]$ for some (unique) $D\in \mathcal{D}_3$.
\end{itemize}
We remark that, for $i\in \{1,2\}$, there is a one-to-one correspondence between cliques in $\mathcal{D}_i$ and $\mathcal{T}_{i,i}$, and $\mathcal{D}_3$ and $\mathcal{T}_{i,3}$, such that the union of corresponding cliques form a clique component of $H[J_i\cup K]$.
See \Cref{fig:mcis} for an illustration.
%\todo{remark about one-to-one correspondence of cliques}

For \(i\in\{1,2,3\}\), let \(\kappa_i := |\mathcal{D}_i|\).
Moreover, for $i\in \{1,2\}$, let $\lambda_{i}:=|\mathcal{T}_{i,0}|$.
By definition, \(|\mathcal{T}_{1,3}|=|\mathcal{T}_{2,3}|=|\mathcal{D}_3|=\kappa_3\) and \(|\mathcal{T}_{i,i}|=|\mathcal{D}_i|=\kappa_i\); particularly, $\kappa_1,\kappa_2,\kappa_3,\lambda_{1},\lambda_{2}\leq k$.
Write \(\mathcal{D}_i=\{D_i^1,\dots,D_i^{\kappa_i}\}\) for $i\in \{1,2,3\}$ and, for each \(i\in\{1,2\}\), 
\(\mathcal{T}_{i,0}=\{T_{i,0}^1,\dots,T_{i,0}^{\lambda_i}\}\),
\(\mathcal{T}_{i,i}=\{T_{i,i}^1,\dots,T_{i,i}^{\kappa_i}\}\), and
\(\mathcal{T}_{i,3}=\{T_{i,3}^1,\dots,T_{i,3}^{\kappa_3}\}\), 
so that the union of corresponding cliques of the same index induces a clique; particularly,
\begin{itemize}
    \item For each \(j\in\{1,\dots,\kappa_3\}\), the union \(D_3^j\cup T_{i,3}^j\) is a connected component of \(H[J_i\cup K]\) (for both \(i=1,2\)), and
    \item For each \(j\in\{1,\dots,\kappa_i\}\), the union \(D_i^j\cup T_{i,i}^j\) is a connected component of \(H[J_i\cup K]\) (for both \(i=1,2\)).
    %there exists a (unique) \(S_{1}^j\in\mathcal{A}_{1}\) such that \(A_{1}^j\cup B_{1,1}^j\) is a connected component of \(H[J_1\cup K]\), while \(A_{1}^j\) is a connected component of \(H[J_2\cup K]\).
    %\item For each \(B_{2,2}^j\in\mathcal{B}_{2,2}\) there exists a (unique) \(A_{2}^j\in\mathcal{A}_{2}\) such that \(A_{2}^j\cup B_{2,2}^j\) is a connected component of \(H[J_2\cup K]\), while \(A_{2}^j\) is a connected component of \(H[J_1\cup K]\).
    %\item For each \(A\in\mathcal{A}_0\), \(A\) is a connected component of \(H[J_1\cup J_2\cup K]\).
    %\item For each \(B_{i,0}\in\mathcal{B}_{i,0}\), \(B_{i,0}\) is a connected component of \(H[J_i\cup K]\).
\end{itemize}

\versionparagraph{Further guesses}
Let $\mathcal{C}_i$ denote the family of cliques of $G_i-X_i$, and let $\mathcal{S}_i$ denote the family of cliques of $G_i[X_i']$; discard the guess if $G_i[X_i']$ is not a cluster graph, since $X_i'=\phi_i(J_i)$ and thus $G_i[X_i']\cong H[J_i]$ must be a cluster graph for any valid guess.

We guess, in $2^{O(k)}$ time, the partition of $\mathcal{S}_i$ into three parts $\mathcal{S}_{i,0}=\{\phi_{i}(T_{i,0})\colon T_{i,0}\in \mathcal{T}_{i,0}\}$, $\mathcal{S}_{i,i}=\{\phi_{i}(T_{i,i})\colon T_{i,i}\in \mathcal{T}_{i,i}\}$, and $\mathcal{S}_{i,3}=\{\phi_{i}(T_{i,3})\colon T_{i,3}\in \mathcal{T}_{i,3}\}$.\footnote{
Here we slightly abuse notation so that $\phi_i(T_i)$ represents the clique of $\mathcal{S}_i$ induced by $\{\phi_i(v)\colon v\in T_i\}$.}
We further guess, in $k^{O(k)}$ time, the indexing of $\mathcal{S}_{i,0}$, $\mathcal{S}_{i,i}$, and $\mathcal{S}_{i,3}$; now, $S_{i,0}^j=\phi_i(T_{i,0}^j)$ for $j\in \lambda_i$, $S_{i,i}^j=\phi_i(T_{i,i}^j)$ for $j\in [\kappa_i]$, and $S_{i,3}=\phi_i(T_{i,3}^j)$ for $j\in [\kappa_3]$.
It automatically determines integers $\lambda_i$, $\kappa_i$, and $\kappa_3$; discard the guess if the value of $\kappa_3$ determined this way is inconsistent between $i\in \{1,2\}$ (that is, $|\mathcal{S}_{1,3}|\neq |\mathcal{S}_{2,3}|$).

%from can guess partial induced embeddings from $\bigcup \mathcal{T}_i$ into $X_i'$ from $k^{O(k)}$ options.
%Because $|J_i|\le k$, we can guess the structure of $H[J_i]$ from $k^{O(k)}$ optons by encoding its clique partition.
%Moreover, we can guess the graph $H[J_1 \cup J_2]$ from $2^{O(k^2)}$ options as $|J_1\cup J_2|\leq O(k)$.
%\todo{need to guess more; partition of $\mathcal{S}_i$.}

 \versionparagraph{Constructing a \textsc{Weighted Exact Multicolored Matching} instance}
We reduce to the \textsc{Weighted Exact Multicolored Matching} as follows.
Other than the special color $0$, we use \(\kappa_3+\kappa_1+\kappa_2+\lambda_1+\lambda_2\leq O(k)\) colors: \(\mathcal{D}_3\,\cup\,\mathcal{D}_1\,\cup\,\mathcal{D}_2\,\cup \, \mathcal{T}_{1,0}\,\cup \, \mathcal{T}_{2,0}\).
Build a bipartite graph $F$ whose left side is $\mathcal{C}_1\cup \mathcal{S}_{1,0}$ and right side is $\mathcal{C}_2\cup \mathcal{S}_{2,0}$.
For each $(C_1,C_2)\in \mathcal{C}_1\times \mathcal{C}_2$, we add edges of the following types between them, where we postpone the definition of the weights:
\begin{itemize}
    \item For each $j\in [\kappa_3]$, an edge of color $D_3^j$, which represents the possibility that $\phi_1(D^j_3\cup T^j_{2,3})\subseteq C_1$ and $\phi_2(D^j_3\cup T^j_{1,3})\subseteq C_2$.
    \item For each $j\in [\kappa_1]$, an edge of color $D_1^j$, which represents the possibility that $\phi_1(D_1^j)\subseteq C_1$ and $\phi_2(D_1^j\cup T_{1,1}^j)\subseteq C_2$.
    \item For each $j\in [\kappa_2$], an edge of color $D_2^j$, which represents the possibility that $\phi_1(D_2^j\cup T_{2,2}^j)\subseteq C_1$ and $\phi_2(D^j_2)\subseteq C_2$.
    \item An edge of color $0$, which represents the possibility that, for some $D_0\in \mathcal{D}_0$, $\phi_1(D_0)\subseteq C_1$ and $\phi_2(D_0)\subseteq C_2$.
\end{itemize}
Moreover, we add edges incident to $\mathcal{S}_{1,0}$ and $\mathcal{S}_{2,0}$ as follows:
\begin{itemize}
    \item For each $j\in [\lambda_1]$ and $C_2\in \mathcal{C}_2$, we add an edge of color $T_{1,0}^j$ between $S_{1,0}^j$ and $C_2$, which represents the possibility that $\phi_2(T_{1,0}^j)\subseteq C_2$ (where $S_{1,0}^j=\phi_1(T_{1,0}^j)$ by definition).
    \item For each $C_1\in \mathcal{C}_1$ and $j\in [\lambda_2]$, we add an edge of color $T_{2,0}^j$ between $C_1$ and $S_{2,0}^j$, which represents the possibility that $\phi_1(T_{2,0}^j)\subseteq C_1$ (where $S_{2,0}^j=\phi_2(T_{2,0}^j)$ by definition).
\end{itemize}

\versionparagraph{Edge weights}

From now on, we define the weight of an edge $(C_1,C_2)$ to represent the size of a clique $D\in \mathcal{D}$ (which we cannot afford to guess explicitly).
Although the details depend on which $\mathcal{D}_i$ contains $D$, in all four cases the weight is the maximum feasible size of $D$. 

To compute the weight, we first define the subsets $C_1^{\circ}\subseteq C_1$ and $C_2^{\circ}\subseteq C_2$;
intuitively, $C_i^{\circ}\subseteq C_i$ is the set of vertices in $C_i$ that can be in the image $\phi_i(D)$.
Then, the weight of the edge is the maximum of $|D|$ such that there is a way to extend $\phi_1|_{X_1}$ and $\phi_2|_{X_2}$ so that $\phi_1(D)\subseteq C_1^{\circ}$ and $\phi_2(D)\subseteq C_2^{\circ}$; we do not add the edge if no such extensions exist.
We compute this weight in polynomial time by the signature argument as in the proof of \Cref{lem:clique-to-clique}.

\subparagraph*{Weight of edges of color $D_3^j$.}
We first compute the weight of an edge of color $D_3^j$ between $C_1 \in \mathcal{C}_1$ and $C_2 \in \mathcal{C}_2$.
Such an edge represents the case where $\phi_{1}(D_3^j \cup T_{2,3}^j) \subseteq C_1$ and $\phi_2(D_3^j \cup T_{1,3}^j) \subseteq C_2$, and we have already guessed that $S_{i,3} = \phi_i(T_{i,3})$ for both $i \in \{1,2\}$.
To map $T_{2,3}^j$ into $C_1$, we must respect the guessed structure of $H[J_1\cup J_2]$, i.e., we require an injective map $\varphi_1 \colon T_{2,3}^j \to C_1$ such that $\phi_1|_{I\cup J_1} \cup \varphi_1$ is an induced embedding of $H[I \cup J_1 \cup T_{2,3}^j]$ into $G_1[\phi_1(I \cup J_1)\cup C_1]$.
More concretely, for every $t\in T_{2,3}^j\subseteq J_{2}$ and $u \in I \cup J_1$, we require that
\[
\{t,u\}\in E(H) \iff \{\varphi_1(t),\,\phi_1(u)\}\in E(G_1).
\]
Since the left-hand side is fixed by our guess of $H[I \cup J_1\cup J_2]$, we can find such a mapping $\varphi_1$ in polynomial time by the signature argument; there is at most one such mapping up to isomorphism because vertices in $C_1$ with the same signature to $X_1$ are equivalent.
If no such $\varphi_1$ exists, we do not add the edge.
%We do the same for $T_{1,3}^j$ and $C_2$ to obtain $\varphi_2$.

Let $C_1^\circ\subseteq C_1$ be the set of vertices $v\in C_1$ such that
\begin{itemize}
    \item $v\not \in \varphi_1(T^j_{2,3})$, and
    \item $N_{G_1}(v)\cap X'_1=S^j_{1,3}$. 
\end{itemize}
Intuitively, $C_1^{\circ}$ is the set of vertices that can be in $\phi_1(D_3^j)$; we exclude $\varphi_1(T^j_{2,3})$ because $T^j_{2,3}$ and $D_3^j$ are disjoint, and require $N_{G_1}(v)\cap X'_1=S^j_{1,3}$ because $\phi_2(D_3^j)\cup S^j_{1,3}$ is a clique component of $\phi_2(V[H])-X_2$.
In the same way, we define $C_2^{\circ}$.

Finally, let the weight of the edge $w_3(C_1,C_2,D_3^j)$ be the maximum of $|D_3^j|$ such that there is a pair of partial induced mappings $\psi_i\colon D_3^j\to C_i^{\circ}$ ($i\in \{1,2\}$) that $\phi_i|_{I\cup J_i}\cup \varphi_i \cup \psi_i$ is a partial induced embedding for both $i\in \{1,2\}$.
Recall that the vertices in $C_i^{\circ}$ have the same neighbor in $X'_i\cup T_{3-i,3}^j$; specifically, the neighbor is $S^j_{i,3}\cup T^j_{3-i,3}$.
Particularly, they can be distinguished only by the signature to $\phi_i(I)$; thus, the maximum of $|D_3^j|$ can be computed in polynomial time by the signature argument.
We refer to the choice of $D_3^j$ attaining the maximum size as $D_3^j(C_1,C_2)$ for the edge $(C_1,C_2)$.
%the largest common induced subgraph of $G_1[C_1^{\circ}]$ and $G_2[C_2^{\circ}]$ that extends $\phi_1|_{X_1}$ and $\phi_2|_{X_2}$, respectively; we do not add the edge if no such extension exists.
%Above discussion is summarized as follows.
%\begin{lemma}
%Let $\phi_i\colon V[H]\to V[G_i]$ ($i\in \{1,2\}$) be mappings extending $\phi_i|_{I\cup J_i}\cup \varphi_i\cup \psi_i$.
%Then, 
%\begin{itemize}
%    \item $\phi_i()$
%\end{itemize}
%\end{lemma}

\subparagraph*{Weight of edges of color $D_1^j$ and $D_2^j$.}
Next, we compute the weight of the edges colored $D_1^j$ and $D_2^j$. By symmetry, it suffices to consider $D_2^j$.
Such an edge represents the case where $\phi_1(D_2^j \cup T_{2,2}^j) \subseteq C_1$ and $\phi_2(D_2^j) \subseteq C_2$, and we have already guessed that $S_{2,2}^j = \phi_2(T_{2,2}^j)$.
As in the $D_3^j$ case, we enforce the guessed adjacencies between $T_{2,2}^j$ and $I\cup J_1$ and require an injective map $\varphi_1\colon T^j_{2,2}\to C_1$ that makes $\phi_1|_{I\cup J_1}\cup \varphi_1$ an induced embedding of $H[I\cup J_1\cup T_{2,2}^j]$ into $G_1[\phi_1(I\cup J_1)\cup C_1]$.
If no such $\varphi_1$ exists, we do not add the edge.
We define $C_1^{\circ}\subseteq C_1$ be the set of vertices $v\in C_1$ with  
\begin{itemize}
    \item $v\not \in \varphi_1(T^j_{2,2})$, and
    \item $N_{G_1}(v)\cap X'_1=\emptyset$.
\end{itemize}
Moreover, we define $C_2^{\circ}\subseteq C_2$ be the set of vertices $v\in C_2$ with
\begin{itemize}
    \item $N_{G_2}(v)\cap X'_2=S^j_{2,2}$.
\end{itemize}
Intuitively, $C_1^{\circ}$ and $C_2^{\circ}$ are the sets of vertices that can be in $\phi_1(D_2^j)$ and $\phi_2(D_2^j)$, respectively; for $C_1^{\circ}$, we exclude $\varphi_1(C_1)$ because $T^j_{2,2}$ and $D_2^j$ are disjoint, and require $N_{G_1}(v)\cap X'_1=\emptyset$ because $\phi_2(D_2^j)$ is a clique component of $\phi_2(V[H])-X_2$; for $C_2^{\circ}$, we require $N_{G_2}(v)\cap X'_2=S^j_{2,2}$ because $\phi_1(D_2^j)\cup S_{2,2}$ is a clique component of $\phi_1(V[H])-X_1$.

Finally, let the weight of the edge $w_2(C_1,C_2,D_2^j)$ be the maximum of $|D^j_2|$ such that there is a pair of partial induced mappings $\phi_i\colon D^j_2\to C_i^{\circ}$ ($i\in \{1,2\}$) that $\phi_1|_{I\cup J_1}\cup \varphi_1\cup \psi_1$ and $\phi_2|_{I\cup J_2}\cup \psi_2$ are partial induced embeddings. 
Recall that the vertices in $C_i^{\circ}$ can be distinguished only by the signature to $\phi_i(I)$; thus, the maximum of $|D^j_2|$ can be computed in polynomial time by the signature argument.
We refer to the choice of $D_2^j$ attaining the maximum size as $D_2^j(C_1,C_2)$ for the edge $(C_1,C_2)$.

%\begin{lemma}
%\end{lemma}

\subparagraph*{Weight of edges of color $0$.}
We now compute the weight of the edges of color $0$.
Such an edge represents the case where, for some $D_0\in \mathcal{D}_0$, $\phi_1(D_0)\subseteq C_1$ and $\phi_2(D_0)\subseteq C_2$.
We define $C_1^{\circ}\subseteq C_1$ be the set of vertices $v\in C_1$ with $N_{G_1}(v)\cap X'_1=\emptyset$; we require this because, assuming there is a clique $D_0\in \mathcal{D}_0$ with $\phi_1(D_0)\subseteq C_1$ and $\phi_2(D_0)\subseteq C_2$, $\phi_2(D_0)$ is a clique component of $\phi_2(V[H])-X_2$.
We define $C_2^{\circ}\subseteq C_2$ similarly.

Let the weight of the edge $w_0(C_1,C_2,0)$ be the maximum of $|D_0|$ such that there is a pair of partial induced mappings $\phi_i\colon D_0\to C_i^{\circ}$ ($i\in \{1,2\}$) that $\phi_i|_{I\cup J_i}\cup \psi_i$ is a partial induced embedding for each $i\in \{1,2\}$. 
Recall that the vertices in $C_i^{\circ}$ can be distinguished only by the signature to $\phi_i(I)$; thus, the maximum of $|D_0|$ can be computed in polynomial time by the signature argument.
We refer to the choice of $D_0$ attaining the maximum size as $D_0(C_1,C_2)$ for the edge $(C_1,C_2)$.

\subparagraph*{Weight of edges of color $T_{1,0}^j$ and $T_{2,0}^j$.}
Finally, we compute the weight of the edges of color $T_{1,0}^j$ and $T_{2,0}^j$.
By symmetry, it suffices to consider $T_{2,0}^j$.
Such an edge represents the case where $\phi_1(T_{2,0}^j)\subseteq C_1$ and $\phi_2(T_{2,0}^j)=S_{2,0}^j$.
As in the $D_3^j$ case, we enforce the guessed adjacencies between $T_{2,0}^j$ and $I\cup J_1$ and require an injective map $\varphi_1\colon T^j_{2,0}\to C_1$ such that $\phi_1|_{I\cup J_1}\cup \varphi_1$ is an induced embedding of $H[I\cup J_1\cup T_{2,0}^j]$ into $G_1[\phi_1(I\cup J_1)\cup C_1]$.
If no such $\varphi_1$ exists, we do not add the edge.
If such $\varphi_1$ exists, the weight of the edge is $0$.

%\todo{add edges that embeds $D_0$ into $T_{i,0}$}

\versionparagraph{Algorithm}

After the guesses, we build a colored bipartite graph $F$ as above, and solve \textsc{Weighted Exact Multicolored Matching} on it in randomized $2^{O(k)}$ time by \Cref{prop:mpm}.
Over all successful guesses, we return the maximum value of $|I\cup J_1\cup J_2|$ plus the weight of the chosen matching.
The time complexity is $O^*(2^{O(k^2)})$: the adjacency pattern between $J_1$ and $J_2$ has at most $2^{|J_1||J_2|}\le 2^{O(k^2)}$ possibilities, and the remaining guesses contribute only an additional $O^*(k^{O(k)})$ factor.

\versionparagraph{Correctness}

Now, we prove the correctness.
\begin{lemma}
The following are equivalent.
\begin{itemize}
    \item There is a graph $H$ on $t+|I\cup J_1\cup J_2|$ vertices and induced embeddings $\phi_i\colon V(H)\to V(G_i)$ for $i\in\{1,2\}$ that are consistent with the current guess.
    \item There is a \textsc{Weighted Exact Multicolored Matching} of weight at least $t$ in $F$. 
\end{itemize}
\end{lemma}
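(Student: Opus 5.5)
We prove the two directions separately, in the same way as the correctness argument in the proof of \Cref{lem:target}, using the structure of $H$ described above (the partitions $\mathcal{D}=\mathcal{D}_0\dotcup\mathcal{D}_1\dotcup\mathcal{D}_2\dotcup\mathcal{D}_3$ and $\mathcal{T}_i=\mathcal{T}_{i,0}\dotcup\mathcal{T}_{i,i}\dotcup\mathcal{T}_{i,3}$). Throughout, $t$ plays the role of $|K|$.

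For the forward direction, fix $H$ and $\phi_1,\phi_2$ consistent with the guess, so $|K|=t$. We build a matching $M$ in $F$ as follows.
\begin{itemize}
\item For each $D\in\mathcal{D}$, the set $\phi_1(D)\setminus X_1$ lies in a single clique $C_1\in\mathcal{C}_1$, and $\phi_2(D)$ lies in a single $C_2\in\mathcal{C}_2$. Note that $D\subseteq K$, so $\phi_i(D)\cap X_i=\emptyset$. We take the edge $(C_1,C_2)$ of the color of $D$ (color $0$ if $D\in\mathcal{D}_0$).
\item For each $T_{2,0}^j$, $\phi_1(T_{2,0}^j)$ lies in one $C_1$, since $T_{2,0}^j\cap\phi_1^{-1}(X_1)=\emptyset$ because $J_2$ is disjoint from $I\cup J_1$. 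We take the edge $(C_1,S_{2,0}^j)$. The case $T_{1,0}^j$ is symmetric.
\end{itemize}

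Every nonzero color is then used exactly once. Two distinct chosen items cannot land in the same $C_1$: their images in $G_1-X_1$ would be joined into one clique component, yet in $H[K\cup J_2]$ they lie in different components. Here we use that $\phi_1^{-1}(V(G_1)\setminus X_1)=K\cup J_2$ and that the components are exactly the unions $D\cup T_2$. Hence $M$ is a matching.

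For the weights, each chosen edge actually exists, and its weight is at least $|D|$. This holds because the restrictions of $\phi_i$ to $D$ (together with $T$ where relevant) witness the maps $\varphi_i,\psi_i$. We need to verify that $\phi_1(D)\subseteq C_1^{\circ}$. For example, in the case $D\in\mathcal{D}_3$ the neighbourhood condition $N_{G_1}(v)\cap X_1'=S_{1,3}^j$ follows from $D\cup T_{1,3}^j$ being a component of $H[J_1\cup K]$. Summing gives weight at least $t$.

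For the backward direction, take a matching $M$ of weight at least $t$ that uses each nonzero color once. We construct $H$ as follows.
\begin{itemize}
\item Start from the guessed $H[I\cup J_1\cup J_2]$.
\item For each color-$0$ edge $(C_1,C_2)$, add a fresh clique $D_0(C_1,C_2)$.
\item For each edge of color $D_i^j$, set $D_i^j:=D_i^j(C_1,C_2)$.
\item Join each new clique to $I$ according to its chosen signatures, to $T$ as prescribed by its type, and to nothing else.
\end{itemize}
Then $\phi_i$ is defined by gluing $\phi_i|_{I\cup J_i}$ with all the $\varphi_i,\psi_i$ coming from the matched edges.

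\textbf{Main obstacle.} The hardest step is showing that the glued $\phi_i$ is injective and induced. Within one clique $C_i$ this is local, and it is guaranteed by the signature computation. Edges to $X_i'$ are controlled by the conditions defining $C_i^{\circ}$ and by the compatibility of $\varphi_i$. Edges to $\phi_i(I)$ are controlled by the signatures. Across different chosen items, injectivity and non-adjacency in $G_i-X_i$ follow because $M$ is a matching, so distinct items use distinct cliques $C_1$ and distinct cliques $C_2$, which are anticomplete.

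The delicate point is the $X_i'$-interactions across items. We must show that a vertex of $\phi_1(D)$ is adjacent to exactly the intended clique of $\mathcal{S}_1$, and that each $S\in\mathcal{S}_{1,1}\cup\mathcal{S}_{1,3}$ receives its $D$ only from the unique edge of its color. This uses the exact-neighbourhood conditions in the definition of $C_i^{\circ}$, together with the fact that each nonzero color appears exactly once.

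Once these are checked, $|V(H)|=|I\cup J_1\cup J_2|+w(M)$. Since $w(M)\ge t$, removing surplus vertices of $K$ (induced subgraphs preserve the embeddings) yields exactly $t+|I\cup J_1\cup J_2|$ vertices, consistent with the guess.
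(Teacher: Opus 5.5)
Your proposal follows the paper's proof essentially step for step: the forward direction takes one edge per clique of $\mathcal{D}$ and per clique of $\mathcal{T}_{1,0}\cup\mathcal{T}_{2,0}$ and bounds the weights by restricting $\phi_i$; your argument that distinct items lie in distinct components of $H[K\cup J_2]$ is, if anything, cleaner than the paper's. The backward direction glues the $\varphi_i,\psi_i$ of the matched edges exactly as the paper's case analysis does. The one step to drop is the final trimming: deleting surplus vertices of $K$ may empty some $D_i^j$ (e.g.\ when $t<\kappa_1+\kappa_2+\kappa_3$) and thereby change the guessed partition of $\mathcal{S}_i$, so you should instead conclude, as the paper does, that $H$ has \emph{at least} $t+|I\cup J_1\cup J_2|$ vertices, which is all the algorithm needs.
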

\begin{proof}
We prove both directions: first extract a multicolored matching from a solution $H$ and its embeddings, and then reconstruct $H$ and the embeddings from a matching.

($\Rightarrow$)
Let $H$ and the induced embeddings $\phi_i\colon V(H)\to V(G_i)$ $(i\in\{1,2\})$ be as in the lemma, extending the guessed $\phi_i|_{I\cup J_i}$.
Use the families $\mathcal{D}_{i}$ $(i\in \{0,1,2,3\})$, $\mathcal{T}_{i,0}$, $\mathcal{T}_{i,i}$, and $\mathcal{T}_{i,3}$ $(i\in \{1,2\})$ defined above.
Let $M$ be an edge set of $F$ obtained as follows:
\begin{itemize}
    \item For each $D^j_i\in \mathcal{D}_i$ $(i\in \{1,2,3\}, j\in [\kappa_i]$), add $(C_1,C_2,D^j_i)$ to $M$, where $C_1$ (resp. $C_2$) is the unique clique of $G_1-X_1$ (resp. $G_2-X_2$) with $\phi_1(D^j_i)\subseteq V[C_1]$ (resp. $\phi_2(D^j_i)\subseteq V[C_2]$).
    \item For each $D_0\in \mathcal{D}_0$, add $(C_1,C_2,0)$ to $M$, where $C_1$ (resp. $C_2$) is the unique clique of $G_1-X_1$ (resp. $G_2-X_2$) with $\phi_1(D_0)\subseteq V[C_1]$ (resp. $\phi_2(D_0)\subseteq V[C_2]$).
    \item For each $T^j_{2,0}$ $(j\in [\lambda_2])$, add $(C_1,S_{2,0}^j,T_{2,0}^j)$ to $M$, where $C_1$ is the unique clique of $G_1-X_1$ with $\phi_1(T_{2,0}^j)\subseteq V[C_1]$.
    \item For each $T^j_{1,0}$ $(j\in [\lambda_1])$, add $(S_{1,0}^j,C_2,T_{1,0}^j)$ to $M$, where $C_2$ is the unique clique of $G_2-X_2$ with $\phi_2(T_{1,0}^j)\subseteq V[C_2]$.
\end{itemize}
We first verify that $M$ is feasible, and then lower bound its weight by $|V[H]|-|I\cup J_1\cup J_2|$.
\begin{claim}
$M$ is a matching in $F$ that uses each nonzero color exactly once (colors in $\mathcal{D}_3\cup \mathcal{D}_1\cup \mathcal{D}_2\cup \mathcal{T}_{1,0}\cup \mathcal{T}_{2,0}$), and hence is a feasible solution to \textsc{Weighted Exact Multicolored Matching}.
\end{claim}
\begin{proof}
By definition, there are no edges between different cliques of the form $D_i^j$ ($i\in \{1,2,3\},j\in [\kappa_i]$), $D_0\in \mathcal{D}_0$, and $T^j_{2,0}$ ($j\in [\lambda_2]$).
Thus, the edge corresponding to these cliques does not share the left endpoint $C_1$. The same can be stated also for $C_2$.
Thus, $M$ is a matching.
Moreover, it is clear from the construction that $M$ contains exactly one edge of each color in $\mathcal{D}_3\cup \mathcal{D}_1\cup \mathcal{D}_2\cup \mathcal{T}_{1,0}\cup \mathcal{T}_{2,0}$; thus, $M$ is feasible.
\end{proof}
It remains to lower bound the weight of $M$.
Recall that the weight of each edge is the maximum size of a clique in $K$ that can be embedded into the cliques at its endpoints, considering only the embeddings of the relevant auxiliary cliques; more precisely,
\begin{itemize}
    \item the weight of $(C_1,C_2,D_3^j)$ is the maximum of $|D_3^j|$ such that $\phi_1|_{I\cup J_1}\cup \varphi_1\cup \psi_1\colon (I\cup J_1)\cup T_{2,3}^j\cup D_3^j\to X_1\cup C_1$ and $\phi_2|_{I\cup J_2}\cup \varphi_2\cup \psi_2\colon (I\cup J_2)\cup T_{1,3}^j\cup D_3^j\to X_2\cup C_2$ are partial induced mappings,
    \item the weight of $(C_1,C_2,D_2^j)$ is the maximum of $|D_2^j|$ such that $\phi_1|_{I\cup J_1}\cup \varphi_1\cup \psi_1\colon (I\cup J_1)\cup T_{2,2}^j\cup D_2^j\to X_1\cup C_1$ and $\phi_2|_{I\cup J_2}\cup \psi_2\colon (I\cup J_2)\cup D_2^j\to X_2\cup C_2$ are partial induced mappings,
    \item the weight of $(C_1,C_2,D_1^j)$ is the maximum of $|D_1^j|$ such that $\phi_1|_{I\cup J_1}\cup \varphi_1\colon (I\cup J_1)\cup D_1^j\to X_1\cup C_1$ and $\phi_2|_{I\cup J_2}\cup \varphi_2\cup \psi_2\colon (I\cup J_2)\cup T_{1,1}^j \cup D_1^j\to X_2\cup C_2$ are partial induced mappings,
    \item the weight of $(C_1,C_2,0)$ is the maximum of $|D_0|$ such that $\phi_1|_{I\cup J_1}\cup \varphi_1\colon (I\cup J_1)\cup D_0\to X_1\cup C_1$ and $\phi_2|_{I\cup J_2}\cup \psi_2\colon (I\cup J_2)\cup D_0\to X_2\cup C_2$ are partial induced mappings,
    \item the weight of $(C_1,S_{2,0}^j,T_{2,0}^j)$ is $0$, and
    \item the weight of $(S_{1,0}^j,C_2,T_{1,0}^j)$ is $0$.
\end{itemize}
Since $\phi_i$ is an induced embedding, when we restrict the domain of $\phi_i$, we obtain the partial induced embeddings $\phi_i|_{I\cup J_i}\cup \varphi_i\cup \psi_i$ or $\phi_i|_{I\cup J_i}\cup \psi_i$ meeting above conditions; particularly, the maximum size of $D_i^j$ and $D_0$ are bounded above by the weight of the corresponding edges.
Therefore, the total weight of $M$ is at least $|K|=|V[H]|-|I\cup J_1\cup J_2|$.

($\Leftarrow$)
Let $M$ be the computed matching.
We build $H$ from the cliques selected by $M$, and then extend the guessed embeddings accordingly.
By the multicoloredness constraint, each nonzero color appears exactly once in $M$; in particular, for every $i\in\{1,2,3\}$ and $j\in[\kappa_i]$, we define $D_i^j:=D_i^j(C_1,C_2)$ using the unique edge in $M$ of the form $(C_1,C_2,D_i^j)$.

Define $H$ by the graph on the vertex set $I\cup J_1\cup J_2\cup K$, where $K$ is defined by
\[
    K:= \left(\bigcup_{i\in \{1,2,3\}}\bigcup_{(C_1,C_2,D^j_i)\in M}D_i^j \right)\cup \bigcup_{(C_1,C_2,0)\in M}D_0(C_1,C_2).
\]
By construction, $|K|$ is equal to the weight of $M$; thus, $H$ contains at least $t+|I\cup J_1\cup J_2|$ vertices.
The edge set of $H$ is defined as follows:
\begin{itemize}
    \item The edges in $H[I\cup J_1\cup J_2]$ are fixed by the guess.
    \item For each $u\in J_i$ and $v\in K$ ($i\in \{1,2\}$), $(u,v)\in E[H]$ if and only if either 
    \begin{itemize}
        \item $u\in T_{i,3}^j$ and $v\in D_3^j$ for some $j\in [\kappa_3]$, or 
        \item $u\in T_{i,i}^j$ and $v\in D_i^j$ for some $j\in [\kappa_i]$.
    \end{itemize}
    \item For distinct $u,v\in K$, $(u,v)\in E[H]$ if and only if either
    \begin{itemize}
        \item both are in the same $D_i^j$ for $i\in \{1,2,3\}$ and $j\in [\kappa_i]$, or 
        \item both are in the same $D_0(C_1,C_2)$ for $(C_1,C_2,0)\in M$.
    \end{itemize}
    \item Edges between $u\in I$ and $v\in K$ are defined via the corresponding map $\psi_1$ (equivalently, $\psi_2$). More precisely,
    \[
        N_{H}(v)\cap I:=\psi_1^{-1}(N_{G_1}(\psi_1(v))\cap \phi_1|_{I\cup J_1}(I))=\psi_2^{-1}(N_{G_2}(\psi_2(v))\cap \phi_2|_{I\cup J_2}(I)),
    \]
    where $\psi_1$ or $\psi_2$ comes from the (unique) edge of the form 
    \begin{itemize}
        \item $(\cdot,\cdot,D_i^j)\in M$ if $v\in D_i^j$, and
        \item $(C_1,C_2,0)\in M$ if $v\in D_0(C_1,C_2)$.
    \end{itemize}
\end{itemize}

This completes the definition of $H$.
Using the unique edge of $M$ associated with each auxiliary clique in $K$, we extend the guessed partial embedding to all of $H$; define $\phi_1\colon H\to V[G_1]$ as follows:
%and $\phi_1\colon H\to V[G_1]$ be an extension of the guessed $\phi_1|_{I\cup J_1}$ such that
\begin{itemize}
    \item $\phi_1$ extends the guessed $\phi_1|_{I\cup J_1}$,
    \item $\phi_1|_{T^j_{2,3}}$ equals the $\varphi_1$ of the unique edge of the form $(\cdot,\cdot,D_3^j)\in M$,
    \item $\phi_1|_{D_3^j}$ equals the $\psi_1$ of the unique edge of the form $(\cdot,\cdot,D_3^j)\in M$,
    \item $\phi_1|_{T^j_{2,2}}$ equals the $\varphi_1$ of the unique edge of the form $(\cdot,\cdot,D_2^j)\in M$,
    \item $\phi_1|_{D_2^j}$ equals the $\psi_1$ of the unique edge of the form $(\cdot,\cdot,D_2^j)\in M$,
    \item $\phi_1|_{D_1^j}$ equals the $\psi_1$ of the unique edge of the form $(\cdot,\cdot,D_1^j)\in M$,
    \item $\phi_1|_{T^j_{2,0}}$ equals the $\varphi_1$ of the unique edge of the form $(\cdot,S^j_{2,0},T^j_{2,0})\in M$, and
    \item $\phi_1|_{D_0(C_1,C_2)}$ equals the $\psi_1$ of the edge $(C_1,C_2,0)\in M$ (for each color-$0$ edge in $M$).
\end{itemize}
%It suffices to prove that the following map $\phi_2\colon H\to V[G_2]$ is an induced mapping:
Define $\phi_2\colon H\to V[G_2]$ analogously.
To conclude, it suffices to show that both $\phi_1$ and $\phi_2$ are induced embeddings; by symmetry, it is enough to prove the following claim.
\begin{claim}
For each $u,v\in V[H]$,
\[
    (u,v)\in E[H]
    \Longleftrightarrow (\phi_1(u),\phi_1(v))\in E[G_1].
\]
\end{claim}
\begin{proof}
%\noindent{}\textbf{Case $u,v\in I\cup J_1\cup J_2$:}
%It suffices to show that the edge set of $H[I\cup J_1\cup J_2]\cong G_i[\phi_i(I\cup J_1\cup J_2)]$ defined by $\phi_i^{-1}$ is consistent to the guess of the structure of $H[I\cup J_1\cup J_2]$.
%From symmetry, we prove only for $\phi_1$.
%Since $\phi_1$ extends the guessed $\phi_1|_{I\cup J_1}$, it suffices ...

%\noindent{}\textbf{Case $u\in I\cup J_1\cup J_2$ and $v\in K$:}

%\noindent{}\textbf{Case $u,v\in K$:}

We proceed by casework.

\proofsubparagraph*{Case $u,v\in I\cup J_1$:} 
This case is trivial because $\phi_1$ extends $\phi_1|_{I\cup J_1}$, which is a partial induced embedding.

\proofsubparagraph*{Case $u\in I\cup J_1$ and $v\in J_2$:}
The clique component of $J_2$ containing $v$ is either in $\mathcal{T}_{2,3}$, $\mathcal{T}_{2,2}$, or $\mathcal{T}_{2,0}$.
The claim is from the definitions of $\varphi_1$ of the unique edge of the form $(\cdot,\cdot, D_3^j)$, $(\cdot,\cdot, D_2^j)$, and $(\cdot,S_{2,0}^j,T_{2,0}^j)$, respectively.

\proofsubparagraph*{Case $u\in I\cup J_1$ and $v\in K$:}
The clique component of $K$ containing $v$ is either in $\mathcal{D}_{3}$, $\mathcal{D}_{1}$, $\mathcal{D}_{2}$, or $\mathcal{D}_{0}$.
The claim is obtained from the fact that the edges of $H$ are defined using the corresponding $\psi_1$.
%of the unique edge of the form $(\cdot,\cdot,D^j_3)$, $(\cdot,\cdot,D^j_1)$, and $(\cdot,\cdot,D^j_2)$, and the edge $(C_1,C_2,D_0(C_1,C_2))\in M$, respectively.

\proofsubparagraph*{Case $u,v\in J_2$:}
Each of $u$ and $v$ is in a clique component of $J_2$, which is either in $\mathcal{T}_{2,3}$, $\mathcal{T}_{2,2}$, or $\mathcal{T}_{2,0}$; thus, they are mapped to some clique component of $G_1-X_1$ by $\varphi_1$ of the unique edge of the form $(\cdot,\cdot,D_3^j)$, $(\cdot,\cdot,D_2^j)$, or $(\cdot,S_{2,0}^j,T_{2,0}^j)$.
If $u$ and $v$ are in the same clique component of $J_2$, they will be mapped to the same clique component of $G_1-X_1$.
Conversely, if $u$ and $v$ are in different clique components, they will be mapped to different clique components of $G_1-X_1$ because there is at most one edge of $M$ of the form $(C_1,\cdot, \cdot)$ for each $C_1$.

\proofsubparagraph*{Case $u\in J_2$ and $v\in K$:}
$(u,v)\in E[H]$ holds if and only if $u\in T_{2,3}^j$ and $v\in D_{3}^j$ for some $j\in [\kappa_3]$ or $u\in T_{2,2}^j$ and $v\in D_{2}^j$ for some $j\in [\kappa_2]$.
In both cases, both $T_{2,i}^j$ and $D_{i}^j$ are mapped to the same clique component of $G_1-X_1$ by the $\varphi_1$ and $\psi_1$ of the unique edge of the form $(\cdot, \cdot, D_i^j)$.
Conversely, in other cases, they will be mapped to different clique components of $G_1-X_1$ because there is at most one edge of $M$ of the form $(C_1,\cdot, \cdot)$ for each $C_1$. 

\proofsubparagraph*{Case $u,v\in K$:}
Each of $u$ and $v$ is in a clique component of $K$, which is either in $\mathcal{D}_{3}$, $\mathcal{D}_{1}$, $\mathcal{D}_{2}$, or $\mathcal{D}_{0}$; thus, each is mapped to some clique component of $G_1-X_1$ by $\psi_1$ of the unique edge of the form $(\cdot,\cdot,D_i^j)$ if it is in $i\in \{1,2,3\}$ and of the edge of the form $(C_1,C_2,0)$ if it is in $D_0(C_1,C_2)$.
If $u$ and $v$ are in the same clique component of $K$, they will be mapped to the same clique component of $G_1-X_1$.
Conversely, if $u$ and $v$ are in different clique components, they will be mapped to different clique components of $G_1-X_1$ because there is at most one edge of $M$ of the form $(C_1,\cdot, \cdot)$ for each $C_1$.
\end{proof}

This concludes the proof.
\end{proof}

\versionparagraph{Remark}
In the case of \textsc{ISI}, we can, using \Cref{lem:modulator-to-modulator}, restrict our attention to solutions in which the modulator of $G_2$ is mapped entirely into the modulator of $G_1$.
For \textsc{MCIS} we can no longer make such an assumption:
vertices of the cluster vertex deletion set of $G_1$ may be mapped to vertices of $G_2$ outside the cluster vertex deletion set and vice versa. Consequently, our algorithm must explicitly guess how these deletion sets attach to each other (i.e., $H[J_1 \cup J_2]$).
This guess step appears to be unavoidable when parameterizing by $\cvd(G_1)+\cvd(G_2)$:
As we show in \Cref{sec:mcis-lb}, \textsc{MCIS} cannot be solved in $O^*(2^{o(k^2)})$ time assuming ETH.

\section{ETH lower bound for MCIS} \label{sec:mcis-lb}

We prove the following ETH lower bound, showing that the algorithm for \textsc{MCIS} of \Cref{thm:mcis-fpt} is essentially tight under ETH.

\mcislb*

We proceed in two steps. We first establish the ETH lower bound for \textsc{LCBM} (\Cref{ssec:lcbm}) and then reduce \textsc{LCBM} to \textsc{MCIS} to obtain the stated lower bound (\Cref{ssec:mcislb}).

\subsection{Hardness of \textsc{LCBM}} \label{ssec:lcbm}

We first prove the hardness of \textsc{LCBM}.
Recall that an instance of \textsc{LCBM} consists of an integer $k$ and families $\mathcal{R}_i,\mathcal{C}_i\subseteq \{0,1\}^k$ for $i\in [k]$, and asks whether there is a Boolean matrix $A\in\{0,1\}^{k\times k}$ whose $i$-th row lies in $\mathcal{R}_i$ and whose $i$-th column lies in $\mathcal{C}_i$ for every $i$.
A trivial algorithm is to, for each $k\times k$-matrix over $\{ 0, 1 \}$, check if it meets the feasibility conditions, which takes $2^{O(k^2)}$ time.
We show that this trivial algorithm is essentially optimal under ETH by a reduction from \textsc{3-SAT}.

%We consider the following matrix feasibility problem, which we call \textsc{List-Constrained Boolean Matrix (LCBM)}.
%\boxproblem{\textsc{List-Constrained Boolean Matrix (LCBM)}}{An integer $k\in \mathbb{N}$, two sets of length-$k$ Boolean vectors $\mathcal{R}_i \subseteq \{0,1\}^k$ and $\mathcal{C}_i \subseteq \{0,1\}^k$ for each $i \in [k]$.}{Is there a Boolean matrix $A \in \{0,1\}^{k\times k}$ such that, for every $i \in [k]$, the $i$-th row $A[i,\cdot]$ lies in $\mathcal{R}_i$ and the $i$-th column $A[\cdot,i]$ lies in $\mathcal{C}_i$?}{$k$}\todo{delete it; copied to intro}

%Given $k \in \mathbb{N}$, for each $i \in [k]$ we are given two sets of length-$k$ Boolean vectors, $\mathcal{R}_i \subseteq \{0,1\}^k$ and $\mathcal{C}_i \subseteq \{0,1\}^k$.
%The task is to find a Boolean matrix $A \in \{0,1\}^{k\times k}$ such that, for every $i \in [k]$, the $i$-th row $A[i,\cdot]$ lies in $\mathcal{R}_i$ and the $i$-th column $A[\cdot,i]$ lies in $\mathcal{C}_i$.
%Here, we allow $\mathcal{R}_i$ and $\mathcal{C}_i$ to have exponential size, which is up to $2^k$.

Let $\Phi$ be a 3-CNF with $n$ variables and $m$ clauses. 
By the sparsification lemma of Impagliazzo et al.\  \cite{DBLP:journals/jcss/ImpagliazzoPZ01}, we may additionally assume that each variable appears in at most $D$ clauses for a universal constant $D$.
In particular, we have $m = O(n)$.
%Without loss of generality, we assume that $n$ is a square.

%\todo{specify the size of $\mathcal{R}_i$ and $\mathcal{C}_i$. Additionally, I like using different index for rows $(i)$ and columns $(j)$.}
\versionparagraph{Construction}

We will construct an instance of \textsc{LCBM} of dimension $k \times \ell$ for $k=O(\sqrt{n})$ and $\ell=O(\sqrt{n})$ (to get a square matrix, add dummy rows or columns where all vectors are allowed).
Our intermediate goal is to find a ``good'' pair of partitions of the set of variables into $k$ blocks $X_1,\dots, X_k$ and the set of clauses into $\ell$ groups $S_1,\dots, S_{\ell}$.
For $j\in [\ell]$, let $Y_j$ be the set of variables that appear in some clause in $S_j$.
The desired property is that, for each pair $(i,j)\in [k]\times [\ell]$, $|X_i\cap Y_j|\leq 1$.

For instance, we assume that we have such a pair of partitions and construct vector sets $\mathcal{R}_i$ and $\mathcal{C}_j$.
Let $i\in [k]$ and define
\[
    \iota(x) := \{ j\in[\ell] : x \text{ appears in some clause of } S_j \}
\]
for each $x\in X_i$.
Since $|X_i\cap Y_j|\leq 1$ for each $(i,j)\in [k]\times [\ell]$, $\iota(x)$ for different $x\in X_i$ are disjoint.
We define $\mathcal{R}_i\subseteq\{0,1\}^{\ell}$ to be the set of all vectors $r$ such that for every variable $x\in X_i$, the coordinates indexed by $\iota(x)$ are all equal.

Let $j\in [\ell]$.
We define $\mathcal{C}_j\subseteq\{0,1\}^{k}$ to be the
set of all vectors $c$ such that, (partially) assigning $c[i]$ to the unique variable $x\in X_i\cap Y_j$ (if any) for each $i$ satisfies all clauses in $S_j$.

The reduction runs in $2^{O(k)}=2^{O(\sqrt{n})}$ time.
The following lemma proves the correctness of this reduction.
\begin{lemma}
    Assume $|X_i\cap Y_j|\leq 1$ holds for all $(i,j)\in [k]\times [\ell]$. Then, $\Phi$ is satisfiable if and only if $(k, \{ \mathcal{R}_i \}_{i \in [k]}, \{ \mathcal{C}_i \}_{i \in [k]})$ is a yes-instance of \textsc{LCBM}.
\end{lemma}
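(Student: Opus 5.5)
We prove the two directions separately. The dictionary is the one suggested by the construction: cell $A[i,j]$ stores the value of the unique variable in $X_i\cap Y_j$ whenever this intersection is nonempty, and is unconstrained otherwise. Throughout, we use the hypothesis $|X_i\cap Y_j|\le 1$, which says that the sets $\iota(x)$ for $x\in X_i$ are pairwise disjoint subsets of $[\ell]$.

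($\Rightarrow$) Let $\beta$ be a satisfying assignment of $\Phi$. Define $A[i,j]:=\beta(x)$ if $X_i\cap Y_j=\{x\}$, and $A[i,j]:=0$ if $X_i\cap Y_j=\emptyset$.

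For row $i$, take $x\in X_i$. For every $j\in\iota(x)$ we have $x\in X_i\cap Y_j$, so $X_i\cap Y_j=\{x\}$ and $A[i,j]=\beta(x)$. Hence the coordinates indexed by $\iota(x)$ are all equal, and $A[i,\cdot]\in\mathcal{R}_i$. The other coordinates of the row are unconstrained by $\mathcal{R}_i$.

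For column $j$, the partial assignment prescribed by $\mathcal{C}_j$ gives each $x\in Y_j$ the value $A[i,j]$, where $i$ is the unique index with $x\in X_i$ (the $X_i$ partition the variables). This value is $\beta(x)$, so it agrees with $\beta$ on $Y_j$. Every clause of $S_j$ involves only variables of $Y_j$, so each is satisfied, and $A[\cdot,j]\in\mathcal{C}_j$.

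($\Leftarrow$) Let $A$ be a feasible matrix. For each variable $x$, let $i$ be the index with $x\in X_i$.
\begin{itemize}
  \item If $\iota(x)\ne\emptyset$, set $\beta(x):=A[i,j]$ for any $j\in\iota(x)$. This is well defined because $A[i,\cdot]\in\mathcal{R}_i$ forces all entries $A[i,j]$ with $j\in\iota(x)$ to be equal.
  \item If $\iota(x)=\emptyset$, then $x$ occurs in no clause, and $\beta(x)$ may be arbitrary.
\end{itemize}

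Now take a clause in some group $S_j$. Each of its variables $x$ lies in $Y_j$, so $j\in\iota(x)$ and $\beta(x)=A[i,j]$. This is exactly the value that column $j$ assigns to $x$ under the definition of $\mathcal{C}_j$. Since $A[\cdot,j]\in\mathcal{C}_j$, that partial assignment satisfies all clauses of $S_j$, and therefore so does $\beta$. The groups $S_j$ partition the clauses, so $\beta$ satisfies $\Phi$.

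The dummy rows and columns used to square the matrix admit all vectors, so they affect neither direction.

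The only delicate point is well-definedness, in both directions. In ($\Rightarrow$), each cell must correspond to at most one variable. In ($\Leftarrow$), the row constraint must tie together exactly the cells representing the same variable, and different variables in a row must not share a cell. The hypothesis $|X_i\cap Y_j|\le 1$ gives both, via disjointness of the $\iota(x)$, and the rest is unfolding definitions.
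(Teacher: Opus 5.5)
Your proof is correct and follows essentially the same argument as the paper: in the forward direction you fill cell $A[i,j]$ with the value of the unique variable in $X_i\cap Y_j$, and in the backward direction you read off $\beta(x)$ from any cell indexed by $\iota(x)$, using the row lists for well-definedness and the column lists for clause satisfaction. Your explicit handling of variables with $\iota(x)=\emptyset$ covers a small case that the paper's proof glosses over.
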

\begin{proof}
    $(\Rightarrow)$
    If $\Phi$ is satisfiable by an assignment $\sigma$, define $A\in\{0,1\}^{k\times \ell}$ as follows:
    For  each $i\in[k]$ and each $j\in [\ell]$, if $X_i\cap Y_j=\{x\}$ then set $A[i,j]:=\sigma(x)$, and set all other entries arbitrarily.
    For each $i \in [k]$, the $i$-th row is constant on each $\iota(x)$, so $A[i,\cdot]\in\mathcal{R}_i$.
    For each $j\in [\ell]$, a partial assignment that sets the unique $x \in X_i \cap Y_j$ to $A[i,j] = \sigma(x)$
    satisfies all clauses in $S_j$ by construction, hence $A[\cdot, j]\in\mathcal{C}_j$.

    \smallskip
    $(\Leftarrow)$
    Conversely, suppose $A\in\{0,1\}^{k \times \ell}$ satisfies $A[i, \cdot] \in\mathcal{R}_i$ for all $i \in [k]$ and $A[\cdot,j]\in\mathcal{C}_j$ for all $j \in [\ell]$.
    For each variable $x\in X_i$ choose any $j\in\iota(x)$ and set $\sigma(x):=A[i,j]$; this is well-defined because $A[i,\cdot]\in\mathcal{R}_i$ forces the $i$-th row to be constant on $\iota(x)$.
    Now consider any clause in some $S_j$ with variables $x,y,z$ drawn from blocks $i_x,i_y,i_z$. Since $A[\cdot,j]\in\mathcal{C}_j$, the clause is satisfied under the values $A[i_x,j],A[i_y,j],A[i_z,j]$, which equal $\sigma(x),\sigma(y),\sigma(z)$, respectively. Hence $\sigma$ satisfies $\Phi$.
\end{proof}

\versionparagraph{Getting a partition pair}

The remaining task is to obtain a pair of partitions with a desired property, using the following two lemmas. 
Both partitions are obtained by a coloring argument of bounded-degree graphs.
We begin with the following.
\begin{lemma}\label{lem:partition_x}
    There is a polynomial-time algorithm that partitions the set of variables into blocks $X_1,\dots, X_k$, where $k=O(\sqrt{n})$ and $|X_i| = O(\sqrt{n})$ for each $i$, such that no clause contains two different variables from the same block.
\end{lemma}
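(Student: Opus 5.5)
We build a conflict graph on the variables and color it greedily, then split the color classes into blocks of the right size.

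\emph{Conflict graph and coloring.} Let $\Gamma$ be the graph on the $n$ variables in which $x$ and $y$ are adjacent whenever some clause contains both. Each variable appears in at most $D$ clauses, and each clause has at most two other variables. Hence $\Gamma$ has maximum degree at most $2D$, and it can be built in polynomial time. Greedy coloring then gives a proper coloring with at most $2D+1$ colors, i.e., a partition of the variables into $O(1)$ independent sets $I_1,\dots,I_{2D+1}$. Any set contained in one color class contains no two variables sharing a clause, so any refinement of this partition has the required property.

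\emph{Balancing the block sizes.} Set $s:=\lceil\sqrt{n}\rceil$. Cut each color class $I_c$ into consecutive chunks of size at most $s$. This yields at most $\lceil |I_c|/s\rceil \le |I_c|/s+1$ chunks per class. The total number of blocks is therefore at most
\[
\sum_{c}\Bigl(\frac{|I_c|}{s}+1\Bigr)\le \frac{n}{s}+2D+1=O(\sqrt{n}),
\]
using that $D$ is a constant. Every block has size at most $s=O(\sqrt{n})$. Each block lies inside a single independent set of $\Gamma$, so no clause contains two different variables from the same block. Padding with empty blocks gives exactly $k=O(\sqrt{n})$ blocks if a fixed count is needed.

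There is no real obstacle here. The only point needing care is that the number of color classes is constant, which comes from the bounded occurrence guaranteed by sparsification. That is what keeps the rounding overhead $2D+1$ from inflating $k$ beyond $O(\sqrt{n})$.
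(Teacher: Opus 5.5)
Your proposal is correct and follows essentially the same route as the paper: build the variable conflict graph of maximum degree at most $2D$, greedily color it with $2D+1$ colors, and split each color class into chunks of size at most $\sqrt{n}$, for $O(\sqrt{n})$ blocks in total. Your use of $s=\lceil\sqrt{n}\rceil$ is slightly cleaner than the paper's non-integer threshold $\sqrt{n}$, but the argument is the same.
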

\begin{proof}
    We build a graph $H_X$ as follows. 
    The vertex set of $H_X$ is the set of variables of~$\Phi$. 
    There is an edge between two vertices if there is a clause of $\Phi$ that contains both of the corresponding variables.
    Since each variable appears in at most $D$ clauses, the maximum degree $\Delta(H_X)$ is at most $2D$.
    Properly color vertices of $H_X$ greedily with $\Delta(H_X)+1 \le (2D+1)$ colors, and let $X'_1,\dots, X'_{2D+1}$ be the color classes.
    
    We further (arbitrarily) partition each color class $X'_i$ into the minimum number of blocks so that each block contains at most $\sqrt{n}$ variables.
    Then, the total number of blocks is at most $\sqrt{n}+(2D+1) = O(\sqrt{n})$.
    Moreover, no clause contains two different variables from the same block because they should belong to different color classes.
\end{proof}
We proceed as follows.
\begin{lemma}
    Let $X_1,\dots, X_k$ be a partition satisfying the conditions in Lemma~\ref{lem:partition_x}.
    Then, there is a polynomial-time algorithm to compute a partition $S_1,\dots, S_{\ell}$ into groups, where $\ell=O(\sqrt{n})$, such that for each $(i,j)\in [k]\times [\ell]$, $|X_i\cap Y_j|\leq 1$, where $Y_j$ denotes the set of variables that appear in $S_j$. 
\end{lemma}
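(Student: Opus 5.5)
We build a conflict graph on the clauses and color it greedily, mirroring the proof of Lemma~\ref{lem:partition_x}. Let $H_S$ have one vertex per clause of $\Phi$. Join two distinct clauses $c,c'$ by an edge if there are distinct variables $x\in c$ and $x'\in c'$ lying in the same block $X_i$.

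The goal is to ensure that no group $S_j$ contains two clauses that jointly mention two different variables of one block. A single clause never mentions two variables of the same block, by Lemma~\ref{lem:partition_x}. So $|X_i\cap Y_j|\ge 2$ can only happen if two different clauses $c,c'\in S_j$ contribute distinct variables $x\ne x'$ of $X_i$. In that case $cc'$ is an edge of $H_S$. Hence any partition of the clauses into independent sets of $H_S$ satisfies $|X_i\cap Y_j|\le 1$ for all $i,j$, and this property survives any further refinement of the groups.

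The key step is bounding $\Delta(H_S)$ by a constant, and this is where the block-size restriction matters.

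\begin{itemize}
\item A clause $c$ has at most $3$ variables.
\item Each such variable $x\in X_i$ has at most $|X_i|-1=O(\sqrt n)$ other variables $x'$ in its block.
\item Each $x'$ lies in at most $D$ clauses.
\end{itemize}

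This only gives $\deg(c)=O(\sqrt n)$, which is too weak: greedy coloring would then use $O(\sqrt n)$ colors, and splitting classes into pieces of bounded size would push the count of groups above $O(\sqrt n)$.

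The fix is to count more cleverly. We need at most $O(\sqrt n)$ independent sets in total, with no constraint on their sizes, since the number of groups $\ell$ is what matters. Greedy coloring with $\Delta(H_S)+1=O(\sqrt n)$ colors gives exactly that, and each color class is a valid group. So it suffices to show $\Delta(H_S)\le 3\cdot|X_i|\cdot D=O(\sqrt n)$, with $|X_i|\le\sqrt n$ from Lemma~\ref{lem:partition_x}. This yields $\ell\le 3D\sqrt n+1=O(\sqrt n)$.

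Greedy coloring runs in polynomial time. The step needing care is the degree count: clauses sharing the same variable $x$ are \emph{not} adjacent in $H_S$, because a shared variable contributes only one element to $X_i\cap Y_j$. This means the degree bound does not need to account for a variable's own $D$ occurrences beyond neighbors via $x'\ne x$.

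Finally, we pad $k$ and $\ell$ to a common value with dummy rows or columns admitting all vectors. Then the LCBM instance has parameter $O(\sqrt n)$, so an $O^*(2^{o(k^2)})$ algorithm would solve sparse 3-SAT in $2^{o(n)}$ time.
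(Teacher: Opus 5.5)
Your proof is correct and matches the paper's argument: the same clause conflict graph $H_S$, the same degree bound $\Delta(H_S)\le 3D\sqrt{n}$, and greedy coloring into $\ell\le\Delta(H_S)+1=O(\sqrt{n})$ independent sets, each of which satisfies $|X_i\cap Y_j|\le 1$ because no single clause meets a block twice. Two things to tidy up: delete the claims that $\Delta(H_S)$ must be bounded by a constant and that an $O(\sqrt{n})$ bound is ``too weak,'' since that is a false start you retract yourself; and note that two clauses sharing a variable can still be adjacent via another block, since sharing a variable simply does not by itself create an edge.
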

\begin{proof}
    We build a graph $H_S$ as follows.
    The vertex set of $H_S$ is the set of clauses of $\Phi$.
    There is an edge between two vertices $C,C'$ if there exists $i\in [k]$ such that $C$ contains a variable $x\in X_i$ and $C'$ contains a different variable $y\in X_i$.
    Since no clause contains two different variables from the same block, $H_S$ contains no self-loop.
    Since every clause contains at most three variables, each variable occurs in at most $D$ clauses, and each $X_i$ contains at most $O(\sqrt{n})$ variables, the maximum degree $\Delta(H_S)$ of $H_S$ is at most $3\cdot D\cdot \sqrt{n} = O(\sqrt{n})$.
    Properly color vertices of $H_S$ greedily with $\ell \le \Delta(H_S)+1 = O(\sqrt{n})$ colors, and let $S_1,\dots,S_\ell$ be the color classes.
    By construction, for every $i\in [k]$ and $j \in [\ell]$, at most one distinct variable from $X_i$ appears in the family of clauses $S_j$.
    %Let $Y_j$ denote the set of variables that appear in $S_j$.
    Thus, $|X_i \cap Y_j|\le 1$ for all $i \in [k]$ and $j \in [\ell]$.
\end{proof}

%In our reduction, rows index the variable blocks $X_1,\dots,X_k$ with $k=\sqrt{n}$, and columns index clause groups $S_1,\dots,S_\ell$ obtained by properly coloring a graph on the set of clauses where two clauses are adjacent when they share different variables from the same block.
%The coloring ensures that for each $i,j$, at most one variable from $X_i$ appears in $S_j$ (so $A[i,j]$ records that value if present), and the sparsification lemma guarantees $O(k)$ colors.
%With this setup, we define row lists $\mathcal{R}_i$ to enforce per‑variable consistency across columns and column lists $\mathcal{C}_j$ so that each $S_j$ is satisfied.\todo{seems this part needs more explanation}

%We will construct an instance of \textsc{LCMB} of dimension $k \times \ell$ (to get a square matrix, add dummy rows or columns where all vectors are allowed).

\versionparagraph{Analysis}
Note that the reduction runs in $2^{O(\sqrt{n})}$ time, and that the constructed instance has dimension $K:=\max(k,\ell)=\Theta(\sqrt{n})$.
Thus an $O^*(2^{o(K^2)})$-time algorithm for \textsc{LCBM} would yield a $2^{o(n)}$-time algorithm for 3-SAT, contradicting ETH.

Now we have the following.
\lcbmlb*

\subsection{Hardness of \textsc{MCIS}} \label{ssec:mcislb}

We now reduce \textsc{LCBM} to \textsc{MCIS}. The reduction proceeds by first normalizing the diagonal of the target matrix via guessing, and then encoding the remaining constraints into graph structure so that feasible matrices correspond to common induced subgraphs.

\versionparagraph{Guessing diagonal entries}
Given an instance $(k,\mathcal{R},\mathcal{C})$ of \textsc{LCBM}, we first guess the diagonal of the matrix $A$.
By appropriately flipping, for each $i\in [k]$, the $i$-th coordinate of each vector in $\mathcal{R}_i$ and $\mathcal{C}_i$ according to the guess, without loss of generality, we can assume all the diagonal entries of $A$ are $1$.
Particularly, this guess yields $2^k$ instances of \textsc{LCBM} with an additional constraint that the diagonal should be all-$1$.
For each of those instances, we construct an equivalent instance of \textsc{MCIS} in $2^{O(k)}$ time, where $|V(G_1)| + |V(G_2)| \in 2^{O(k)}$ and $\cvd(G_1) + \cvd(G_2) \in O(k)$.

    \smallskip
\begin{figure}
    \centering
    \begin{subfigure}[t]{.9\linewidth}
    \begin{center}
        \begin{tikzpicture}[
    x=2cm,y=2cm,
    main/.style={circle,draw,fill=black,inner sep=2pt},
    aux/.style={circle,draw,inner sep=1.8pt},
    every node/.style={font=\footnotesize},
    scale=1.2
  ]

  % parameters
  \def\k{4}      % number of pairs (a^i,b^i)
  \def\ay{0.8}     % y-coordinate of a-path
  \def\by{0}     % y-coordinate of b-path
  \def\rA{0.35}  % radius for 5-cycles
  \def\rB{0.35}  % radius for 7-cycles

  %--- main paths (a^1,...,a^k) and (b^1,...,b^k) ----------------------------
  \foreach \i in {1,...,\k} {
    \node[main,label=below left:{$a^{\i}$}] (a\i) at (\i,\ay) {};
    \node[main,label=above right:{$b^{\i}$}] (b\i) at (\i,\by) {};
    % \node[main] (a\i) at (\i,\ay) {};
    % \node[main] (b\i) at (\i,\by) {};
  }

  \foreach \i in {1,...,\numexpr\k-1\relax} {
    \pgfmathtruncatemacro{\j}{\i+1}
    \draw (a\i) -- (a\j);
    \draw (b\i) -- (b\j);
  }

  %--- matching edges a^i b^i -------------------------------------------------
  \foreach \i in {1,...,\k} {
    \draw (a\i) -- (b\i);
  }

  %--- 5-cycles at each a^i (regular pentagon; a^i is one vertex) -------------
  % We take a^i as the vertex at angle 270 degrees; center is above a^i.
  \foreach \i in {1,...,\k} {
    \pgfmathsetmacro{\cx}{\i}
    \pgfmathsetmacro{\cy}{\ay + \rA}

    % new vertices a^{i,1},...,a^{i,4}
    \foreach \j in {1,...,4} {
      \pgfmathsetmacro{\angle}{270 + 72*\j}
      \coordinate (tmp) at ({\cx + \rA*cos(\angle)},{\cy + \rA*sin(\angle)});
    %   \node[aux,label=above:{$a^{\i,\j}$}] (a-\i-\j) at (tmp) {};
      \node[aux] (a-\i-\j) at (tmp) {};
    }

    % edges of the 5-cycle (a^i, a^{i,1},...,a^{i,4})
    \draw (a\i)
      -- (a-\i-1) -- (a-\i-2) -- (a-\i-3) -- (a-\i-4) -- (a\i);
  }

  %--- 7-cycles at each b^i (regular heptagon; b^i is one vertex) -------------
  % We take b^i as the vertex at angle 90 degrees; center is below b^i.
  \foreach \i in {1,...,\k} {
    \pgfmathsetmacro{\cx}{\i}
    \pgfmathsetmacro{\cy}{\by - \rB}

    % new vertices b^{i,1},...,b^{i,6}
    \foreach \j in {1,...,6} {
      \pgfmathsetmacro{\angle}{90 + 360/7*\j}
      \coordinate (tmpb) at ({\cx + \rB*cos(\angle)},{\cy + \rB*sin(\angle)});
    %   \node[aux,label=below:{$b^{\i,\j}$}] (b-\i-\j) at (tmpb) {};
      \node[aux] (b-\i-\j) at (tmpb) {};
    }

    % edges of the 7-cycle (b^i, b^{i,1},...,b^{i,6})
    \draw (b\i)
      -- (b-\i-1) -- (b-\i-2) -- (b-\i-3)
      -- (b-\i-4) -- (b-\i-5) -- (b-\i-6) -- (b\i);
  }

  %--- extra vertices a^{1,0} and b^{1,0} ------------------------------------
  % Placed further along the ray from the corresponding cycle vertex.

  % a^{1,0} connected to a^{1,1}
  \pgfmathsetmacro{\cxA}{1}
  \pgfmathsetmacro{\cyA}{\ay + \rA}
%   \pgfmathsetmacro{\angA}{270 + 72*1}
  \pgfmathsetmacro{\angA}{216}
%   \node[aux,label=above:{$a^{1,0}$}] (a10)
%     at ({\cxA + 1.8*\rA*cos(\angA)},{\cyA + 1.8*\rA*sin(\angA)}) {};
  \node[aux,label=left:{$a^{1,0}$}] (a10)
    at ({\cxA + 1.8*\rA*cos(\angA)},{\cyA + 1.8*\rA*sin(\angA)}) {};
  \draw (a10) -- (a-1-4);

  % b^{1,0} connected to b^{1,1}
  \pgfmathsetmacro{\cxB}{1}
  \pgfmathsetmacro{\cyB}{\by - \rB}
  \pgfmathsetmacro{\angB}{90 + 360/7*1}
  \node[aux,label=left:{$b^{1,0}$}] (b10)
    at ({\cxB + 1.8*\rB*cos(\angB)},{\cyB + 1.8*\rB*sin(\angB)}) {};
  \draw (b10) -- (b-1-1);

\end{tikzpicture}
    \end{center} 
    \subcaption{Base gadget $G_0$ (illustrated for $k=4$): two length-$k$ paths with attached 5- and 7-cycles, matched by edges $a^i b^i$, plus the extra vertices $a^{1,0}$ and $b^{1,0}$.}
    \label{fig:lb:G0}
    \end{subfigure}

    \vspace{5ex}

    \begin{subfigure}{.9\linewidth}
    \begin{center}
    \begin{tikzpicture}[
        x=2cm,y=2cm,
        main/.style={circle,draw,fill=black,inner sep=2.4pt},
        vtx/.style={circle,draw,inner sep=2pt},
        rect/.style={draw,rounded corners,inner sep=4pt},
        every node/.style={font=\scriptsize},
        >={Latex[length=3mm,width=2mm]}
      ]
      % Rows for i = 1,2,3
      \foreach \row/\i in {0/1, 1.6/2, 3.2    /3} {
        \begin{scope}[shift={(\row,0)}]
        % a_1^i from G_0
        \node[main,label=below left:{$a_1^{\i}$}] (a-\i) at (0.6,-0.7) {};

        % P_1^i: four vertices on a horizontal line
        \node[vtx] (p-\i-1) at (0,0) {};
        \node[vtx] (p-\i-2) at (.4,0) {};
        \node[vtx] (p-\i-3) at (.8,0) {};
        \node[vtx] (p-\i-4) at (1.2,0) {};

        % Rounded rectangle around P_1^i
        % \node[ultra thick,rect,fit=(p-\i-1)(p-\i-2)(p-\i-3)(p-\i-4)] (R-\i) {};
        \node[ultra thick,rect,inner ysep=10pt,fit=(p-\i-1)(p-\i-2)(p-\i-3)(p-\i-4)] (R-\i) {};
        \node[font=\footnotesize] at (0, -.5) {$P_1^{\i}$};

        % Edges a_1^i -- p_1^{i,*}
        \foreach \j in {1,...,4} {\draw (a-\i) -- (p-\i-\j);}
        % \draw (a-\i) -- (R-\i);

        % Clique on P_1^i with curved edges
        \draw (p-\i-1) -- (p-\i-2);
        \draw (p-\i-2) -- (p-\i-3);
        \draw (p-\i-3) -- (p-\i-4);
        \draw (p-\i-1) edge[bend left=20] (p-\i-3);
        \draw (p-\i-2) edge[bend left=20] (p-\i-4);
        \draw (p-\i-1) edge[bend right=15] (p-\i-4);

        % q_1^i and b_1^i, with b_1^i -- q_1^i edge
        \node[vtx,label=above right:{$q_1^{\i}$}] (q-\i) at (0.6,0.7) {};
        \node[main,label=above right:{$b_1^{\i}$}] (b-\i) at (0.6,1.2) {};
        \draw (b-\i) -- (q-\i);

        % Connect the rounded rectangle (representing P_1^i) to q_1^i
        \foreach \j in {1,...,4} {\draw (q-\i) -- (p-\i-\j);}
        \end{scope}
      }
      \draw[dashed, thick] (q-1) -- (R-2);
      \draw[dashed, thick] (q-1) -- (R-3.155);
      \draw[dashed, thick] (q-3) -- (R-1.25);
      \draw[dashed, thick] (q-3) -- (R-2);
      \draw[dashed, thick] (q-2) -- (R-1.50);
      \draw[dashed, thick] (q-2) -- (R-3.130);
    \end{tikzpicture}
    \end{center}
    \subcaption{Construction of $G_1$: each $P_1^i$ is a clique of row vertices attached to $a_1^i$, and each $q_1^j$ attaches to $b_1^j$; dashed edges indicate optional adjacencies encoding vectors in $\mathcal{R}_i$.}
    \label{fig:lb:G1}
    \end{subfigure}
  
    \vspace{5ex}

    \begin{subfigure}{.9\linewidth}
        \begin{center}
    \begin{tikzpicture}[
        x=2cm,y=2cm,
        main/.style={circle,draw,fill=black,inner sep=2.4pt},
        vtx/.style={circle,draw,inner sep=2pt},
        rect/.style={draw,rounded corners,inner sep=4pt},
        every node/.style={font=\scriptsize},
        >={Latex[length=3mm,width=2mm]}
      ]
      % Columns for i = 1,2,3 (same horizontal layout)
      \foreach \row/\i in {0/1, 1.6/2, 3.2/3} {
        \begin{scope}[shift={(\row,0)}]
        % a_2^i from G_0 (below)
        \node[main,label=below left:{$a_2^{\i}$}] (a2-\i) at (0.6,-0.5) {};

        % p_2^i: single vertex in the middle
        \node[vtx, label=below left:{$p_2^{\i}$}] (p2-\i) at (0.6,0) {};

        % Q_2^i: four vertices on a horizontal line above
        \node[vtx] (q2-\i-1) at (0,0.7) {};
        \node[vtx] (q2-\i-2) at (.4,0.7) {};
        \node[vtx] (q2-\i-3) at (.8,0.7) {};
        \node[vtx] (q2-\i-4) at (1.2,0.7) {};

        % Rounded rectangle around Q_2^i
        \node[ultra thick,rect,inner ysep=8pt,fit=(q2-\i-1)(q2-\i-2)(q2-\i-3)(q2-\i-4)] (R2-\i) {};
        \node[font=\footnotesize] at (0, 1.15) {$Q_2^{\i}$};

        % Edge a_2^i -- p_2^i
        \draw (a2-\i) -- (p2-\i);

        % b_2^i above and its edges to all q_2^{i,*}
        \node[main,label=above right:{$b_2^{\i}$}] (b2-\i) at (0.6,1.4) {};
        \foreach \j in {1,...,4} {\draw (b2-\i) -- (q2-\i-\j);}    

        % Connect p_2^i to all q_2^{i,*}
        \foreach \j in {1,...,4} {\draw (p2-\i) -- (q2-\i-\j);}    
        \end{scope}
      }
      % Example cross-connections p_2^i -- Q_2^j according to columns
      \draw[dashed, thick] (p2-1) -- (R2-2);
      \draw[dashed, thick] (p2-1) -- (R2-3.205);
      \draw[dashed, thick] (p2-3) -- (R2-1.335);
      \draw[dashed, thick] (p2-3) -- (R2-2);
      \draw[dashed, thick] (p2-2) -- (R2-1.315);
      \draw[dashed, thick] (p2-2) -- (R2-3.225);
    \end{tikzpicture}
    \end{center}
        \subcaption{Construction of $G_2$: each $Q_2^i$ is an independent set attached to $p_2^i$ and $b_2^i$; dashed edges indicate optional adjacencies from $p_2^i$ to $Q_2^j$ encoding vectors in $\mathcal{C}_j$.}
        \label{fig:lb:G2}
    \end{subfigure}
    \caption{Constructions. Dashed lines indicate that edge may or may not exist.}
    \label{fig:mcis-lb:g0}
\end{figure}

\versionparagraph{Construction}

First, we define a base graph $G_0$.
In our construction, $G_0$ appears as an induced subgraph in both $G_1$ and $G_2$, and intuitively, it plays the role of fixing a correspondence between certain parts of $G_1$ and $G_2$; that is, in feasible solutions, each ``external'' vertex in $G_1$ necessarily corresponds to the respective vertex in $G_2$.

Now, we construct $G_0$, which is illustrated in \Cref{fig:lb:G0}.
Start with two vertex-disjoint paths $(a^1, \dots, a^k)$ and $(b^1, \dots, b^k)$, each on $k$ vertices.
For each $i \in [k]$, add four new vertices $a^{i,1},\dots,a^{i,4}$ and edges so that $(a^i, a^{i,1}, \dots, a^{i,4})$ forms a $5$-cycle.
For each $i \in [k]$, add six new vertices $b^{i,1},\dots,b^{i,6}$ and edges so that $(b^i, b^{i,1}, \dots, b^{i,6})$ forms a $7$-cycle.
For each $i \in [k]$, add the matching edge $a^i b^i$.
Finally, we add two new vertices $a^{1,0}$ and $b^{1,0}$ and add edges $a^{1,0} a^{1,1}$ and $b^{1,0} b^{1,1}$.
%Let $X = \{ a^i, b^i \mid i \in [k] \}$ and $Y = V(G_0) \setminus X$.

%See \Cref{fig:mcis-lb:g0} for an illustration of $G_0$.

Now, we construct $G_1$ (see \Cref{fig:lb:G1}).
We start with a graph isomorphic to $G_0$, renaming the vertex names by adding a subscript $\cdot_1$.
\begin{itemize}
    \item
    For each $i \in [k]$ and $v \in \mathcal{R}_i$, we add a vertex $p_{1}^{i,v}$ and add an edge between $a_1^i$ and $p_1^{i,v}$.
    \item
    For each $i \in [k]$, we add edges so that $P_1^i := \{ p_{1}^{i,v} \mid v \in \mathcal{R}_i \}$ forms a clique.
    \item
    For each $i \in [k]$, we add a vertex $q_1^i$ and add an edge between  $b_1^i$ and $q_1^i$.
    \item
    For each $i, j \in [k]$ and for each $v \in \mathcal{R}_i$, we add an edge between $p_{1}^{i,v}$ and $q_{1}^{j}$ if $v[j] = 1$.
\end{itemize}
Since we assumed that the diagonal entries are all 1,
$P_1^i \cup \{ q_1^i \}$ is a clique for all $i \in [k]$.
Let $P_1 = \bigcup_{i \in [k]} P_1^i$, and $Q_1 = \{ q_1^i \mid i \in [k] \}$.

Next, we construct $G_2$ (see \Cref{fig:lb:G2}).
We start with a graph isomorphic to $G_0$, renaming the vertex names by adding a subscript $\cdot_2$.
\begin{itemize}
    \item
    For each $i \in [k]$, we add a vertex $p_{2}^i$ and add an edge between $a_2^i$ and $p_2^i$.
    \item
  For each $i \in [k]$ and $v \in \mathcal{C}_i$, we add a vertex $q_{2}^{i,v}$ and add edges $p_2^i q_2^{i,v}$ and $b_2^i q_2^{i,v}$.
    \item
    For each $i, j \in [k]$ and for each $v \in \mathcal{C}_i$, we add an edge between $p_{2}^{i}$ and $q_{2}^{j,v}$ if $v[j] = 1$.
\end{itemize}
Since we assumed that the diagonal entries are all $1$, the vertices $p_2^i$ are adjacent to every~$q_2^{i,v}$.
Let $Q_2^i = \{ q_2^{i,v} \mid v \in \mathcal{C}_i \}$ and $P_2 = \{ p_2^i \mid i \in [k] \}$, so each $\{p_2^i\} \cup Q_2^i$ is a star.
Finally, let $Q_2 = \bigcup_{i \in [k]} Q_2^i$.

We can easily see the following.
\begin{lemma}
$\cvd(G_1),\cvd(G_2)\leq 13k$.
\end{lemma}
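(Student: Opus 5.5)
The plan is to exhibit explicit cluster vertex deletion sets of size at most $13k$ in each graph. For each graph we delete the base gadget $G_0$ except for a small remainder. We also delete the $k$ "single" vertices: $Q_1$ in $G_1$ and $P_2$ in $G_2$. Then we check that what is left is a disjoint union of cliques.

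First count the vertices of $G_0$. It has $2k$ path vertices $a^i,b^i$, $4k$ vertices on the $5$-cycles, $6k$ vertices on the $7$-cycles, and the two extra vertices $a^{1,0},b^{1,0}$. That is $12k+2$ vertices in total.

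For $G_1$, let $X_1$ consist of all vertices of the copy of $G_0$ except $a_1^{1,0}$ and $b_1^{1,0}$, together with $Q_1$. Then $|X_1| = 12k + k = 13k$.

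\begin{itemize}
\item In $G_1 - X_1$ the remaining vertices are $a_1^{1,0}$, $b_1^{1,0}$ and $P_1$.
\item The vertex $a_1^{1,0}$ is adjacent only to $a_1^{1,1}$, and $b_1^{1,0}$ only to $b_1^{1,1}$, both of which are deleted. So these two are isolated vertices.
\item Each $P_1^i$ is a clique. Its vertices have neighbours outside $P_1^i$ only among $a_1^i$ and $Q_1$, which are deleted. There are no edges between $P_1^i$ and $P_1^{j}$ for $i\neq j$.
\end{itemize}

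Hence $G_1-X_1$ is a disjoint union of the cliques $P_1^1,\dots,P_1^k$ and two singletons.

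For $G_2$, take $X_2$ to be the same $12k$ vertices of the copy of $G_0$ together with $P_2$, again $13k$ vertices.

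\begin{itemize}
\item The remaining vertices are $a_2^{1,0}$, $b_2^{1,0}$ and $Q_2$.
\item Every edge incident to a vertex $q_2^{i,v}$ goes to $p_2^j$ for some $j$ or to $b_2^i$, all of which are deleted. So $Q_2$ becomes an independent set, i.e. a union of singleton cliques.
\item As before, $a_2^{1,0}$ and $b_2^{1,0}$ are isolated.
\end{itemize}

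Hence $G_2-X_2$ is edgeless, in particular a cluster graph.

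There is no real obstacle. The only care needed is to confirm from the construction that no edge was added between different $P_1^i$, or between vertices of $Q_2$. This holds because all added edges are of the form $a_1^i p_1^{i,v}$, inside $P_1^i$, $b_1^i q_1^i$, $p_1^{i,v}q_1^j$ in $G_1$, and $a_2^ip_2^i$, $p_2^jq_2^{i,v}$, $b_2^iq_2^{i,v}$ in $G_2$. Since $\cvd$ is at most the size of any cluster vertex deletion set, both bounds $\cvd(G_1),\cvd(G_2)\le 13k$ follow.
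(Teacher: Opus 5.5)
Your proof is correct and uses exactly the same deletion sets as the paper, namely $(V(G_0)_1\setminus\{a_1^{1,0},b_1^{1,0}\})\cup Q_1$ and $(V(G_0)_2\setminus\{a_2^{1,0},b_2^{1,0}\})\cup P_2$, each of size $12k+k=13k$. You additionally spell out the check that the remainders are cluster graphs: the cliques $P_1^i$ plus two isolated vertices in $G_1$, and an edgeless graph in $G_2$. The paper leaves this check implicit.
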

\begin{proof}
$(V(G_0)_1\setminus \{a^{1,0}_1,b^{1,0}_1\})\cup Q_1$ is a cluster vertex deletion set of $G_1$ with size $13k$.
%\todo{Isn't $|V(G_0)_1\cup Q_1|=13k+2$? Btw, since we don't need $a^{1,0}_{1}$ and $b^{1,0}_{1}$, $\cvd(G_1) \le 13k$ holds. (Same for $G_{2}$.)}\todo{fixed}
$(V(G_0)_2\setminus \{a^{1,0}_2,b^{1,0}_2\})\cup P_2$ is a cluster vertex deletion set of $G_2$ with size $13k$.
\end{proof}

\versionparagraph{Correctness}

Let $\mathsf{mcis}(G_1,G_2)$ denote the maximum size of a common induced subgraph of $G_1$ and $G_2$.
From now on, we concentrate on proving that the \textsc{LCBM} instance (with additional constraint of all-$1$ diagonal) is feasible if and only if $\mathsf{mcis}(G_1,G_2)\geq t:=14k+1$.
We begin with the forward direction.
\begin{lemma}
Assume the \textsc{LCBM} instance is feasible. Then, $\mathsf{mcis}(G_1,G_2)\geq t = 14k+1$.
\end{lemma}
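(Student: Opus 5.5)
We prove the lemma by exhibiting an explicit common induced subgraph obtained from a feasible matrix. Fix a matrix $A\in\{0,1\}^{k\times k}$ witnessing feasibility, with all diagonal entries equal to $1$ as arranged by the diagonal guess. Write $r_i:=A[i,\cdot]\in\mathcal{R}_i$ and $c_j:=A[\cdot,j]\in\mathcal{C}_j$.

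In $G_1$, take the vertex set
\[
U_1 := V(G_0)_1\cup\{p_1^{i,r_i} : i\in[k]\}\cup Q_1 .
\]
In $G_2$, take
\[
U_2 := V(G_0)_2\cup P_2\cup\{q_2^{j,c_j} : j\in[k]\}.
\]
Define the bijection $\pi\colon U_1\to U_2$ by three rules: $\pi$ is the identity on the copy of $G_0$ (so $x_1\mapsto x_2$), $p_1^{i,r_i}\mapsto p_2^i$, and $q_1^j\mapsto q_2^{j,c_j}$. Since $|V(G_0)|=2k+4k+6k+2=12k+2$, we get $|U_1|=|U_2|=12k+2+2k=14k+2\ge 14k+1$. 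It therefore suffices to show that $\pi$ is an isomorphism between $G_1[U_1]$ and $G_2[U_2]$.

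We check adjacency class by class.
\begin{itemize}
\item Pairs inside the $G_0$ copies agree by construction.
\item Each chosen $p$-vertex has exactly one neighbour in $G_0$: $p_1^{i,r_i}$ is adjacent only to $a_1^i$, and $p_2^i$ only to $a_2^i$.
\item Each chosen $q$-vertex likewise has exactly one neighbour in $G_0$: $q_1^j$ is adjacent only to $b_1^j$, and $q_2^{j,c_j}$ only to $b_2^j$.
\item The chosen $p$-vertices are pairwise nonadjacent on both sides. In $G_1$ they lie in distinct cliques $P_1^i$, which have no edges between them. In $G_2$ the set $P_2$ is independent.
\item The chosen $q$-vertices are pairwise nonadjacent on both sides, since $Q_1$ is independent and there are no edges inside $Q_2$.
\item The $p$--$q$ pairs are the only nontrivial case. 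In $G_1$, $p_1^{i,r_i}q_1^j$ is an edge iff $r_i[j]=1$, i.e.\ iff $A[i,j]=1$. In $G_2$, $p_2^i q_2^{j,c_j}$ is an edge iff the $i$-th entry of $c_j$ equals $1$ (with the convention of the construction that the optional edges from $p_2^i$ into $Q_2^j$ encode the $i$-th coordinate of the column vector), i.e.\ again iff $A[i,j]=1$.
\end{itemize}
The diagonal case $i=j$ is consistent on both sides. In $G_1$, $P_1^i\cup\{q_1^i\}$ is a clique. In $G_2$, $p_2^i$ is adjacent to all of $Q_2^i$. Both agree with $A[i,i]=1$.

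The only delicate point is this last $p$--$q$ check, which must respect the paper's indexing convention for the dashed edges in $G_2$. Once that convention is fixed, every adjacency is preserved, so $\pi$ is an isomorphism and $\mathsf{mcis}(G_1,G_2)\ge 14k+2\ge t$.
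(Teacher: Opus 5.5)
Your proof is correct and follows the paper's argument: the same sets $U_1,U_2$, the same bijection (the $G_0$-copy mapped identically, $p_1^{i,r_i}\mapsto p_2^i$, $q_1^j\mapsto q_2^{j,c_j}$), and the same class-by-class adjacency check. You also read the $p_2^i$--$Q_2^j$ edges correctly as encoding $c_j[i]$, matching the overview and the paper's own verification despite the index slip in the construction's bullet. Your count $|V(G_0)|=12k+2$ is in fact the right one; the paper's $12k+1$ is inconsistent with its own $13k$ bound on the cluster vertex deletion sets. So you get $14k+2\ge t$ vertices, which proves the lemma as stated.
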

%    $(\Rightarrow)$
%Assume the \textsc{LCBM} instance is feasible and
\begin{proof}
Let $A\in\{0,1\}^{k\times k}$ be a witnessing matrix.
For each $i\in[k]$, let $r_i\in\mathcal{R}_i$ denote the $i$-th row of $A$, and for each $j\in[k]$, let $c_j\in\mathcal{C}_j$ denote the $j$-th column of $A$, so $r_i[j]=A[i,j]=c_j[i]$ for all $i,j$.

Define the vertex sets
\begin{align*}
U_1 \;&:=\; V(G_0)_1 \;\cup\; \{\,p_1^{i,r_i} \mid i\in[k]\,\} \;\cup\; \{\,q_1^j \mid j\in[k]\,\}
\text{ and } \\
U_2 \;&:=\; V(G_0)_2 \;\cup\; \{\,p_2^{i} \mid i\in[k]\,\} \;\cup\; \{\,q_2^{j,c_j} \mid j\in[k]\,\}.
\end{align*}
Let $H_1:=G_1[U_1]$ and $H_2:=G_2[U_2]$.
We claim that $H_1$ and $H_2$ are isomorphic.
Since $|U_1|=|U_2|=|V(G_0)|+2k=14k+1$ (by the definition of $G_0$ we have $|V(G_0)|=12k+1$), this yields a common induced subgraph of size $t$.

Construct a bijection $\psi \colon U_1\to U_2$ as follows.
\begin{itemize}
  \item Map the copy of $G_0$ in $G_1$ isomorphically onto the copy of $G_0$ in $G_2$.
  \item For each $i\in[k]$, set $\psi(p_1^{i,r_i}):=p_2^{i}$.
  \item For each $j\in[k]$, set $\psi(q_1^{j}):=q_2^{j,c_j}$.
\end{itemize}

We verify that $\psi$ is an isomorphism between the induced subgraphs as follows.
\begin{itemize}
  \item Adjacencies inside $G_0$ are preserved by the first item.
  \item For each $i$, $p_1^{i,r_i}$ is adjacent to $a_1^i$ (and to no other vertex of $G_0$), and $p_2^i$ is adjacent to $a_2^i$ (and to no other vertex of $G_0$); hence $\psi$ preserves all incidences between the $p$-vertices and $G_0$.
  \item For each $j$, $q_1^j$ is adjacent to $b_1^j$ (and to no other vertex of $G_0$), and $q_2^{j,c_j}$ is adjacent to $b_2^j$ (and to no other vertex of $G_0$); hence $\psi$ preserves all incidences between the $q$-vertices and $G_0$.
  \item There are no edges among distinct $p$-vertices in either $H_1$ or $H_2$, and no edges among distinct $q$-vertices; this follows from the construction (in $G_1$ we take exactly one vertex from each clique $P_1^i$, and there are no edges between different $P_1^i$'s; in $G_2$ the vertices $p_2^i$ and the $q_2^{j,\cdot}$ only connect as specified below).
  Hence $\psi$ preserves these non-adjacencies.
  \item Finally, for every $i,j$,
  \[
  \{p_1^{i,r_i},q_1^{j}\}\in E(G_1)
  \iff r_i[j]=1
  \iff c_j[i]=1
  \iff \{p_2^{i},q_2^{j,c_j}\}\in E(G_2).
  \]
  Thus $\psi$ preserves exactly the $p$-$q$ adjacencies.
\end{itemize}
Since all edges among the vertices of $U_1$ (and $U_2$) are covered by the items above, $\psi$ is an isomorphism.
Therefore, the constructed \textsc{MCIS} instance is a \yes-instance whenever the \textsc{LCBM} instance is feasible.
\end{proof}

Now, we prove the backward direction.
The following lemma is easy but useful.
\begin{lemma} \label{lem:mcis-bound}
For any graphs $G_1,G_2$ and any $S\subseteq V(G_1)$,
\[
    \mathsf{mcis}(G_1,G_2)\ \le\ \mathsf{mcis}(G_1-S,G_2)\ +\ |S|.
\]
If equality holds, then $S$ is fully used by every maximum common induced subgraph, i.e., $S\subseteq \phi_1(V(H))$ for any maximum $H$ with induced embedding $\phi_1$.
\end{lemma}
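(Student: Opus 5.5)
The plan is to take a maximum common induced subgraph of $G_1$ and $G_2$ and restrict it to the part of $G_1$ avoiding $S$. Let $H$ be a maximum common induced subgraph, with induced embeddings $\phi_1\colon V(H)\to V(G_1)$ and $\phi_2\colon V(H)\to V(G_2)$, so $|V(H)|=\mathsf{mcis}(G_1,G_2)$. Set $H' := H - \phi_1^{-1}(S)$.

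First I would check that $H'$ is a common induced subgraph of $G_1-S$ and $G_2$. The restriction $\phi_1|_{V(H')}$ is an induced embedding of $H'$ into $G_1$ whose image avoids $S$, hence an induced embedding into $G_1-S$. Likewise, $\phi_2|_{V(H')}$ is an induced embedding of $H'$ into $G_2$, since restricting an induced embedding to an induced subgraph preserves both adjacency and non-adjacency.

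Next comes the counting. By injectivity of $\phi_1$ we have $|\phi_1^{-1}(S)|\le |S|$, so
\[
\mathsf{mcis}(G_1-S,G_2)\ \ge\ |V(H')|\ =\ |V(H)|-|\phi_1^{-1}(S)|\ \ge\ \mathsf{mcis}(G_1,G_2)-|S|,
\]
which is the claimed inequality.

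For the equality clause, assume $\mathsf{mcis}(G_1,G_2)=\mathsf{mcis}(G_1-S,G_2)+|S|$. Let $H$ be an arbitrary maximum common induced subgraph with embedding $\phi_1$, and run the same argument. This gives $\mathsf{mcis}(G_1-S,G_2)\ge |V(H)|-|\phi_1^{-1}(S)|$. Combined with the equality, it yields $|\phi_1^{-1}(S)|\ge |S|$. Since $\phi_1^{-1}(S)$ injects into $S$, every vertex of $S$ has a preimage, that is, $S\subseteq \phi_1(V(H))$.

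No step here is a real obstacle. The only point requiring care is that the equality argument must be applied to an arbitrary maximum $H$ and an arbitrary embedding $\phi_1$, not just to one specific maximum solution, because the lemma makes a claim about every maximum $H$.
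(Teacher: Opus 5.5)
Your proposal is correct and follows the paper's proof essentially verbatim: restrict a maximum common induced subgraph to the vertices whose $\phi_1$-image avoids $S$, bound the lost vertices by $|S|$ via injectivity, and in the equality case conclude $|\phi_1(V(H))\cap S|=|S|$. Your remark that the equality argument applies to an arbitrary maximum $H$ and arbitrary $\phi_1$ is exactly what makes the "every maximum common induced subgraph" clause hold, and the paper's argument implicitly does the same.
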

\begin{proof}
Let $H$ be a maximum common induced subgraph of $G_1$ and $G_2$ with induced embeddings $\phi_i\colon V(H)\to V(G_i)$.
Set $Z := \{x\in V(H) : \phi_1(x)\notin S\}$.
Then $H[Z]$ is a common induced subgraph of $G_1-S$ and $G_2$ (witnessed by the restrictions $\phi_i|_{Z}$), hence $|Z|\le \mathsf{mcis}(G_1-S,G_2)$.
So, we have
\[
    |V(H)| \;=\; |Z| \;+\; |\phi_1(V(H))\cap S|
    \;\le\; \mathsf{mcis}(G_1-S,G_2) \;+\; |S|.
\]
If equality $\mathsf{mcis}(G_1,G_2) = \mathsf{mcis}(G_1-S,G_2) + |S|$ holds, then $|\phi_1(V(H))\cap S|=|S|$, which forces $S\subseteq \phi_1(V(H))$; in particular, $S$ is fully used.
\end{proof}

Now, we prove the following.
\begin{lemma}
Assume $\mathsf{mcis}(G_1,G_2)\geq t = 14k+1$.
Then, the \textsc{LCBM} instance is feasible.
\end{lemma}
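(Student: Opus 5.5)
The plan is to take a common induced subgraph $H$ on at least $14k+1$ vertices, with embeddings $\phi_1,\phi_2$, and show that it must look exactly like the witness built in the forward direction. Reading off the $p$--$q$ adjacencies then gives the matrix $A$.

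\emph{Step 1: upper bounds.} First I bound $\mathsf{mcis}$ from above in order to force structure. Let $S_1:=V(G_0)_1\cup Q_1$, which has $13k+1$ vertices. The graph $G_1-S_1$ is the disjoint union of the cliques $P_1^i$. Any induced subgraph of $G_2$ that is a cluster graph has at most $k$ vertices from the stars $\{p_2^i\}\cup Q_2^i$ together with cliques in $G_0$. I will instead argue more directly by bounding the number of vertices of $H$ mapped outside the gadget. In $G_2$, every vertex outside $V(G_0)_2$ lies in $P_2\cup Q_2$. An independent set among the $q$'s can be large, but in $G_1$ the vertices outside $V(G_0)_1$ induce $P_1\cup Q_1$, in which every independent set has size at most $2k$ (one vertex per clique $P_1^i$ plus the $q_1^j$). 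So I want to show $|V(H)|\le 14k+1$, with equality only if $V(G_0)_1$ and $V(G_0)_2$ are fully used. I would apply \Cref{lem:mcis-bound} with $S=V(G_0)_1$, and symmetrically on the $G_2$ side. The intended bound is $\mathsf{mcis}(G_1-V(G_0)_1,\,G_2-V(G_0)_2)\le 2k$: the part of $H$ outside both gadgets embeds into the cluster-like graph $G_1[P_1\cup Q_1]$ and into the star-like graph $G_2[P_2\cup Q_2]$. A common induced subgraph of these two graphs is a bipartite-with-cliques graph whose size I expect to bound by $2k$ by a counting argument (each clique $P_1^i$ contributes at most one vertex unless it is matched to an edge inside a star, and so on). Mixed images, where gadget vertices of one side correspond to non-gadget vertices of the other, are handled by the next step.

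\emph{Step 2: rigidity of the gadget.} This is the main obstacle. I must show that in any optimal $H$, $\phi_2\circ\phi_1^{-1}$ maps $V(G_0)_1$ onto $V(G_0)_2$ by the canonical isomorphism. The tools are the cycles: the $5$-cycles and $7$-cycles attached to the $a^i$ and $b^i$ are induced odd cycles. Outside $G_0$, $G_1$ has no induced $C_5$ or $C_7$ through non-gadget vertices, because $P_1\cup Q_1$ together with the single attachments $a_1^ip$ and $b_1^jq$ creates only short cycles (triangles and $C_4$'s). The same holds in $G_2$ (the stars create $C_4$'s). Since Step 1 forces all $12k+1$ gadget vertices to be used, each induced $C_5$ on one side must map to an induced $C_5$ on the other, hence to some gadget $5$-cycle, and likewise for each $C_7$. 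The pendant vertices $a^{1,0}$ and $b^{1,0}$ break the symmetry and pin down index $1$; the path structure $a^1\cdots a^k$ then propagates the labelling to every index. The alternation of $5$- and $7$-cycles prevents swapping the $a$-side with the $b$-side. I expect this odd-cycle case analysis to be the most delicate part, especially ruling out images that use some $p$ or $q$ vertices inside a cycle.

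\emph{Step 3: reading off the matrix.} With $G_0$ fixed, the remaining $2k$ vertices of $H$ lie in $P_1\cup Q_1$ and in $P_2\cup Q_2$. In $G_1$ the only neighbours of $a_1^i$ outside $G_0$ are in $P_1^i$, and the only neighbour of $b_1^j$ outside $G_0$ is $q_1^j$. Counting then forces exactly one vertex $p_1^{i,r_i}$ for each $i$, the vertex $q_1^j$ for each $j$, and correspondingly $p_2^i$ and one vertex $q_2^{j,c_j}$. These are matched by adjacency to $a^i$ and $b^j$, so $\psi(p_1^{i,r_i})=p_2^i$ and $\psi(q_1^j)=q_2^{j,c_j}$. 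Preservation of the $p$--$q$ edges gives $r_i[j]=c_j[i]$, so the matrix with rows $r_i\in\mathcal{R}_i$ has columns $c_j\in\mathcal{C}_j$, and the instance is feasible.
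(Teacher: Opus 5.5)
There is a genuine gap: Step~1 never establishes that $V(G_0)_1$ is fully used. You defer ``mixed images'' to Step~2, but Step~2 opens by assuming that Step~1 has already forced all gadget vertices into the solution, so the argument is circular. The quantity you aim to bound, $\mathsf{mcis}(G_1-V(G_0)_1,\,G_2-V(G_0)_2)\le 2k$, cannot give this conclusion anyway. Reaching it via \Cref{lem:mcis-bound} on both sides costs $|V(G_0)_1|+|V(G_0)_2|$, which yields only $\mathsf{mcis}(G_1,G_2)\le 2k+2|V(G_0)|$; this is far from tight and forces nothing.

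What you need is a bound against the \emph{whole} of $G_2$, namely $\mathsf{mcis}(G_1-V(G_0)_1,\,G_2)\le 2k$. The missing idea is that $G_2$ is triangle-free: the gadget, the bipartite graph $G_2[P_2\cup Q_2]$, and the attachment edges $a_2^ip_2^i$, $b_2^jq_2^{j,v}$ create no triangle. Hence a common induced subgraph of $G_1-V(G_0)_1$ and $G_2$ uses at most two vertices of any clique of $G_1$. Moreover, $V(G_1)\setminus V(G_0)_1$ is covered by the $k$ cliques $P_1^i\cup\{q_1^i\}$, which is exactly where the all-ones diagonal normalization enters. Then \Cref{lem:mcis-bound} with $S=V(G_0)_1$ is tight and forces $V(G_0)_1\subseteq\phi_1(V(H))$. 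Your remark about independent sets in $P_1\cup Q_1$ does not replace this, since the common subgraph need not be independent there.

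Step~2 also rests on a false claim: $G_1$ and $G_2$ do contain induced odd cycles through non-gadget vertices. For example, whenever $p_2^iq_2^{i+1,v}\in E(G_2)$, the vertices $a_2^i,p_2^i,q_2^{i+1,v},b_2^{i+1},a_2^{i+1}$ induce a $C_5$. Similarly, for $v\in\mathcal{R}_i$ with $v[i+1]=1$, the vertices $a_1^i,p_1^{i,v},q_1^{i+1},b_1^{i+1},a_1^{i+1}$ induce a $C_5$ in $G_1$. So cycle length alone does not force a gadget $5$-cycle onto a gadget $5$-cycle.

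The paper instead uses that $G_2-\{a_2^i,b_2^i : i\in[k]\}$ is bipartite. The used copy of $G_0$ contains $2k$ vertex-disjoint odd cycles, so each image meets this $2k$-vertex odd cycle transversal in exactly one vertex. The only odd cycles of $G_2$ meeting it in exactly one vertex are the gadget $5$- and $7$-cycles. This gives $V(G_0)_2\subseteq\phi_2(V(H))$, after which the pendants and the paths fix the labelling as you describe.

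Once the gadget is pinned, your Step~3 can be completed by the counting you hint at. Every non-gadget vertex of $H$ has exactly one gadget neighbour, some $a^i$ or $b^j$. The vertex $a_2^i$ has the unique non-gadget neighbour $p_2^i$, and $b_1^j$ has the unique non-gadget neighbour $q_1^j$. So at most $2k$ non-gadget vertices are used, and equality forces one at each. This is a valid alternative to the paper's second application of \Cref{lem:mcis-bound} with $P_2$.
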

\begin{proof}

    Let $H$ be a common induced subgraph of $G_1$ and $G_2$ of size $t=14k+1$ with induced embeddings $\phi_i \colon V(H)\to V(G_i)$ ($i=1,2$).
    We show that the original \textsc{LCBM} instance is a yes-instance by constructing a feasible matrix.

    First, we show that the whole copy of $G_0$ is used on both sides.
    Let $S_1:=V(G_0)_1$ and $S_2:=V(G_0)_2$.
    We observe that $G_2$ is triangle-free as follows:
    \begin{itemize}
        \item $G_0$ is triangle-free by construction.
        \item $G_2-S_2$ is bipartite with bipartition $(P_2,Q_2)$, hence triangle-free.
        \item Two vertices in $S_2$ do not share a neighbor in $P_2 \cup Q_2$.
        \item Two vertices in $P_2 \cup Q_2$ share a neighbor in $S_2$ if and only if they are both part of $Q_2^i$ (which is an independent set) for some $i \in [k]$. 
    \end{itemize}
    Consequently, $G_2$ is triangle-free, and any common induced subgraph of $G_1 - S_1$ and $G_2$ must be triangle-free.
    %\todo{couldn't understand. Which of $\mathsf{mcis}(G_1-S_1,G_2)$ and $\mathsf{mcis}(G_1,G_2-S_2)$ are you talking about?}
    %\todo{Fixed?}
    Hence, for a clique $C$ in $G_1$, at most two vertices from $C$ can be part of the common subgraph.
    Observe that the vertex set of $G_1-S_1$ can be partitioned into $k$ cliques $P_1^i \cup \{ q_1^i \}$ for $i \in [k]$.
    Therefore, we have $\mathsf{mcis}(G_1-S_1,G_2) \le 2k$.
    By Lemma~\ref{lem:mcis-bound} and $|S_1|=|V(G_0)|=12k+1$, we get
    \[
      \mathsf{mcis}(G_1,G_2)\ \le\ 2k + (12k+1)\ =\ 14k+1\ =\ |V(H)|.
    \]
    Hence equality holds throughout, and Lemma~\ref{lem:mcis-bound} implies $S_1$ is fully used, i.e., $S_1 \subseteq \phi_1(V(H))$.

    In particular, $H$ contains a copy of $G_0$ as an induced subgraph.
    By construction, $G_0$ contains $2k$ odd vertex-disjoint odd cycles $(a^i, a^{i,1}, \dots, a^{i,4})$ and $(b^i, b^{i,1}, \dots, b^{i,6})$ for $i \in [k]$.
    Observe that $G_2-\{ a^i_2, b^i_2 \mid i \in [k] \}$ is bipartite.
    So, for each of the $2k$ $5$-/$7$-cycles $C$ in $H$, $\phi_2(V(C))$ contains exactly one vertex from $\{ a^i_2, b^i_2 \mid i \in [k] \}$.
    By construction of $G_2$, for each $i \in [k]$, there is only one odd cycle $C_2$ with $V(C_2) \cap \{ a^i_2, b^i_2 \mid i \in [k] \} = \{ a_2^i \}$, namely, $(a^i_2, a^{i,1}_2, \dots, a^{i,4}_2)$, and only one odd cycle $C_2$ with $V(C_2) \cap \{ a^i_2, b^i_2 \mid i \in [k] \} = \{ b_2^i \}$, namely, $(b^i_2, b^{i,1}_2, \dots, b^{i,6}_2)$.
    Thus $S_2$ is fully used as well.
    The ladder formed by the two paths and the matching edges is bipartite, so the only odd cycles in $G_0$ are the attached $5$- and $7$-cycles. Hence any induced isomorphism between the copies of $G_0$ maps $a$-vertices (on $5$-cycles) to $a$-vertices and $b$-vertices (on $7$-cycles) to $b$-vertices. Moreover, the unique leaves $a^{1,0}$ and $b^{1,0}$ mark the cycles attached to $a^1$ and $b^1$, fixing these endpoints. The path structure then forces $a_1^i$ to correspond to $a_2^i$ and $b_1^j$ to $b_2^j$ for all $i,j\in[k]$.

     Next, we argue that exactly one $p$-vertex per $i$ and exactly one $q$-vertex per $j$ are used.
    Consider the residual pair $G_1' := G_1 - S_1$ and $G_2' := G_2 - S_2$.
    %Since $P_2$ is a vertex cover of $G_2'$ (all edges of $G_2'$ go between $P_2$ and $Q_2$), 
    Lemma~\ref{lem:mcis-bound} gives
    \[
      \mathsf{mcis}(G_1',G_2') \;\le\; \mathsf{mcis}(G_1',\,G_2'-P_2) + |P_2|.
    \]
    But since $V(G_1')$ can be partitioned into $k$ cliques and $V(G_2') \setminus P_2 = Q_2$ is an independent set,
    $\mathsf{mcis}(G_1',\, G_2' - P_2) \le k$, and therefore
    \[
      \mathsf{mcis}(G_1',G_2') \;\le\; k + |P_2| \;=\; 2k.
    \]
    Since $|V(H)| = |S_1| + 2k$, we have $\mathsf{mcis}(G_1',G_2') = 2k$ and equality holds in the lemma. Thus $P_2$ is fully used by $H$.
    Let $a^i \in V(H)$ be the vertex with $\phi_1(a^i)=a_1^i$ (equivalently, $\phi_2(a^i)=a_2^i$), and let $p^i$ be the vertex with $\phi_2(p^i) = p_2^i$.
    Since $a_2^i$ and $p_2^i$ are adjacent in $G_2$, $a^i$ and $p^i$ are adjacent in $H$.
    Thus, $\phi_1(a^i)=a_1^i$ is adjacent to $\phi_1(p^i)$, that is, $\phi_1(p^i) = p_1^{i,r_i}$ for some $r_i \in \mathcal{R}_i$.

    Similarly, fix $j\in[k]$. 
    In $G_2$ the neighbors of $b_2^j$ outside $S_2$ are precisely the vertices $\{q_2^{j,v}: v\in\mathcal{C}_j\}$.
    From the equality case above we already know that $|V(H)\setminus \phi_2^{-1}(S_2)|=2k$ and that $P_2$ is fully used; hence exactly $k$ vertices of $Q_2$ are used by $H$.
    If for some $j$ the subgraph $H$ used two distinct vertices $q_2^{j,v}\neq q_2^{j,v'}$, then (since both are adjacent to $b_2^j$ in $G_2$) the vertex of $H$ mapped to $b_2^j$ would have two neighbors outside $\phi_2^{-1}(S_2)$.
    % Because $\phi_1$ and $\phi_2$ are induced embeddings of the same graph $H$, this would force the vertex of $H$ mapped to $b_1^j$ to have two neighbors outside $\phi_1^{-1}(S_1)$, 
    This is impossible because in $G_1$ the only such neighbor of $b_1^j$ is $q_1^j$.
    Therefore, for each $j$ at most one vertex from $\{q_2^{j,v}: v\in\mathcal{C}_j\}$ is used; combined with the fact that exactly $k$ vertices of $Q_2$ are used in total, it follows that for each $j$ there is exactly one $c_j\in\mathcal{C}_j$ with $\phi_2^{-1}(q_2^{j,c_j})\in V(H)$, and similarly $\phi_1^{-1}(q_1^j)\in V(H)$ is the unique outside neighbor of $\phi_1^{-1}(b_1^j)=b^j$.
    In particular, $ \phi_1^{-1}(q_1^j) = \phi_2^{-1}(q_2^{j,c_j}) $ for every $j\in[k]$.

    Finally, we are ready to extract a feasible matrix for \textsc{LCBM}.
    Define a $k\times k$ Boolean matrix $A$ by taking, for each $i\in[k]$, its $i$-th row to be $r_i\in\mathcal{R}_i$, and, for each $j\in[k]$, its $j$-th column to be $c_j\in\mathcal{C}_j$.
    It remains to verify consistency, i.e., that for all $i,j$,
    \[
      r_i[j]\ =\ A[i,j]\ =\ c_j[i].
    \]
    By construction, we have
    \[
      \{p_1^{i,r_i}, q_1^j\}\in E(G_1) \iff r_i[j]=1
      \quad\text{and}\quad
      \{p_2^{i}, q_2^{j,c_j}\}\in E(G_2) \iff c_j[i]=1
    \]
    Since $\phi_1^{-1}(p_1^{i,r_i}) = \phi_2^{-1}(p_2^i)$ and $\phi_1^{-1}(q_1^j) = \phi_2^{-1}(q_2^{j,c_j})$,
    we obtain
    \begin{align*}
      \{p_1^{i,r_i}, q_1^j\}\in E(G_1)
      &\iff
      \{\phi_1^{-1}(p_1^{i,r_i}), \phi_1^{-1}(q_1^j)\}\in E(H) \\
      &\iff 
      \{\phi_2^{-1}(p_2^{i}),\phi_2^{-1}(q_2^{j,c_j}) \}\in E(H)
      \iff
      \{p_2^{i}, q_2^{j,c_j}\}\in E(G_2).
    \end{align*}
    Therefore $r_i[j]=c_j[i]$ for all $i,j$, so $A[i,\cdot]=r_i\in\mathcal{R}_i$ and $A[\cdot,j]=c_j\in\mathcal{C}_j$ simultaneously, i.e., the \textsc{LCBM} instance is feasible.
    %This completes the proof of the backward direction.
\end{proof}

\versionparagraph{Analysis}

Now we are ready prove our main theorem of this section.

\mcislb*
\begin{proof}
Given an instance of \textsc{LCBM}, the construction in this section yields $2^k$ instances of \textsc{MCIS} with $|V(G_1)|+|V(G_2)|\in 2^{O(k)}$ and $\cvd(G_1)+\cvd(G_2)\leq O(k)$ such that the \textsc{LCBM} instance is an yes-instance if and only if at least one of \textsc{MCIS} instances is an yes-instance.
Thus, if \textsc{MCIS} can be solved in $O^*(2^{o(k^2)})=2^{o(k^2)}\cdot 2^{O(k)}$ time for $k=\cvd(G_1)+\cvd(G_2)$, \text{LCBM} can also be solved in $2^{o(k^2)}\cdot 2^{O(k)}\cdot 2^{O(k)}=2^{o(k^2)}$ time, contradicting to ETH.
\end{proof}

%\begin{proposition}
%    Given an instance of the matrix problem, one can construct in time $2^{O(k)}$ an equivalent instance of \textsc{MCIS} where $|V(G_1)| + |V(G_2)| \in 2^{O(k)}$ and $\cvd(G_1) + \cvd(G_2) \in O(k)$.
%\end{proposition}

\section{On the complexity of 3-MCIS}

In this section, we study \prob{$3$-MCIS}, the problem of finding a maximum common induced subgraph of three input graphs.
In \Cref{ssec:3mcis-cvd} we give a simple linear-time algorithm for the case when all three graphs are cluster graphs and show NP-hardness even when each input graph $G_i$ satisfies $\cvd(G_i)=2$.
The case in which the input graphs have cluster deletion number~1 remains open.
In \Cref{ssec:3mcis-vc} we give an FPT algorithm parameterized by vertex cover number.

\subsection{Parameterizing by Cluster Vertex Deletion Number} \label{ssec:3mcis-cvd}

We first observe that \textsc{3-MCIS} can be solved in polynomial time on cluster graphs.

% --- 3MCIS on cluster graphs: tight characterization ---
\begin{lemma}\label{lem:3mcis-cluster}
Let $G_1,G_2,G_3$ be cluster graphs. 
For each $i\in\{1,2,3\}$, let $c_i$ be the number of connected components of $G_i$, and let the component sizes, sorted nonincreasingly, be $n_i^1\ge n_i^2\ge \dots \ge n_i^{c_i}$. 
Adopting the convention $n_i^j=0$ for all $j>c_i$, let $n^j = \min \{n_i^j, n_2^j, n_3^j \}$.
The instance of \textnormal{\textsc{3-MCIS}} with target size $t$ is a \yes-instance if and only if
\(
  t \le \sum_{j\ge 1} n_j.
\)
\end{lemma}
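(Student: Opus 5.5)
The plan is to prove both directions through a structural description of common induced subgraphs of cluster graphs. (The statement has a typo: $n^j=\min\{n_1^j,n_2^j,n_3^j\}$, and the sum should read $\sum_{j\ge1} n^j$.) The basic fact is that an induced subgraph of a cluster graph is again a cluster graph. So any common induced subgraph $H$ is a disjoint union of cliques, say of sizes $h^1\ge h^2\ge\dots\ge h^{c}$.

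An induced embedding $\phi_i$ of $H$ into $G_i$ must send each clique of $H$ into a single component of $G_i$, since adjacent vertices cannot lie in different components. It must also send distinct cliques of $H$ into distinct components, since two nonadjacent vertices cannot lie in the same clique. Hence $\phi_i$ induces an injection $\pi_i$ from the cliques of $H$ to the components of $G_i$, and each clique fits inside its image: $h^j\le |\pi_i(\text{$j$-th clique})|$. Conversely, any such injection with these size bounds gives an induced embedding of $H$ into $G_i$.

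\textbf{Sufficiency.} Suppose $t\le\sum_j n^j$. Take $H$ to be the disjoint union of cliques of sizes $n^1,n^2,\dots$, discarding zero terms. For each $i$, map the $j$-th clique into the $j$-th largest component of $G_i$; this is valid because $n^j\le n_i^j$. So $H$ is a common induced subgraph on $\sum_j n^j\ge t$ vertices. A common induced subgraph of any smaller size $t$ is obtained by deleting vertices, since deleting vertices from a common induced subgraph keeps it common.

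\textbf{Necessity.} Let $H$ be a common induced subgraph with clique sizes $h^1\ge h^2\ge\cdots$. The key step is to show $h^j\le n_i^j$ for every $i$ and $j$. Fix $j$. The $j$ largest cliques of $H$ each have size at least $h^j$, and $\pi_i$ maps them injectively to $j$ distinct components of $G_i$, each of size at least $h^j$. So $G_i$ has at least $j$ components of size at least $h^j$, which gives $n_i^j\ge h^j$. Taking the minimum over $i$ yields $h^j\le n^j$. Summing over $j$ gives $|V(H)|=\sum_j h^j\le\sum_j n^j$; for $j>\min_i c_i$ we have $n^j=0$, which forces $h^j=0$. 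This sorted-domination step is the only nontrivial part of the argument, and the counting above settles it.

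For the running time, sort the component sizes of each $G_i$, which takes linear time by counting sort, and then sum the entrywise minima.
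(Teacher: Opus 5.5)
Your proof is correct and follows essentially the same route as the paper: for sufficiency you build the cluster graph with one clique of size $n^j$ for each $j$ and map the $j$-th clique into the $j$-th largest component of each $G_i$. For necessity you use, as the paper does, that the $j$ largest cliques of $H$ go injectively into $j$ distinct components of $G_i$, which forces $h^j \le n_i^j$; you are somewhat more explicit than the paper about why induced embeddings between cluster graphs are exactly size-respecting injections between components, and you rightly flag the index typos in the statement ($n_i^j$ for $n_1^j$, and $n_j$ for $n^j$).
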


\begin{proof}
($\Rightarrow$) Let $t:=\sum_{j\ge 1} n_j$ and build a cluster graph $H$ of $t$ vertices that has, for each $j$ with $n_j>0$, one clique of size $n_j$. 
We obtain an induced embedding by mapping the $j$-th clique of $H$ into the $j$-th largest component of each $G_i$. 

($\Leftarrow$) Conversely, let $H$ be any common induced subgraph of $G_1,G_2,G_3$. 
Let its component sizes, sorted nonincreasingly, be $h^1\ge h^2\ge\dots\ge h^r$. 
Fix $i\in\{1,2,3\}$ and any $j\in\{1,\dots,r\}$. 
The $j$ largest components of $H$ must be mapped to $j$ distinct components of $G_i$. 
Among those $j$ target components, the smallest size is at most $n_i^j$ (since the $j$-th largest component of $G_i$ upper-bounds the minimum over any $j$ chosen components).
Because each mapped component of $H$ must fit entirely inside its target clique, we have $h^j\le n_i^j$.
As this holds for every $i$, we obtain $h^j\le \min\{n_1^j,n_2^j,n_3^j\}=n_j$ for all $j$. 
\end{proof}

The following is immediate from Lemma~\ref{lem:3mcis-cluster}.

\begin{theorem}
    \textsc{3-MCIS} is polynomial-time solvable when $G_i$ is a cluster graph  for all $i \in \{ 1, 2, 3 \}$.
\end{theorem}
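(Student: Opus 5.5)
The theorem follows directly from Lemma~\ref{lem:3mcis-cluster}, so the plan is to turn that characterization into an algorithm and check its running time.

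First, recognize the components of each input graph. Given $G_1,G_2,G_3$ (which we may verify are cluster graphs in linear time by checking that every connected component is complete), we compute the connected components of each $G_i$ by BFS/DFS in $O(|V(G_i)|+|E(G_i)|)$ time and record their sizes.

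Second, compute the bound. For each $i$, we sort the component sizes nonincreasingly to obtain $n_i^1\ge n_i^2\ge\dots\ge n_i^{c_i}$; since the sizes are integers at most $|V(G_i)|$, counting sort makes this linear. We pad with zeros up to index $\max_i c_i$. We then form $n^j=\min\{n_1^j,n_2^j,n_3^j\}$ and the sum $\sum_j n^j$; only indices $j\le\min_i c_i$ contribute, so this takes linear time.

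Third, decide. We answer \yes{} exactly when $t\le\sum_j n^j$. Correctness is precisely Lemma~\ref{lem:3mcis-cluster}. The forward direction of its proof is constructive: map the $j$-th clique of size $n^j$ into the $j$-th largest component of each $G_i$. So we can also output the optimal common induced subgraph and its three embeddings in linear time.

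There is no real obstacle here. The only point worth care is that the lemma's argument relies on components of $H$ being mapped into distinct components of each $G_i$, and that anticomplete cliques remain anticomplete. This is already handled in the lemma, so the theorem is a direct corollary.
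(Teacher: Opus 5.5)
Your proposal is correct and takes essentially the paper's route: the paper presents the theorem as an immediate consequence of Lemma~\ref{lem:3mcis-cluster}. Your steps (computing components, sorting their sizes, summing the entrywise minima, and comparing with $t$) make explicit the linear-time algorithm the paper mentions but does not spell out.
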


From now on, we prove the following:
\tmcislb*

\begin{proof}

We prove NP-hardness by a polynomial-time reduction from \textsc{3-SAT}. 
Given a 3-CNF formula $\Phi$, we construct three graphs $G_1,G_2,G_3$ with $\cvd(G_i)=2$ such that
$\Phi$ is satisfiable if and only if there exists a common induced subgraph of size
\[
t \;=\; \sum_{j \ge 1} n^j \;+\; 2,
\]
where $n^j:=\min\{n_1^j,n_2^j,n_3^j\}$ as defined below.

Let $X_i=\{v_i^1,v_i^2\}$ be a cluster vertex deletion set of size $2$ for $G_i$.
As the notation suggests, the intended embeddings map the vertices $v_i^1$ to one another and $v_i^2$ to one another across the three graphs.
By definition, $G_i-X_i$ is a cluster graph, i.e., every connected component is a clique.
As in \Cref{lem:3mcis-cluster}, label the components $C_i^1,C_i^2,\dots$ in nonincreasing order of size and set $n_i^j:=|C_i^j|$ and $n_j:=\min\{n_1^j,n_2^j,n_3^j\}$.
By \Cref{lem:3mcis-cluster}, any common induced subgraph contained entirely in the cluster parts has size at most $\sum_{j\ge 1} n_j$; hence the choice of $t$ forces two additional vertices into $H$.
% \todo{[Check] It may need additional care; it may be possible that, only $G_1$ uses the modulator}

To control adjacencies to $v_i^1$ and $v_i^2$, we encode each clique $C\subseteq G_i-X_i$ by its signature
\[
\sigma_i(C)=(a,b,c,d),
\]
where
\[
\begin{aligned}
a&=|C\cap (N_{G_i}(v_i^1)\cap N_{G_i}(v_i^2))|, &
b&=|C\cap (N_{G_i}(v_i^1)\setminus N_{G_i}(v_i^2))|,\\
c&=|C\cap (N_{G_i}(v_i^2)\setminus N_{G_i}(v_i^1))|, &
d&=|C\setminus (N_{G_i}(v_i^1)\cup N_{G_i}(v_i^2))|.
\end{aligned}
\]
The clique size is $\|\sigma_i(C)\|_1=a+b+c+d$. Let $\mathcal{V}_i$ be the multiset of signatures of all components of $G_i-X_i$.
Specifying $\mathcal{V}_i$ together with the edges from $\{v_i^1,v_i^2\}$ determined by each signature fixes $G_i$ up to isomorphism for our purposes.
In the reduction, we prescribe $\mathcal{V}_1,\mathcal{V}_2,\mathcal{V}_3$ so that a common induced subgraph of size $t$ exists if and only if $\Phi$ is satisfiable.

We reduce from the bounded-occurrence variant \textsc{(3,B2)-SAT}: every clause has exactly 3 literals and each variable appears exactly twice positively and twice negatively.
This variant is NP-complete \cite{DBLP:journals/eccc/ECCC-TR03-049}.
Let $\Phi$ be such an instance with variables $x_1,\dots,x_n$ and clauses $C_1,\dots,C_m$.
Counting literal occurrences yields $4n=3m$, hence $n$ is a multiple of $3$ and
\(
m = \tfrac{4n}{3}.
\)

\begin{figure}
    \centering
    \begin{tikzpicture}[scale=0.13, every node/.style={font=\small}, >={Latex[length=4mm,width=3mm]}]
    \draw[thick,->] (-1,0) -- (37,0) node[below right] {$a$};
    \draw[thick,->] (0,-1) -- (0,37) node[above left] {$b$};
    \draw[step=10,help lines] (-0.5,-0.5) grid (33,33);

    \tikzset{
        V1/.style={draw=black,fill=black,inner sep=2pt, minimum size=5pt},
        V2/.style={draw=red!75,fill=red!50,rectangle,inner sep=2pt, minimum size=5pt},
        V3/.style={draw=blue!75,fill=blue!50,rectangle,inner sep=2pt, minimum size=5pt}
    }

    \node[V2,label={[red]above right:$b_{i,0}$}] at ($(0,0)+(0.2,0.2)$) {};
    \node[V2,label={[red]above right:$b_{i,1}$}] at (10,30) {};
    \node[V2,label={[red]above left:$b_{i,2}$}] at (30,10) {};

    \node[V3,label={[blue]right:$c_{i,0}$}] at ($(0,30)+(0.2,0.2)$) {};
    \node[V3,label={[blue]right:$c_{i,1}$}] at (20,20) {};
    \node[V3,label={[blue]above left:$c_{i,2}$}] at ($(30,0)+(0.2,0.2)$) {};

    \node[V1,label={below left:$a_{i,0}$}] at ($(0,0)-(0.2,0.2)$) {};
    \node[V1,label={above left:$a_{i,1}$}] at ($(0,30)-(0.2,0.2)$) {};
    \node[V1,label={above left:$a_{i,2}$}] at (10,20) {};
    \node[V1,label={above left:$a_{i,3}$}] at (20,10) {};
    \node[V1,label={below right:$a_{i,4}$}] at ($(30,0)-(0.2,0.2)$) {};

    \end{tikzpicture}
    \caption{Variable gadget positions in the $(x,y)$-plane. Only the first two coordinates are shown; the last two coordinates are $N^2$ (or $(p_{i,t}N,(n-p_{i,t})N)$), and the shift $v_i$ applies in all four dimensions.}
    \label{fig:variable-gadget}
\end{figure}

\versionparagraph{Construction}
Fix a scaling parameter $N := 1000n$. We first create a unique \emph{anchor} clique that pins which of the two modulator vertices is $v_i^1$ in all three graphs:
In each $G_i$, add a clique on $N^4$ vertices, where $N^4-10$ vertices $u_i^1, \dots, u_i^{N^4-10}$ are adjacent to $v_i^1$ and nonadjacent to $v_i^2$, and the remaining ten vertices $w_i^1, \dots, w_i^{10}$ are nonadjacent to $v_i^1$ and adjacent to $v_i^2$.
In signature notation this corresponds to the single vector $(0,\,N^4-10,\,10,\,0)\in\mathcal{V}_i$.

\versionparagraph{Variable gadgets}
For each variable $x_i$ ($i\in[n]$), suppose that $x_i$ appears positively in clauses $C_{p_{i,1}}$ and $C_{p_{i,2}}$ and negatively in clauses $C_{q_{i,1}}$ and $C_{q_{i,2}}$.
Define the \emph{offset} 
\[
v_i \ :=\ (iN^2,\ (n-i)N^2,\ 0,\ 0).
\]
This separates the $i$-th variable block from all others in the first two coordinates.

Add to $\mathcal{V}_1$ the five vectors
\[
\begin{aligned}
a_{i,0}&=(0,\ 0,\ N^2,\ N^2)+v_i,\\
a_{i,1}&=(0,\ 3,\ p_{i,1}N,\ (m-p_{i,1})N)+v_i,\\
a_{i,2}&=(1,\ 2,\ q_{i,1}N,\ (m-q_{i,1})N)+v_i,\\
a_{i,3}&=(2,\ 1,\ p_{i,2}N,\ (m-p_{i,2})N)+v_i,\\
a_{i,4}&=(3,\ 0,\ q_{i,2}N,\ (m-q_{i,2})N)+v_i.
\end{aligned}
\]
Add to $\mathcal{V}_2$ the three vectors
\[
\begin{aligned}
b_{i,0}&=(0,\ 0,\ N^2,\ N^2)+v_i,\\
b_{i,1}&=(1,\ 3,\ mN,\ mN)+v_i,\\
b_{i,2}&=(3,\ 1,\ mN,\ mN)+v_i.
\end{aligned}
\]
Add to $\mathcal{V}_3$ the three vectors
\[
\begin{aligned}
c_{i,0}&=(0,\ 3,\ N^2,\ N^2)+v_i,\\
c_{i,1}&=(2,\ 2,\ mN,\ mN)+v_i,\\
c_{i,2}&=(3,\ 0,\ N^2,\ N^2)+v_i.
\end{aligned}
\]
See \Cref{fig:variable-gadget} for the $(a,b)$-plane placement (only the first two coordinates are shown after adding $v_i$).

\versionparagraph{Clause gadgets}
For each $j\in[m]$, add to $\mathcal{V}_2$ the vector
\[
b_j := (0,\ 0,\ jN,\ (m-j)N),
\]
and to $\mathcal{V}_3$ the vector
\[
c_j := (0,\ 0,\ jN,\ (m-j)N).
\]
This concludes the construction.

For each $\mathcal{V}_i$, there is a vector with the $\ell_1$-norm of $N^4$.
In $\mathcal{V}_1$, there are $n$ vectors whose $\ell_{1}$-norm is $(n+2)N^2$, and $4n$ vectors have $\ell_1$-norm $nN^2 + mN + 3$.
In $\mathcal{V}_2$, $\ell_1$-norms are:
$(n + 2)N^2$ for $n$ vectors, 
$nN^2 + 2mN + 4$ for $2n$ vectors,
 and $mN$ for $m$ vectors.
In $\mathcal{V}_3$, $\ell_1$-norms are:
$(n+2)N^2 + 3$ for $2n$ vectors, $nN^2 + 2mN + 4$ for $n$ vectors and $mN$ for $m$ vectors.
So, 
\[
  n^j = \begin{cases}
    N^4 & \text{ if $j = 0$} \\
    (n + 2)N^2 &  \text{ if $1 \le j \le n$} \\
    nN^2 + mN + 3 & \text{ if $n + 1 \le j \le 3n$} \\
    mN & \text{ if $3n + 1 \le 3n + m$}
  \end{cases} \quad\text{ and }
\]
\[
t \;=\; N^4 \;+\; n\cdot\bigl((n+2)N^2\bigr) \;+\; 2n\cdot\bigl(nN^2 + mN + 3\bigr) \;+\; m\cdot(mN) \;+\; 2.
\]

\versionparagraph{Correctness}
We now prove correctness.
We first establish the correspondence among the $v_i^1$'s and among the $v_i^2$'s for the converse direction, which allows us to continue the proof with signature vectors.

\subparagraph*{Correspondence.}
% Suppose that $H$ is a common induced subgraph on $t$ vertices with induced embeddings into $G_i$ for all $i\in[3]$.
% Since $\cvd$ is induced-subgraph-monotone and $\cvd(G_i)=2$, we have $\cvd(H)\le 2$; let $X$ be a minimum cluster vertex deletion set of $H$.
% By \Cref{lem:3mcis-cluster} and the choice of $t$, any common induced subgraph contained entirely in the cluster parts has size at most $t-2$, hence $|X|=2$.
% Applying \Cref{lem:mcis-bound} to the pairs $(H,G_i)$ and deleting $X$ and $X_i$, we get $\mathsf{mcis}(H-X,G_i-X_i)\ge t-4$, so $H-X$ contains a clique component $C^\star$ of size at least $N^4-4$.
% This forces $C^\star$ to embed into the unique anchor clique of each $G_i$, since all other cliques are much smaller.
% If one of the two vertices of $X$ were mapped to $v_i^1$ (resp.\ $v_i^2$) while the other were not mapped to $v_i^2$ (resp.\ $v_i^1$), then any common induced subgraph containing that vertex would have to exclude all $w_i^{j'}$ (resp.\ all $u_i^{j}$) to avoid the induced $P_3$'s $(v_i^1,u_i^j,w_i^{j'})$ or $(v_i^2,w_i^{j'},u_i^{j})$, losing at least $10$ vertices from the anchor clique.
% This contradicts the bound $|C^\star|\ge N^4-4$.
% Therefore $X$ is fully used and its two vertices must map to $v_i^1$ and $v_i^2$ in each $G_i$, yielding the desired correspondence among all $v_i^1$'s and among all $v_i^2$'s.

% \begin{comment}
Suppose that $H$ is a common induced subgraph on $t$ vertices with induced embeddings into $G_i$ for all $i\in[3]$.
Since $\cvd$ is induced-subgraph-monotone and $\cvd(G_i)=2$, we have $\cvd(H)\le 2$; let $X$ be a minimum cluster vertex deletion set of $H$.
Applying \Cref{lem:mcis-bound} to the pairs $(H,G_i)$ and deleting $X$ and $X_i$, we obtain
\[
  \mathsf{mcis}(H-X,G_i-X_i)\ge \mathsf{mcis}(H,G_i) - |X| - |X_i| = t-4.
\]
In particular, $H-X$ contains a clique component $C$ of size at least $N^4-4$.
Let $C^{\star}$ be a maximal clique of $H$ containing $C$.
Any vertex of $C^{\star}\setminus C$ is adjacent to all vertices of $C$, so under any embedding into $G_i$ it must map to a vertex adjacent to all of the image of $C$; thus $C^{\star}$ embeds into the unique anchor clique of each $G_i$, since no vertex outside the anchor clique has that many neighbors inside it.

Let $N_H(C^{\star})$ denote the set of vertices of $H$ adjacent to some vertex of $C^{\star}$.
In $G_i$, the only vertices adjacent to the anchor clique are $v_i^1$ and $v_i^2$, so every vertex of $N_H(C^{\star})$ maps into $X_i$.
Conversely, because $|C^{\star}|\ge N^4-4$, its image contains at least $6$ of the $10$ vertices $w_i^1,\dots,w_i^{10}$, hence both $v_i^1$ and $v_i^2$ are adjacent to $C^{\star}$; thus every vertex mapped to $X_i$ lies in $N_H(C^{\star})$.
%Therefore the two vertices of $X$ are exactly the preimages of $v_i^1$ and $v_i^2$.
Therefore, the vertices of $N_H(C^{\star})$ are exactly the preimages of $v_i^1$ and $v_i^2$.

%Moreover, one of these vertices is adjacent to at least $N^4-14$ vertices of $C^{\star}$ (corresponding to $v_i^1$), while the other is adjacent to at most $10$ vertices (corresponding to $v_i^2$).
Moreover, each vertex in $N_H(C^{\star})$ is either adjacent to at least $N^4-14$ vertices of $C^{\star}$ (corresponding to $v_i^1$), or adjacent to at most $10$ vertices (corresponding to $v_i^2$).
This distinguishes the two and forces a consistent correspondence among all $v_i^1$'s and among all $v_i^2$'s.
In particular, removing $v_i^1$ and $v_i^2$ from each $G_i$ decreases the $\mathsf{mcis}$ value by at most two; this ensures the image of $N_H(C^{\star})$ is exactly $\{v_i^1,v_i^2\}$.

% \end{comment}

\subparagraph*{Correctness via signature vectors.}
With the correspondence established, we now turn to the signature-vector analysis.
When a clique of $H$ is mapped to $C_i$ in $G_i$ for all $i\in\{1,2,3\}$, we say that $(C_1,C_2,C_3)$ is a \emph{matched triple}. If $\sigma_i(C_i)=(a_i,b_i,c_i,d_i)$, its \emph{contribution} is
\[
\mathrm{contr}(C_1,C_2,C_3)
:=\min\{a_1,a_2,a_3\}+\min\{b_1,b_2,b_3\}+\min\{c_1,c_2,c_3\}+\min\{d_1,d_2,d_3\}.
\]
We call a matched triple \emph{full} if $\mathrm{contr}(C_1,C_2,C_3)=\min\{|C_1|,|C_2|,|C_3|\}$.

With this notation, the constructed instance is a yes-instance if and only if there exists a family $\mathcal{M}$ of mutually disjoint matched triples such that: (i) exactly $n$ triples in $\mathcal{M}$ have full contribution $(n+2)N^2$ (one per variable, the $a_{i,0}$-type); (ii) exactly $2n$ triples in $\mathcal{M}$ have full contribution $nN^2+mN+3$ (the two occurrence-types per variable, from $a_{i,1}, a_{i,2}, a_{i,3}, a_{i,4}$); and (iii) for each $j\in[m]$ there is exactly one triple in $\mathcal{M}$ that contains $b_j$ and $c_j$, whose (full) contribution is $mN$. 

\medskip
($\Rightarrow$) Fix a satisfying assignment of $\Phi$.
For each variable $x_i$ include the triple $(a_{i,0},\,b_{i,0},\,c_{i,0})$ if $x_i$ is true and 
$(a_{i,0},\,b_{i,0},\,c_{i,2})$ if $x_i$ is false.
In either case, it is full with contribution $(n+2)N^2$.

If $x_i$ is true, also match $(a_{i,2},\,b_{i,1},\,c_{i,1})$ and $(a_{i,4},\,b_{i,2},\,c_{i,2})$, leaving $a_{i,1}$ and $a_{i,3}$ unused.
If $x_i$ is false, also match $(a_{i,1},\,b_{i,1},\,c_{i,0})$ and $(a_{i,3},\,b_{i,2},\,c_{i,1})$, leaving $a_{i,2}$ and $a_{i,4}$ unused.
In all four cases the contribution equals $nN^2+mN+3$: the first two coordinates contribute $nN^2+3$ (the minima of the offset-shifted constants), and the last two contribute $mN$.
Thus we obtain $3n$ full triples.

For each clause $C_j$ choose a literal that is true under the assignment. If the chosen literal is the $r$-th positive occurrence of $x_i$ (so $p_{i,r}=j$), match $(a_{i,2r-1},\,b_j,\,c_j)$; if it is the $s$-th negative occurrence (so $q_{i,s}=j$), match $(a_{i,2s},\,b_j,\,c_j)$. By the variable step the $a_{i,\cdot}$ used here was left unused, so no conflicts arise. Moreover,
the triplet is full with contribution $mN$,
since $b_j$ and $c_j$ have $(a,b)$-coordinates $(0,0)$ and $(c,d)$-coordinates $(jN,(m-j)N)$, giving coordinate-wise minima $jN$ and $(m-j)N$ with the corresponding $a_{i,\cdot}$. 

\medskip
($\Leftarrow$)
% \todo[inline]{[PROOF] Justify rigorously that $x_1^i$ must be mapped to $x_2^i$ in any maximum matching of triples, removing the current heuristic assumption.}
Assume there is a family $\mathcal{M}$ of pairwise-disjoint matched triples with:
(i) exactly $n$ triples of contribution $(n+2)N^2$,
(ii) exactly $2n$ triples of contribution $nN^2+mN+3$, and
(iii) for each $j\in[m]$, exactly one triple containing $b_j$ and $c_j$ of contribution $mN$.
We build a satisfying assignment for $\Phi$.

Fix $i\in[n]$. A triple that contains $a_{i,0}$ contributes $(n+2)N^2$ if and only if it is
\[
(a_{i,0},\,b_{i,0},\,c_{i,0}) \quad\text{or}\quad (a_{i,0},\,b_{i,0},\,c_{i,2}).
\]
Indeed, with $b_{i,0}$ and  $c\in\{c_{i,0},c_{i,2}\}$, the first two coordinate-wise minima are $(iN^2,(n-i)N^2)$, summing to $nN^2$, and with the last two minima are $(N^2,N^2)$, summing to $2N^2$, for a total of $(n+2)N^2$. Using $b_{i,1}$, $b_{i,2}$, or $c_{i,1}$ instead would force the last two minima to be $(mN,mN)$, yielding a total contribution smaller than $(n+2)N^2$.

Define the assignment by
\[
x_i=\text{true}\ \text{if}\ (a_{i,0},b_{i,0},c_{i,0})\in\mathcal{M},
\qquad
x_i=\text{false}\ \text{if}\ (a_{i,0},b_{i,0},c_{i,2})\in\mathcal{M}.
\]
We show that this assignment satisfies each clause.

Now consider triples in $\mathcal{M}$ that use $a\in\{a_{i,1},a_{i,2},a_{i,3},a_{i,4}\}$ and have contribution $nN^2+mN+3$ (hence are full) inside the $i$-block.
A direct check of the $(a,b)$-constants shows the only possibilities are
\[
(a_{i,1},b_{i,1},c_{i,0}),\quad
(a_{i,2},b_{i,1},c_{i,1}),\quad
(a_{i,3},b_{i,2},c_{i,1}),\quad
(a_{i,4},b_{i,2},c_{i,2}),
\]
since in each of these the minima in the first two coordinates sum to $nN^2+3$ while those in the last two sum to $mN$, and no other combination attains this total.
Because triples are disjoint, at most one triple using $b_{i,1}$, at most one using $b_{i,2}$, and at most one using $c_{i,1}$ can appear.
As $\mathcal{M}$ contains exactly two full triples of contribution $nN^2 + mN + 3$ in the $i$-block, it must include one of the following pairs:
\begin{align*}
  \{ (a_{i,1},b_{i,1},c_{i,0}),\ (a_{i,3},b_{i,2},c_{i,1}) \} \subseteq \mathcal M\ \text{or}\  
  \{ (a_{i,2},b_{i,1},c_{i,1}),\ (a_{i,4},b_{i,2},c_{i,2}) \} \subseteq \mathcal M.
\end{align*}
% Consequently, the used $a$-indices are either $\{1,3\}$ or $\{2,4\}$.

This choice is determined by which $c_{i,\cdot}$ was used together with $a_{i,0}$ earlier: if the $a_{i,0}$-triple uses $c_{i,0}$ (so $x_i$ is true), then 
$\{ (a_{i,2},b_{i,1},c_{i,1}),\ (a_{i,4},b_{i,2},c_{i,2}) \} \subseteq \mathcal M$, leaving $a_{i,1},a_{i,3}$ unused;
if it uses $c_{i,2}$ (so $x_i$ is false), then $\{ (a_{i,1},b_{i,1},c_{i,0}),\ (a_{i,3},b_{i,2},c_{i,1}) \} \subseteq \mathcal M$ is unavailable, leaving $a_{i,2},a_{i,4}$ unused.

For each $j\in[m]$, $\mathcal{M}$ contains a unique triple that uses $b_j$ and $c_j$.
Since $b_j$ and $c_j$ have $(a,b)$-coordinates $(0,0)$, the first two coordinates contribute $0$, and the last two must sum to $mN$;
hence the chosen $a$ must have $(c,d)$-coordinates $(jN,(m-j)N)$, i.e., it corresponds to a literal that actually appears in $C_j$.
Therefore, $C_j$ is satisfied, and $\Phi$ is satisfiable.
\end{proof}

\subsection{Parameterizing by Vertex Cover Number} \label{ssec:3mcis-vc}

We show that \prob{$\ell$-MCIS} is FPT when parameterized by vertex cover number.
This stands in contrast to the para-NP-hardness by cluster deletion number (\Cref{thm:tmcislb}).

\tmcisvc*
\begin{proof}
We show there are at most $2^{O(k^2)}$ possible common subgraphs (up to isomorphism).
For each candidate $H$, check whether it is an induced subgraph of every $G_i$; this is an ISI instance and can be decided in $O^*(k^{O(k)})$ time for $\vc(G_i)\le k$ using the vertex-cover parameterization algorithm from Abu-Khzam \cite{DBLP:journals/ipl/Abu-Khzam14}.

Without loss of generality, we assume $G_1$ has the smallest number of vertices among input graphs and let $n_1:=|V(G_1)|$.
Then, each $G_i$ has an independent set of size $n_1-k$; the optimal solution is obtained by removing at most $k$ vertices from $G_1$.
Since $\vc(G_1)\leq k$, vertices of $G_1$ is classified into $2^{O(k)}$ classes with the same neighborhood, and thus, $G_1$ has at most $(2^{O(k)})^{k}=2^{O(k^2)}$ different induced subgraphs of size at least $n_1-k$ up to isomorphism.
\end{proof}

}

\section{Conclusions}

We study \textsc{ISI} and \textsc{MCIS} parameterized by $k=\cvd(G_1)+\cvd(G_2)$.
Our main algorithmic result gives a randomized $O^*(k^{O(k)})$-time algorithm for \textsc{ISI} (\Cref{thm:isi-fpt}), and this matches the known ETH lower bound.
For \textsc{MCIS} we give a randomized $O^*(2^{O(k^2)})$-time algorithm, resolving the open question of Hanaka et al.~\cite{HanakaOOV25}, together with a matching ETH lower bound via \textsc{LCBM} (\Cref{thm:mcis-fpt,thm:lcbmlb,thm:mcislb}).
These results slightly separate \textsc{ISI} and \textsc{MCIS} under the cluster vertex deletion number parameterization; by contrast, under vertex cover number both admit $O^*(k^{O(k)})$-time algorithms.

In contrast to \textsc{ISI}, the \textsc{MCIS} algorithm must guess the interaction between the two modulators (equivalently, the structure of $H[J_1\cup J_2]$) and uses a weighted multicolored matching reduction; the tight ETH lower bound suggests that these additional guesses and the $k^2$ exponent are unavoidable.
A natural question is whether \textsc{ISI} and \textsc{MCIS} admit deterministic FPT algorithms. One route is to derandomize the matching subroutine (which however seems difficult): by \Cref{prop:wemmtoem} this would also yield deterministic FPT for \textsc{Exact Matching}.
Another open direction is to resolve the complexity of \textsc{3-MCIS} when all input graphs have cluster deletion number~1.

\bibliography{references}

\end{document}